\setlist[enumerate]{leftmargin=*}
\setlist[itemize]{leftmargin=*}
\newtheorem{theorem}{Theorem}
\newtheorem{lemma}{Lemma}
\newtheorem{definition}{Definition}
\title{Spatial dynamics formulation of magnetohydrostatics}
\author[1,*]{J. W. Burby}
\author[2,**]{M. H. Updike}
\affil[1]{Los Alamos National Laboratory, Theoretical Division, Los Alamos, NM 87545, USA}
\affil[2]{University of Texas at Austin, Austin, TX 78712, USA}
\affil[*]{jburby@lanl.gov}
\affil[**]{michaelupdike@utexas.edu}
\begin{abstract}
We present a formalism for importing techniques from dynamical systems theory into the study of three-dimensional static ideal magnetohydrodynamic (MHD) equilibria in toroidal domains. By treating some choice of toroidal angle as time, we reformulate the force balance and divergence equations as a system of hydrodynamic equations on the unit disc. We show these spatial dynamics equations satisfy a variational principle and comprise a Lie-Poisson Hamiltonian system on the dual to a certain infinite-dimensional Lie algebra. We use the variational principle to find Noether conservation laws, including a general circulation conservation law, a vorticity advection law when pressure vanishes, and an energy conservation law in axisymmetric domains. Combining the Lie-Poisson structure with a construction due to Scovel-Weinstein, we develop a theory of smoothed particle magnetohydrostatics (SPMHS). SPMHS identifies a large class of exact particle-swarm solutions of regularized spatial dynamics equations. The regularization occurs at finer scales than the smallest scales within the physical purview of ideal MHD. Crucially, the SPMHS equations of motion comprise a finite-dimensional Hamiltonian system, which side steps perennial roadblocks to a satisfactory theory of three-dimensional equilibria. In large-aspect-ratio domains, we show that the spatial dynamics equations comprise a fast-slow system, where fast dynamics corresponds to the elliptic part of the equilibrium equations and slow dynamics corresponds to the hyperbolic part. Because the fast dynamics is formally normally hyperbolic, Fenichel theory suggests the presence of an exact slow manifold for the spatial dynamics equations that contains all physical solutions. This is analogous to Spohn's resolution of the paradoxical nature of the Lorentz-Abraham-Dirac equation for a single electron experiencing its self force. We formally carry out reduction to the slow manifold, obtaining a spatial dynamics formulation for equilibrium Strauss reduced MHD at leading order and a hierarchy of corrections at any desired order in perturbation theory. In large aspect-ratio domains, finding periodic solutions of the slow manifold reduced spatial dynamics equations represents a second novel pathway to finding three-dimensional equilibria.
\end{abstract}
\begin{document}

\flushbottom
\maketitle
%
%
\thispagestyle{empty}






\section*{Introduction}
Toroidal magnetic confinement fusion (MCF) uses strong magnets to keep hot burning plasma inside of a compact spatial region $Q$ topologically equivalent to a donut. Two main approaches dominate the current research landscape devoted to making toroidal MCF power reactors an economic reality: the tokamak and the stellarator. Tokamaks entail a region $Q$ that is symmetric under arbitrary rotations about a fixed axis, while stellarator $Q$'s exhibit great geometric complexity. In either case, the magnetic field $\bm{B}$ inside of the toroidal chamber $Q$ satisfies the equations of magnetohydrostatics (MHS)
\begin{align}
(\nabla\times\bm{B})\times\bm{B} = \nabla p ,\quad \nabla\cdot \bm{B}=0,\quad \bm{B}\cdot \bm{n}=0\text{ on }\partial Q,\label{mhs}
\end{align}
under standard quiescent operating conditions. Here $p$ denotes the plasma pressure and $\bm{n}$ is the outward-pointing unit normal along $\partial Q$. Much of the current understanding of desireable toroidal MCF reactor concepts relies on theory of the partial differential equations (PDE) \eqref{mhs}, and their implications for particle confinement and plasma stability. While MHS magnetic fields in tokamaks are fairly well understood, stellarator MHS fields still present many theoretical and practical challenges.

The solutions of \eqref{mhs} relevant to tokamaks admit a continuous rotation symmetry. Standard manipulations show that for such symmetric solutions Eqs.\,\eqref{mhs} simplify to a single elliptic PDE for an unknown stream function $\psi$. The close relationship between the PDE for $\psi$, known as the Grad-Shafranov equation, and Poisson's equation leads to a fairly straightforward theory of symmetric MHS solutions. 

Solutions of \eqref{mhs} relevant to stellarators generally do not admit any continuous rigid symmetry. Instead they admit hidden volume-preserving symmetries\cite{BKM_2020a}.  These hidden symmetries ensure the existence of a stream function $\psi$ that satisfies a generalization of the Grad-Shafranov equation. However, this generalized Grad-Shafranov equation is only a necessary condition for equilibrium, in stark contrast to the case of rigid symmetries, where the Grad-Shafranov equation is both necessary and sufficient for equilibrium. 

The theory of non-rigidly-symmetric MHS solutions is still quite limited, in spite of the considerable importance of that theory to MCF. The simplest solutions are probably the vacuum fields, characterized by $\nabla \times\bm{B} = 0$ and $p=\text{const}$. The existence and uniqueness theory is straightforward in this case because the system $\nabla\cdot \bm{B} = 0$, $\nabla\times\bm{B} = 0$ is elliptic. However, vacuum fields in general possess poor particle confinement properties owing to lack of a non-trivial stream function $\psi$ that is constant along $\bm{B}$-lines. The rigorous theory of vacuum fields with stream functions is undeveloped. The next simplest solutions are the so-called Beltrami fields, which satisfy $\nabla\times\bm{B} = \lambda\,\bm{B}$ for some constant $\lambda$. Many rigorous techniques from the theory of elliptic partial differential equations apply in the Beltrami case, but again there is no guarantee of existence of a stream function, and particle confinement properties are generically poor. Moving away from special classes of solutions that are really governed by linear partial differential equations, the earliest rigorous result is due to Lortz\cite{Lortz_1970,Weitzner_2020}, who proved existence of smooth MHS solutions with non-constant pressure in arbitrary toroidal domains $Q$ that are symmetric under reflections through a plane. The Lortz solutions are remarkable because they do not in general admit a continuous family of rigid symmetries. But all of these solutions possess a serious practical shortcoming: they lack so-called rotational transform, which is the tendancy of $\bm{B}$-lines to wind helically around a central elliptic periodic orbit for $\bm{B}$. Without rotational transform, particle confinement is known to be poor. Then comes the three-dimensional solutions with non-constant pressure\cite{BL_1996} due to Bruno and Laurence. These non-smooth solutions are Beltrami in each of a nested sequence of toroidal annuli. The Beltrami constant $\lambda$ differs in each region, and non-trivial interface conditions determine the jump discontinuities in $\bm{B}$ across the toroidal interfaces, thereby making a strong connection with the method underlying P. Garabedian's NSTAB\cite{Garabedian_2008} equilibrium and stability code. The richness and mathematical rigor underlying Bruno-Laurence solutions, independently studied by Hole, Hudson, and Dewar\cite{Hole_2007}, motivated the development of a novel practical approach to computation of MHS solutions implemented in SPEC\cite{Hudson_2012}. But lack of smoothness introduces complications in the foundations of MCF theory that still need to be adequately addressed. Moreover, the rigorous Bruno-Laurence theory requires small deviations from axisymmetry\cite{Qu_2021} -- an unfortunate limitation in the stellarator context. Finally, and most recently, Constantin-Pasqualotto\cite{Constantin_2023} constructed topologically non-trivial smooth solutions in arbitrary domains with non-constant pressure. These solutions are new, and still require further analysis to determine their relevance to MCF. For example, Constantin-Pasqualotto do not rule out the possibility that the obtained solutions comprise a core non-vacuum region that is axisymmetric surrounded by a non-axisymmetric vacuum region. Overall, the theory of smooth, three-dimensional MHS solutions with non-constant pressure is largely underdeveloped. New methods need to be developed even to settle the question of existence of such solutions. Such methods will likely inspire new computational methods as well.

This Article introduces an unexplored approach to studying the MHS equations known as \textbf{spatial dynamics}\cite{Groves_Schneider_2001,Groves_Schneider_2005,Groves_Schneider_2008,Schneider_Uecker_2017}. The basic idea behind spatial dynamics is to reformulate a given PDE system as a system of evolutionary PDEs by identifying some spatial coordinate with ``time". After reformulating a PDE as a spatial dynamical system, techniques from dynamical systems theory become available for analyzing solutions of the original equations. Recent advances along these lines in the context of elliptic PDE are descrbed by M. Beck\cite{Beck_2020}. 

Let $(\bm{B},p)$ be a solution of the MHS equations in a domain $Q$ diffeomorphic to a solid torus, $Q\approx S^1\times D\ni(\zeta,x,y)$. Here $D\subset \mathbb{R}^2$ denotes the standard unit disc and $S^1 = \mathbb{R}\text{ mod }2\pi$. Assume that the normal component of $\bm{B}$ along the boundary $\partial Q$ vanishes and that the toroidal component of the magnetic field $ B^\zeta =\bm{B}\cdot \nabla\zeta$ is nowhere vanishing. I will show formally that such $(\bm{B},p)$ arise as $2\pi$-periodic orbits for a spatial dynamics reformulation of MHS with the following properties. The spatial dynamics equations:
\begin{enumerate}
\item[(1)] describe evolution of MHS solutions through ``time", where the toroidal angle $\zeta$ plays the role of time,
\item[(2)] take the form of a system of planar, time-dependent hydrodynamic equations,
\item[(3)] comprise a Lie-Poisson Hamiltonian system on the dual to a certain infinite-dimensional Lie algebra.
\end{enumerate}

As a consequence of its curious mathematical structure, the spatial dynamics formulation of MHS leads to at least two new ways to study three-dimensional equilibria. First, in conjunction with the Lie-algebraic theory due to Scovel-Weinstein\cite{SW_1994}, the Lie-Poisson formulation of spatial dynamics leads naturally to a variant of the smoothed-particle hydrodynamics\cite{Monaghan_1992} method (SPH) for MHS. We show that solutions of the smoothed-particle MHS (SPMHS) equations correspond to exact solutions of regularized spatial dynamics equations. Since the regularization happens at length scales beyond the purview of MHD, the regularized equations comprise a physically-viable substitute for the usual MHS equations. Thus, SPMHS enables the study of stellarator configurations by searching for periodic solutions of a finite-dimensional Hamiltonian system. Finite-dimensionality side-steps the traditional obstructions to a good existence theory for MHS that underlie Grad's famous conjecture\cite{Grad_1967}. (It is interesting to note that Monaghan's review\cite{Monaghan_1992} of SPH begins by highlighting SPH was invented to simulate non-axisymmetric astrophysical phenomena.) Second, for large-aspect-ratio domains, MHS spatial dynamics reveals that any solution of the MHS equations must lie on an infinite-dimensional slow manifold. The elliptic part of the MHS equations govern the fast dynamics normal to the slow manifold, while the hyperbolic part of the MHS equations governs the slower motion along the slow manifold\cite{Fenichel_1979,Lorenz_1986,Lorenz_1987,Lorenz_1992,MacKay_2004,Burby_Klotz_2020}. Formally reducing the spatial dynamics equations to the slow manifold ``solves" the elliptic part of the MHS problem, leaving simpler hyperbolic equations to analyze; at leading-order the slow dynamics recover the vorticity form of the planar Euler equations, or, equivalently, the equilibrium conditions for Strauss'\cite{Strauss_1975} reduced MHD. This observation represents an opportunity to make progress in stellarator equilibrium theory because an important part of the difficulty the MHS equations present is their mixed elliptic-hyperbolic type. 

The paper is organized as follows. Section \ref{sec:direct} derives the MHS spatial dynamics equations directly from the usual boundary-value formulation of MHS. This Section introduces much notation and terminology used in the rest of the Article, in particular with relation to toroidal coordinates. Section \ref{variational_formulation} demonstrates the the spatial dynamics equations from Section \ref{sec:direct} satisfy a variational principle, and Section \ref{cons_noether} combines that variational principle Noether's theorem to deduce several conservation properties of the spatial dynamics equations. The variational formulation suggests there should be a concommittant Hamiltonian formulation. Section \ref{sec:lie_poisson} shows that in fact the spatial dynamics equations comprise a Lie-Poisson system on the dual to a Lie algebra. The ``time" evolution perspective of MHS suggests considering spatial domains that exhibit multiple ``timescales." Section \ref{sec:fs_formulation} reveals that the spatial dynamics in a large-aspect-ratio domain comprise a fast-slow system for which Fenichel's geometric singular perturbation theory formally applies. Finally, Section \ref{sec:snakes} combines the Lie-Poisson structure identified in Section \ref{sec:lie_poisson} with a Lie algebra reduction technique due to Scovel-Weinstein to formulate the theory of smoothed-particle magnetohydrostatics. Section \ref{sec:discussion} contains a forward looking discussion on implications of the spatial dynamics approach to MHS.

Whenever possible, we present each of our key results using both standard index notation, where repeated indices imply summation, and differential forms. When working with differential forms, $\mathcal{L}_{\bm{w}}$ always denotes Lie differentiation along the vector field $\bm{w}$; $d$ denotes exterior differentiation; and $\iota_{\bm{w}}$ denotes interior product with the vector field $\bm{w}$. We reserve the symbol $\bm{n}$ for the outward-pointing unit normal vector along the boundary of a spatial domain.

\section{Direct derivation of MHS spatial dynamics\label{sec:direct}}
In order to reformulate the boundary value problem \eqref{mhs} as a time evolution problem, the first step is defining "time". Slice the toroidal spatial domain $Q$ into hypersurfaces, each diffeomorphic to the standard unit disc $D\subset \mathbb{R}^2$. These embedded discs define surfaces of constant time. The time variable itself is therefore an angular coordinate $\zeta:Q\rightarrow\mathbb{S}^1$ (here $S^1 = \mathbb{R}\text{ mod }2\pi$ denotes the circle) that is constant along each hypersurface. The embedded disc whose assigned time is $\zeta\in S^1$ is $D_\zeta$. In keeping with standard conventions from plasma physics, we will refer to $\zeta$ as the toroidal angle on $Q$.

The above notion of time is problematic. In dynamical systems theory, time extends forever in each direction. But the angular coordinate $\zeta$ takes values in the circle $S^1$, a compact set. Since spatial dynamics aims to bring tools from dynamical systems theory into the study of boundary value problems, this problem needs a resolution. Our solution draws from the theory of covering spaces.

The (universal) covering space $\widetilde{Q}$ for the spatial domain $Q$ is an infinitely-long (solid) cylinder together with a prescription for wrapping $\widetilde{Q}$ around $Q$. Formally, the wrapping prescription is a smooth map $\mathcal{P}:\widetilde{Q}\rightarrow Q$ with the following properties.
\begin{itemize}
\item $\mathcal{P}$ is surjective
\item For each $\widetilde{q}\in\widetilde{Q}$ the tangent map $T_{\widetilde{q}}\mathcal{P}:T_{\widetilde{q}}\widetilde{Q}\rightarrow T_{\mathcal{P}(\widetilde{q})}Q$ is a linear isomorphism
\end{itemize}
By the inverse function theorem, for each $\widetilde{q}\in \widetilde{Q}$ there is an open neighborhood $\mathcal{U}_{\widetilde{q}}$ containing $\widetilde{q}$ such that $\mathcal{P}\mid \mathcal{U}_{\widetilde{q}}$ is a diffeomorphism onto its image. Thus, locally, the covering space $\widetilde{Q}$ looks like $Q$. But globally $\widetilde{Q}$ "unwraps" the periodicity of the solid torus $Q$. Note in particular that $\widetilde{Q}$ is simply connected.

The covering space $\widetilde{Q}$ leads to an improved notion of time as follows. The differential $d\zeta$ defines an ordinary closed $1$-form on $Q$. The pullback $\mathcal{P}^*d\zeta$ is therefore a closed $1$-form on the covering space $\widetilde{Q}$. But every closed $1$-form on $\widetilde{Q}$ is exact because $\widetilde{Q}$ is simply connected. In other words, there is a smooth function $\widetilde{\zeta}:\widetilde{Q}\rightarrow \mathbb{R}$, defined modulo addition of constants, such that $d\widetilde{\zeta} = \mathcal{P}^*d\zeta$. Locally, $\widetilde{\zeta}$ behaves like $\zeta$; level sets of $\widetilde{\zeta}$ foliate covering space by hypersurfaces diffeomorphic to discs, and this foliation maps along $\mathcal{P}$ onto the hypersurface foliation of $Q$. Globally, values of $\widetilde{\zeta}$ extend forever in each direction. The toroidal coordinate $\widetilde{\zeta}$ on covering space defines "time" in the classical sense used in dynamical systems theory. From here on, the same symbol $\zeta$ will be used for both the toroidal angle on $Q$ and the toroidal angle on $\widetilde{Q}$.

The notion of time defined on the covering space is useful because study of the MHS boundary value problem \eqref{mhs} on $Q$ may be replaced with study of the MHS boundary value problem on $\widetilde{Q}$. This might sound strange as we have not yet defined an MHS boundary value problem on $\widetilde{Q}$. But since $\widetilde{Q}$ and $Q$ are locally the same space, and the MHS equations on $Q$ are PDEs, the MHS boundary value problem on $Q$ induces a corresponding boundary value problem on covering space in a unique way. The metric tensor $\widetilde{g}$ on $\widetilde{Q}$ is given by pulling back the metric tensor $g$ on $Q$ along $\mathcal{P}$, $\widetilde{g} = \mathcal{P}^*g$. The vector calculus operations, divergence, gradient, and curl, are then defined on $\widetilde{Q}$ in terms of $\widetilde{g}$ in the usual way (CITE). The natural MHS boundary value problem on $\widetilde{Q}$ is therefore
\begin{align}
(\nabla\times\bm{B})\times\bm{B} = \nabla p ,\quad \nabla\cdot \bm{B}=0,\quad \bm{B}\cdot \bm{n}=0\text{ on }\partial \widetilde{Q},\label{mhs*}
\end{align}
where $\bm{B}$ is a vector field on $\widetilde{Q}$, $p$ is a function on $\widetilde{Q}$, and the derivative operators are defined using $\widetilde{g}$. The following technical lemma establishes the precise relationship between solutions of the MHS boundary value problems on $Q$ and $\widetilde{Q}$. It justifies leaving the MHS boundary value problem on $Q$ behind, and embracing the MHS boundary value problem on $\widetilde{Q}$ going forward. Readers unfamiliar with deck transformations may skip this result without losing the narrative thread. We will re-express this result in more elementary terms after the proof of Theorem \ref{sdf}.

\begin{lemma}\label{deck_transformations_lemma}
Solutions of the MHS boundary value problem \eqref{mhs} on $Q$ are in one-to-one correspondence with solutions of the MHS boundary value problem \eqref{mhs*} on $\widetilde{Q}$ that are invariant under deck transformations.
\end{lemma}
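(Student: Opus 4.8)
The plan is to leverage that $\mathcal{P}:\widetilde{Q}\to Q$ is a local isometry and that the MHS system is assembled entirely from metric-natural operations, so that the asserted correspondence reduces to the classical bijection between deck-invariant tensor fields on a universal cover and tensor fields on the base. That $\mathcal{P}$ is a local isometry is immediate: the definition $\widetilde{g}=\mathcal{P}^*g$ together with the hypothesis that every tangent map $T_{\widetilde{q}}\mathcal{P}$ is a linear isomorphism means $\mathcal{P}$ restricts to an isometric diffeomorphism on each of the neighborhoods $\mathcal{U}_{\widetilde{q}}$ furnished by the inverse function theorem. Fixing compatible orientations so that $\mathcal{P}$ is orientation-preserving, I would record that $\mathcal{P}^*$ commutes with $d$ (naturality of the exterior derivative), with the musical isomorphism $\bm{B}\mapsto\bm{B}^\flat$ and the Hodge star $\star$ (both metric-and-orientation determined), and hence with gradient, divergence, and curl; it likewise intertwines the cross product and carries the outward unit normal along $\partial\widetilde{Q}=\mathcal{P}^{-1}(\partial Q)$ to that along $\partial Q$.

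Next I would set up the two maps explicitly. Rewriting \eqref{mhs} in the equivalent form-theoretic language $\iota_{\bm{B}}\,d\beta=dp$, $d\star\beta=0$, with $\beta=\bm{B}^\flat$ and the boundary condition expressed as vanishing of $\star\beta$ pulled back to $\partial Q$, makes naturality transparent because every constituent term pulls back termwise. Given a solution $(\bm{B},p)$ on $Q$, I define its lift by $\widetilde{p}=p\circ\mathcal{P}$ and $\widetilde{\bm{B}}(\widetilde{q})=(T_{\widetilde{q}}\mathcal{P})^{-1}\bm{B}(\mathcal{P}(\widetilde{q}))$, so that $\widetilde{\bm{B}}^\flat=\mathcal{P}^*\beta$; naturality then shows $(\widetilde{\bm{B}},\widetilde{p})$ solves \eqref{mhs*}. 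For the converse, I would use that the universal cover realizes $Q=\widetilde{Q}/\Gamma$ as the quotient by the deck group $\Gamma\cong\pi_1(Q)\cong\mathbb{Z}$, with $\mathcal{P}$ the quotient projection, so that any $\Gamma$-invariant field descends uniquely to a field on $Q$; naturality shows a deck-invariant MHS solution on $\widetilde{Q}$ descends to an MHS solution on $Q$.

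I would then check that lifts are automatically deck-invariant. For any deck transformation $\phi$ one has $\mathcal{P}\circ\phi=\mathcal{P}$, hence $T\mathcal{P}\circ T\phi=T\mathcal{P}$; substituting into the defining formulas gives $\phi^*\widetilde{p}=p\circ\mathcal{P}\circ\phi=\widetilde{p}$ and $T_{\widetilde{q}}\phi\,\widetilde{\bm{B}}(\widetilde{q})=\widetilde{\bm{B}}(\phi(\widetilde{q}))$, i.e. $\phi_*\widetilde{\bm{B}}=\widetilde{\bm{B}}$. Finally, the lift and descent maps are mutually inverse because they are precisely the standard inverse bijections between $\Gamma$-invariant fields on $\widetilde{Q}$ and fields on $Q=\widetilde{Q}/\Gamma$, and by the previous paragraphs each carries MHS solutions to MHS solutions.

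The main obstacle is not the covering-space bookkeeping, which is routine once $Q=\widetilde{Q}/\Gamma$ is in hand, but the careful verification that the \emph{entire} MHS boundary value problem — the nonlinear force-balance term, the divergence constraint, and the boundary condition together — is genuinely natural under $\mathcal{P}$. The delicate points are ensuring the Hodge star transforms without a spurious sign (which is why orientations must be fixed so that $\mathcal{P}$ preserves orientation) and confirming that the outward normal and the induced boundary data are preserved on $\partial\widetilde{Q}=\mathcal{P}^{-1}(\partial Q)$. Once naturality of every constituent operation is in place, the equivalence of the two boundary value problems, and hence the claimed one-to-one correspondence, follows immediately.
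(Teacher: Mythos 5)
Your proof is correct, and it fills a genuine gap rather than duplicating anything: the paper states Lemma \ref{deck_transformations_lemma} as a technical result and never proves it (readers are explicitly invited to skip it), so there is no in-paper argument to compare against. Your route --- recasting \eqref{mhs} as $\iota_{\bm{B}}d\bm{B}^\flat = dp$, $d\iota_{\bm{B}}\Omega = 0$ (equivalently $d\star\bm{B}^\flat = 0$), with the boundary condition as vanishing of the pullback of $\star\bm{B}^\flat$ to the boundary, then invoking naturality of $d$, $\flat$, and $\star$ under the orientation-preserving local isometry $\mathcal{P}$, and finally the standard bijection between deck-invariant fields on $\widetilde{Q}$ and fields on $Q = \widetilde{Q}/\Gamma$ --- is the standard covering-space argument, and it is consistent with the machinery the paper deploys elsewhere: the paper's own proof of Theorem \ref{sdf} uses exactly the same form-theoretic identities, and the discussion following that theorem re-expresses deck invariance concretely as $2\pi$-periodicity in $\zeta$ (deck transformations being $(\zeta,x,y)\mapsto(\zeta+2\pi k,x,y)$), which is the elementary face of your quotient description $Q=\widetilde{Q}/\Gamma$, $\Gamma\cong\pi_1(Q)\cong\mathbb{Z}$. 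Two details you handled correctly that are worth keeping explicit in any written-out version: (i) the orientation pulled back from $Q$ is automatically deck-invariant, since $\mathcal{P}\circ\phi=\mathcal{P}$ forces $\phi^*\mathcal{P}^*\Omega=\mathcal{P}^*\Omega$, so the Hodge star and cross product lift without sign ambiguity; and (ii) $\partial\widetilde{Q}=\mathcal{P}^{-1}(\partial Q)$ together with the fact that local isometries carry outward unit normals to outward unit normals, so the two boundary conditions correspond exactly rather than merely up to a proportionality.
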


By now we have defined the time variable $\zeta$ for a spatial dynamics formulation of MHS. Next we need to define the "space" over which our dynamical fields will be defined. The points in space at a given time clearly correspond to a level set of $\zeta$. But a proper dynamical systems formulation requires an enduring notion of space --- not a different space for each time. For this we introduce toroidal coordinates as follows.

\begin{definition}
A submanifold with boundary $Q\subset \mathbb{R}^3$ is a \textbf{toroidal domain} if there is an orientation-preserving diffeomorphism $\varphi:Q\rightarrow S^1\times D$, where $D\ni (x,y)$ denotes the standard unit disc in $\mathbb{R}^2$, $S^1=  \mathbb{R}\text{ mod }2\pi\ni \zeta$, and the orientation on $S^1\times D$ is determined by the volume form $d\zeta\wedge dx\wedge dy$. The component functions of the diffeomorphism will be denoted $\varphi = (\zeta,x,y)$. The diffeomorphism itself will be referred to as \textbf{toroidal coordinates}. The angular component $\zeta$ is the \textbf{toroidal angle}. If $\widetilde{Q}$ denotes the universal cover for a toroidal domain $Q$ then there is a natural orientation-preserving coordinate system (defined modulo deck transformations) $\widetilde{\varphi}:\widetilde{Q}\rightarrow \mathbb{R}\times D$. The components of $\widetilde{\varphi} = (\zeta,x,y)$ will be denoted using the same symbols representing $\varphi$. For $\zeta_1,\zeta_2$ real constants with $\zeta_1 \leq \zeta_2$, the symbol $\widetilde{Q}(\zeta_1,\zeta_2)$ denotes the \textbf{toroidal interval} $\widetilde{\varphi}^{-1}([\zeta_1,\zeta_2]\times D)$. Similarly $D_{\zeta_1} = \widetilde{\varphi}^{-1}( \{\zeta_1\}\times D)$ denotes a \textbf{toroidal slice}.
\end{definition}

Given a toroidal coordinate system $\widetilde{\varphi} = (\zeta,x,y)$ on $\widetilde{Q}$, each point $\widetilde{q}\in \widetilde{Q}$ is assigned a unique "time" $\zeta(\widetilde{q})\in\mathbb{R}$ and a unique point in "space" $(x(\widetilde{q}),y(\widetilde{q}))\in D$. In this manner, a scalar field on $\widetilde{Q}$, like pressure $p(\widetilde{q})$, inherits a new identity. Instead of a static function of $3$ variables, it becomes a dynamical function of $2$ variables: $p_\zeta(x,y) = p(\widetilde{\varphi}^{-1}(\zeta,x,y))$. In order to complete the spatial dynamics formulation of MHS, the magnetic field $\bm{B}$ requires an analogous dynamical interpretation, and the PDE system \eqref{mhs*} needs to be expressed in a way that distinguishes between space and time derivatives. These objectives may be achieved in a variety of ways. But we will proceed in a manner that efficiently encodes the geometry of the hypersurface foliation.

Hypersurface foliations figure prominently in the famous ADM formalism for general relativity (CITE). That formalism motivates introducing the following geometrical quantities.
\begin{definition}
Let $Q$ be a toroidal domain with universal cover $\widetilde{Q}$. Let $\varphi:Q\rightarrow S^1\times D$ and $\widetilde{\varphi}:\widetilde{Q}\rightarrow \mathbb{R}\times D$ denote the corresponding toroidal coordinate systems $(\zeta,x,y)$. Fix an arbitrary $\zeta\in\mathbb{R}\text{ or }S^1$. The \textbf{poloidal inclusion map} $I_\zeta:D\rightarrow Q$ or $I_\zeta:D\rightarrow \widetilde{Q}$ is given by $I_\zeta(x,y) = \varphi^{-1}(\zeta,x,y)$ or $I_\zeta(x,y) = \widetilde{\varphi}^{-1}(\zeta,x,y)$.
The \textbf{hypersurface metric} is the metric tensor $h_\zeta$ on $D$ given by $h_\zeta = I_\zeta^*\widetilde{g}$. The \textbf{shift form} $\mathcal{N}_\zeta$ is the $1$-form on $D$ given by $\mathcal{N}_\zeta = I_{\zeta}^*(\iota_{\partial_\zeta}\widetilde{g})$. The \textbf{shift vector} $\bm{N}_\zeta$ is the $h_\zeta$ metric sharp of the shift form, i.e. the unique vector field on $D$ that satisfies $\iota_{\bm{N}_\zeta}h_\zeta = \mathcal{N}_{\zeta}$. The \textbf{lapse function} $N_\zeta$ is the scalar field on $D$ given by $N_\zeta^2 = I_\zeta^*(\widetilde{g}(\partial_\zeta,\partial_\zeta)) - h_\zeta(\bm{N}_\zeta,\bm{N}_\zeta).$ The \textbf{hypersurface area element} $\omega_\zeta$ is the nowhere-vanishing $2$-form on $D$ given by $\omega_\zeta = I_\zeta^*(\iota_{\partial_\zeta}\Omega)$, where $\Omega$ denotes the Riemannian volume form on $\widetilde{Q}$. The \textbf{Riemannian hypersurface area element} is the nowhere-vanishing $2$-form $\sigma_\zeta$ on $D$ associated with the hypersurface metric $h_\zeta$: $\sigma_\zeta= \sqrt{\text{det}\,h_\zeta}\,dx\wedge dy$.
\end{definition}

The image of the poloidal inclusion map, $D_\zeta = I_\zeta(D)$, coincides with the embedded disc whose assigned time is $\zeta$. The map $I_\zeta$ itself therefore parameterizes the time-$\zeta$ embedded disc $D_\zeta$. As for any embedded surface, the disc $D_\zeta$ inherits notions of length and angle from the ambient Euclidean metric $\widetilde{g}$ on covering space. In other words, $D_\zeta$ possesses its own intrinsic Riemannian metric. The hypersurface metric $h_\zeta$ on $D$ simply expresses the metric on $D_\zeta$ using the parameterization provided by the poloidal inclusion map $I_\zeta$. The shift form $\mathcal{N}_\zeta$ encodes the mixed space-time components of the Euclidean metric $\widetilde{g}$ written in toroidal coordinates. The lapse function $N_\zeta$ gives the normal velocity of $D_\zeta$, regarded as a time-evolving surface. The hypersurface area element $\omega_\zeta$ represents the Lebesgue measure on covering space conditioned on the toroidal angle $\zeta$. This notion of area differs from the intrinsic, Riemannian notion of area on $D_\zeta$. In fact, $\omega_\zeta$ and $\sigma_\zeta$ are related by the lapse function according to $\omega_\zeta = N_\zeta\,\sigma_\zeta$.

The above geometrical objects attached to a hypersurface foliation suggest a convenient set of dependent variables for expressing the spatial dynamics formulation of MHS. We restrict attention to \textbf{toroidal magnetic fields} $\bm{B}$, characterized by the property $d\zeta(\bm{B}) > 0$. The pressure in a hypersurface is $p_\zeta = I_\zeta^*p$, a function on $D$. The toroidal flux is $\varrho_\zeta = I_\zeta^*(\iota_{\bm{B}}\Omega)$, a nowhere-vanishing $2$-form on $D$. The field-line velocity is the unique vector field $\bm{u}_\zeta$ on $D$ such that $\iota_{\bm{u}_\zeta}\rho_\zeta = - I_\zeta^*(\iota_{\partial_\zeta}\iota_{\bm{B}}\Omega)$. These three time-dependent quantities on $D$ completely encode $\bm{B}$ and $p$, as the following lemma shows.

\begin{lemma}\label{B_param_lemma}
Let $\bm{B}$ be a toroidal magnetic field, $p$ a pressure function, and $(\zeta,x,y)$ a system of toroidal coordinates. If $p_\zeta,\varrho_\zeta,\bm{u}_\zeta$ denote the corresponding hypersurface pressure, toroidal flux, and field line velocity then
\begin{align*}
\bm{B}(\zeta,x,y) = \frac{\varrho_\zeta(x,y)}{\omega_\zeta(x,y)}\bigg(u^x_\zeta(x,y)\,\partial_x + u^y_\zeta(x,y)\,\partial_y + \partial_\zeta\bigg),\quad p(\zeta,x,y)  = p_\zeta(x,y),
\end{align*}
where $\omega_\zeta$ denotes the hypersurface area form.

\end{lemma}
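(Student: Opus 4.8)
The plan is to reduce everything to a direct computation in the toroidal chart $(\zeta,x,y)$, exploiting the fact that a nowhere-vanishing $2$-form on the two-dimensional disc $D$ is nondegenerate. I would write the Riemannian volume form as $\Omega = J\,d\zeta\wedge dx\wedge dy$ with $J = \sqrt{\det\widetilde{g}} > 0$, and expand the magnetic field in the coordinate frame, $\bm{B} = B^\zeta\,\partial_\zeta + B^x\,\partial_x + B^y\,\partial_y$. The toroidal hypothesis $d\zeta(\bm{B}) > 0$ says precisely $B^\zeta > 0$, a fact I would use repeatedly to guarantee invertibility.

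First I would compute the toroidal flux and the area form. A short interior-product calculation gives
\[
\iota_{\bm{B}}\Omega = J\big(B^\zeta\,dx\wedge dy - B^x\,d\zeta\wedge dy + B^y\,d\zeta\wedge dx\big).
\]
Because $I_\zeta^* d\zeta = 0$ while $I_\zeta^* dx = dx$ and $I_\zeta^* dy = dy$, pulling back along the poloidal inclusion kills the mixed terms, leaving $\varrho_\zeta = (J B^\zeta)\,dx\wedge dy$, where $J$ and $B^\zeta$ are now read as functions of $(x,y)$ at the fixed value of $\zeta$. The same pullback applied to $\iota_{\partial_\zeta}\Omega = J\,dx\wedge dy$ yields $\omega_\zeta = J\,dx\wedge dy$. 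Dividing identifies the prefactor in the claimed formula: $\varrho_\zeta/\omega_\zeta = B^\zeta$. I would note in passing that $B^\zeta > 0$ is exactly what makes $\varrho_\zeta$ nowhere vanishing, as asserted in its definition.

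Next I would extract the field-line velocity from its defining relation. Applying $\iota_{\partial_\zeta}$ to the expression for $\iota_{\bm{B}}\Omega$ and pulling back gives
\[
-I_\zeta^*\big(\iota_{\partial_\zeta}\iota_{\bm{B}}\Omega\big) = J\big(B^x\,dy - B^y\,dx\big),
\]
while contracting $\bm{u}_\zeta = u^x_\zeta\,\partial_x + u^y_\zeta\,\partial_y$ into $\varrho_\zeta$ gives $\iota_{\bm{u}_\zeta}\varrho_\zeta = J B^\zeta\big(u^x_\zeta\,dy - u^y_\zeta\,dx\big)$. Since $\varrho_\zeta$ is a nowhere-vanishing, hence nondegenerate, area form on the two-dimensional disc, the map $\bm{v}\mapsto\iota_{\bm{v}}\varrho_\zeta$ is a fiberwise linear isomorphism from vector fields to $1$-forms, so the defining equation has a unique solution. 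Matching the coefficients of $dx$ and $dy$ then yields $u^x_\zeta = B^x/B^\zeta$ and $u^y_\zeta = B^y/B^\zeta$.

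Finally I would assemble the pieces: substituting the prefactor and the velocity components into the right-hand side of the claimed identity gives
\[
\frac{\varrho_\zeta}{\omega_\zeta}\big(u^x_\zeta\,\partial_x + u^y_\zeta\,\partial_y + \partial_\zeta\big) = B^\zeta\Big(\tfrac{B^x}{B^\zeta}\,\partial_x + \tfrac{B^y}{B^\zeta}\,\partial_y + \partial_\zeta\Big) = B^x\,\partial_x + B^y\,\partial_y + B^\zeta\,\partial_\zeta = \bm{B},
\]
and the pressure identity $p(\zeta,x,y) = p_\zeta(x,y)$ is immediate from unwinding $p_\zeta = I_\zeta^* p$ in coordinates. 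The computation is otherwise routine; the only genuine subtlety, and the step I would handle most carefully, is the nondegeneracy argument guaranteeing that $\bm{u}_\zeta$ is well defined, since this is exactly where the toroidal hypothesis $B^\zeta > 0$ is indispensable. Beyond that, sign bookkeeping in the interior products is the only place errors could creep in.
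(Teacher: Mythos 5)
Your proof is correct, and it reaches the result by a genuinely more elementary route than the paper. You expand everything in the chart: writing $\Omega = J\,d\zeta\wedge dx\wedge dy$, computing $\iota_{\bm{B}}\Omega$ and its pullbacks explicitly, and matching coefficients of $dx$ and $dy$ to read off $\varrho_\zeta/\omega_\zeta = B^\zeta$ and $u^x_\zeta = B^x/B^\zeta$, $u^y_\zeta = B^y/B^\zeta$. The paper instead argues invariantly: it decomposes $\bm{B} = n\,\partial_\zeta + \bm{B}^\perp$ with $\bm{B}^\perp$ tangent to the hypersurfaces, uses anti-commutativity of interior products, $\iota_{\bm{v}}\iota_{\bm{B}^\perp}\iota_{\partial_\zeta}\Omega = -\iota_{\bm{v}}\iota_{\partial_\zeta}\iota_{\bm{B}}\Omega$, against an arbitrary hypersurface-tangent test field $\bm{v}$, and pulls back along $I_\zeta$ to conclude $\iota_{\bm{B}^\perp_\zeta}\omega_\zeta = \iota_{\bm{u}_\zeta}\varrho_\zeta$, hence $\bm{B}^\perp_\zeta = (\varrho_\zeta/\omega_\zeta)\,\bm{u}_\zeta$; the normal coefficient is then identified from $\varrho_\zeta = I_\zeta^*(n\,\iota_{\partial_\zeta}\Omega) = n_\zeta\,\omega_\zeta$. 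The logical skeleton is the same (identify the $\partial_\zeta$ coefficient as $\varrho_\zeta/\omega_\zeta$, identify the tangential part via the defining relation for $\bm{u}_\zeta$), but your coordinate version buys concreteness: the Jacobian bookkeeping is explicit, and uniqueness of $\bm{u}_\zeta$ falls out immediately from nondegeneracy of $\varrho_\zeta$, whose dependence on the toroidal hypothesis $B^\zeta>0$ you correctly flag as the one essential ingredient. What the paper's version buys is freedom from any particular expression for $\Omega$ (no Jacobian $J$ ever appears) and stylistic continuity with the Cartan-calculus manipulations it reuses in the proof of Theorem \ref{sdf}; your computation, by contrast, would be the easier one to verify line by line or hand to a computer algebra system. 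Both proofs are complete.
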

\begin{proof}
    It is obvious that $p(\zeta,x,y) = I_\zeta^*p(x,y) = p_\zeta(x,y)$. So to prove this statement, we first note that there exists functions $n(\zeta,x,y)$, $B^x_\zeta(x,y)$, and $B^y_\zeta(x,y)$ such that 
    \[
    \bm{B}(\zeta,x,y) = n(\zeta,x,y)\partial_{\zeta} +\bm{B}^\perp(\zeta,x,y) ,
    \]
    where $\bm{B}^{\perp}(\zeta,x,y) = B^x_\zeta(x,y)\,\partial_x + B^y_\zeta(x,y)\,\partial_y$ is tangent to the hypersurface $D_{ \zeta}$. 
    By definition of $\bm{B}^\perp$ and anti-commutativity of interior products, for each vector field $\mathbf{v}(\zeta,x,y)= v^x_\zeta(x,y)\,\partial_x + v^y_\zeta(x,y)\,\partial_y$ tangent to the hypersurfaces $D_\zeta$,
    \begin{align*}
    \iota_{\mathbf{v}}\iota_{\bm{B}}\iota_{\partial_\zeta}\Omega=\iota_{\mathbf{v}}\iota_{\bm{B}^\perp}\iota_{\partial_\zeta}\Omega = -\iota_{\mathbf{v}}\iota_{\partial_\zeta}\iota_{\bm{B}}\Omega.
    \end{align*}
    Pulling back this identity along the poloidal inclusion map $I_\zeta$ therefore implies
    \begin{align*}
    \iota_{\mathbf{v}_\zeta}\iota_{\,\bm{B}^\perp_\zeta}\omega_\zeta = -\iota_{\mathbf{v}_\zeta}I_\zeta^*(\iota_{\partial_\zeta}\iota_{\bm{B}}\Omega)=\iota_{\mathbf{v}_\zeta}\iota_{\bm{u}_\zeta}\varrho_\zeta,
    \end{align*}
    where $\bm{v}_\zeta(x,y) = v^x_\zeta(x,y)\,\partial_x + v^y_\zeta(x,y)\,\partial_y$ and $\bm{B}^\perp_\zeta(x,y) = B^x_\zeta(x,y)\,\partial_x + B^y_\zeta(x,y)\,\partial_y$ are $\bm{v}$ and $\bm{B}^\perp$ regarded as vector fields on $D$.
    And since $\bm{v}_\zeta$ is arbitrary we have $\iota_{\bm{B}^\perp_\zeta}\omega_\zeta  =\iota_{\bm{u}_\zeta}\varrho_\zeta$. By the nowhere-vanishing property for $\omega_\zeta$ and $\varrho_\zeta$, we therefore find
    \begin{align*}
    \bm{B}^\perp_\zeta = \frac{\varrho_\zeta}{\omega_\zeta}\bm{u}_\zeta.
    \end{align*}
    
    To finish off the lemma, we now must show that 
    \[
    n(\zeta,x,y) = \frac{\varrho_\zeta(x,y)}{\omega_\zeta(x,y)}.
    \]
    The function $\frac{\varrho_\zeta}{\omega_\zeta}$ can be characterized as the unique function on $D$ such that 
    \[
    \frac{\varrho_\zeta}{\omega_\zeta}\,\omega_\zeta = \varrho_\zeta = I_\zeta^*(\iota_{\bm{B}}\Omega)=I_\zeta^*\left(n\iota_\zeta\Omega\right) = n_\zeta\,\omega_\zeta,\quad n_\zeta = I_\zeta^*n,
    \]
    which is the desired result.
\end{proof}

Each ingredient necessary for deriving the spatial dynamics formulation of MHS is now in place. The following Theorem completes the derivation.

\begin{theorem}\label{sdf}
Let $\bm{B}$ and $p$ be a toroidal vector field and scalar function on $\widetilde{Q}$. The following are equivalent.
\begin{itemize}
\item $(\bm{B},p)$ solves the partial differential equations $(\nabla\times \bm{B})\times\bm{B} = \nabla p$, $\nabla\cdot \bm{B}=0$, subject to the boundary condition $\bm{B}\cdot \bm{n} = 0$ on $\partial \widetilde{Q}$.
\item The toroidal flux $\varrho_\zeta $, the field line velocity $\bm{u}_\zeta$, and the hypersurface pressure $p_\zeta$ on $D$ associated with $(\bm{B},p)$ solve the system of evolution equations
\begin{gather}
\partial_\zeta p_\zeta + \mathcal{L}_{\bm{u}_\zeta}p_\zeta = 0\label{pressure_evo_ld}\\
\partial_\zeta \varrho_\zeta + \mathcal{L}_{u_\zeta}\varrho_\zeta = 0\label{flux_evo_ld}\\
\partial_\zeta\pi_\zeta + \mathcal{L}_{\bm{u}_\zeta}\pi_\zeta   =\frac{dp_\zeta}{n_\zeta} +  d\bigg( \frac{|\pi_\zeta|^2}{n_\zeta} + n_\zeta\,N_\zeta^2  \bigg),\quad \pi_\zeta = n_\zeta(\iota_{\bm{u}_\zeta + \bm{N}_\zeta}h_\zeta),\quad n_\zeta = \frac{\varrho_\zeta}{\omega_\zeta},\label{vel_evo_ld}
\end{gather}
subject to tangential boundary conditions on $\bm{u}_\zeta$: $\bm{u}_\zeta$ is tangent to $\partial D$ for each $\zeta$.

\item The toroidal flux $\varrho_\zeta $, the field line velocity $\bm{u}_\zeta $, and the hypersurface pressure $p_\zeta$ on $D$ associated with $(\bm{B},p)$ solve the system of evolution equations
\begin{gather}
\partial_\zeta p_\zeta +u^i_\zeta\,\partial_i p_\zeta = 0\label{pressure_evo_ind}\\
\partial_\zeta \bigg(n_\zeta\,N_\zeta\,\sqrt{\text{det}\,h_\zeta}\bigg) +\partial_i\bigg(n_\zeta\,N_\zeta\,\sqrt{\text{det}\,h_\zeta}\,u^i_\zeta\bigg) = 0,\quad n_\zeta = \frac{\varrho_\zeta}{\omega_\zeta},\label{flux_evo_ind}\\
\partial_\zeta\pi_{\zeta\,i} + u^j_\zeta\,\partial_j \pi_{\zeta\,i} + \pi_{\zeta\,j}\,\partial_iu^j_\zeta  =\frac{\partial_ip_\zeta}{n_\zeta} +  \partial_i\bigg( \frac{|\pi_\zeta|^2}{n_\zeta} + n_\zeta\,N_\zeta^2  \bigg),\quad \pi_{\zeta\,i} = n_\zeta\,h_{\zeta\,i\,j}(u^j_\zeta + N_\zeta^j) ,\label{vel_evo_ind}
\end{gather}
subject to tangential boundary conditions on $\bm{u}_\zeta$.
\end{itemize}
\end{theorem}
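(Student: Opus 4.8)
The plan is to use Lemma \ref{B_param_lemma} to replace $(\bm{B},p)$ by the hypersurface data throughout, writing $\bm{B} = n\,\bm{w}$ with $\bm{w} = \bm{u} + \partial_\zeta$, where $n = \varrho_\zeta/\omega_\zeta$ (extended off $D$ in the obvious way) and $\bm{u}$ denotes the hypersurface-tangent vector field on $\widetilde{Q}$ that is $I_\zeta$-related to the field-line velocity $\bm{u}_\zeta$. Two dictionary entries drive everything: the flux form is $\varrho_\zeta = I_\zeta^*(\iota_{\bm{B}}\Omega)$ by definition, and the momentum $1$-form is the pullback of the magnetic covector, $\pi_\zeta = I_\zeta^* B^\flat$ with $B^\flat = \iota_{\bm{B}}\widetilde{g}$; the latter follows from $I_\zeta^*\iota_{\partial_\zeta}\widetilde{g} = \mathcal{N}_\zeta = \iota_{\bm{N}_\zeta}h_\zeta$ together with $I_\zeta^*\iota_{\bm{u}}\widetilde{g} = \iota_{\bm{u}_\zeta}h_\zeta$. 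I will then pull back each coordinate-free MHS equation along the time-slice embeddings $I_\zeta$ and read off the evolution equations. A dimension count guides the matching: $\nabla\cdot\bm{B}=0$ (one scalar) and force balance (three scalars) should correspond to the flux equation (a $2$-form on $D$, one scalar), the pressure equation (one scalar), and the velocity equation (a $1$-form on $D$, two scalars).

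For the flux equation I would use that the family $\zeta \mapsto I_\zeta$ is generated by $\partial_\zeta$, so $\partial_\zeta(I_\zeta^*\alpha) = I_\zeta^*(\mathcal{L}_{\partial_\zeta}\alpha)$. Applying this to $\alpha = \iota_{\bm{B}}\Omega$ and using Cartan's formula together with $d\varrho_\zeta = 0$ (a $2$-form on the $2$-manifold $D$) and the defining relation $\iota_{\bm{u}_\zeta}\varrho_\zeta = -I_\zeta^*(\iota_{\partial_\zeta}\iota_{\bm{B}}\Omega)$ gives
\[
\partial_\zeta\varrho_\zeta + \mathcal{L}_{\bm{u}_\zeta}\varrho_\zeta = I_\zeta^*\big(\iota_{\partial_\zeta}\,d\,\iota_{\bm{B}}\Omega\big) = \big((\nabla\cdot\bm{B})\circ I_\zeta\big)\,\omega_\zeta ,
\]
since $d\iota_{\bm{B}}\Omega = (\nabla\cdot\bm{B})\,\Omega$ and $\omega_\zeta = I_\zeta^*\iota_{\partial_\zeta}\Omega$. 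As $\omega_\zeta$ is nowhere vanishing, \eqref{flux_evo_ld} holds for all $\zeta$ if and only if $\nabla\cdot\bm{B}=0$.

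The force-balance step is the crux. I would first rewrite $(\nabla\times\bm{B})\times\bm{B} = \nabla p$ in the intrinsic form $\iota_{\bm{B}}dB^\flat = dp$, divide by $n$, and apply Cartan's formula using $\iota_{\bm{w}}B^\flat = n\,\widetilde{g}(\bm{w},\bm{w})$ to obtain the transport identity $\mathcal{L}_{\bm{w}}B^\flat = dp/n + d\big(n\,\widetilde{g}(\bm{w},\bm{w})\big)$. Pulling back, the left-hand side splits as $I_\zeta^*\mathcal{L}_{\partial_\zeta}B^\flat + I_\zeta^*\mathcal{L}_{\bm{u}}B^\flat = \partial_\zeta\pi_\zeta + \mathcal{L}_{\bm{u}_\zeta}\pi_\zeta$, exactly the left-hand side of \eqref{vel_evo_ld}. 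The right-hand side requires the ADM bookkeeping identity
\[
I_\zeta^*\big(\widetilde{g}(\bm{w},\bm{w})\big) = N_\zeta^2 + h_\zeta(\bm{u}_\zeta+\bm{N}_\zeta,\,\bm{u}_\zeta+\bm{N}_\zeta) = N_\zeta^2 + |\pi_\zeta|^2/n_\zeta^2 ,
\]
which follows by expanding $\widetilde{g}(\partial_\zeta+\bm{u},\partial_\zeta+\bm{u})$ and substituting the definitions of the lapse $N_\zeta$, shift $\bm{N}_\zeta$, and hypersurface metric $h_\zeta$; multiplying by $n_\zeta$ reproduces $|\pi_\zeta|^2/n_\zeta + n_\zeta N_\zeta^2$ and yields \eqref{vel_evo_ld}. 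This recovers only the two hypersurface-tangent components of force balance. The remaining, $\bm{w}$-direction component is recovered separately: contracting $\iota_{\bm{B}}dB^\flat = dp$ with $\bm{B}$ and using $\iota_{\bm{B}}\iota_{\bm{B}}dB^\flat = 0$ gives $\bm{B}\cdot\nabla p = 0$, whose pullback is precisely the pressure equation \eqref{pressure_evo_ld} (here one uses $n_\zeta\neq 0$). Because $\{\bm{w},\partial_x,\partial_y\}$ is a frame on $\widetilde{Q}$, vanishing of all three components is equivalent to full force balance; this completeness argument is what makes the correspondence genuinely biconditional and is the step I expect to require the most care.

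Finally, the equivalence of the differential-form system \eqref{pressure_evo_ld}--\eqref{vel_evo_ld} with the index system \eqref{pressure_evo_ind}--\eqref{vel_evo_ind} is routine expansion: $\mathcal{L}_{\bm{u}_\zeta}p_\zeta = u^i_\zeta\partial_i p_\zeta$ for functions; $\mathcal{L}_{\bm{u}_\zeta}(R\,dx\wedge dy) = \partial_i(R\,u^i_\zeta)\,dx\wedge dy$ for top forms, with $R = n_\zeta N_\zeta\sqrt{\det h_\zeta}$ since $\varrho_\zeta = n_\zeta\omega_\zeta$ and $\omega_\zeta = N_\zeta\sigma_\zeta$; and $(\mathcal{L}_{\bm{u}_\zeta}\pi_\zeta)_i = u^j_\zeta\partial_j\pi_{\zeta i} + \pi_{\zeta j}\partial_i u^j_\zeta$ for $1$-forms. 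For the boundary conditions, $\bm{B}\cdot\bm{n}=0$ is equivalent to $\bm{w} = \bm{B}/n$ being tangent to $\partial\widetilde{Q}$; since $\partial_\zeta$ is already tangent to $\partial\widetilde{Q}$, this holds if and only if $\bm{u}$, and hence $\bm{u}_\zeta$, is tangent to $\partial D$.
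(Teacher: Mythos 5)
Your proposal is correct and follows essentially the same route as the paper's proof: the same Cartan-calculus dictionary ($\pi_\zeta = I_\zeta^*\bm{B}^\flat$, $\varrho_\zeta = I_\zeta^*(\iota_{\bm{B}}\Omega)$), the same pullback-along-$I_\zeta$ computations relating $\nabla\cdot\bm{B}=0$ to the flux equation and force balance to the pressure/momentum pair, and the same boundary-condition argument. The only differences are organizational: you assemble a single transport identity $\mathcal{L}_{\bm{w}}\bm{B}^\flat = dp/n + d\big(n\,\widetilde{g}(\bm{w},\bm{w})\big)$ upstairs on $\widetilde{Q}$ before pulling back (the paper computes both sides directly in hypersurface variables), and you spell out via the frame $\{\bm{w},\partial_x,\partial_y\}$ the converse implication that the paper leaves as ``straightforward to confirm.''
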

\begin{proof}
The third bullet merely rewrites the second bullet using index notation. So we will prove equivalence of the first two bullets.

First we show that the MHS equation $\nabla\cdot \bm{B} = 0$ is equivalent to the spatial dynamics equation $(\partial_\zeta + \mathcal{L}_{\bm{u}_\zeta})\varrho_\zeta = 0$. Assume that $\nabla\cdot\bm{B} =  0$, or, equivalently, that $\mathcal{L}_{\bm{B}}\Omega = 0$. Using Cartan's magic formula, this condition can be rephrased as saying that $d\iota_{\bm{B}}\Omega = 0$. Using Cartan calculus and the definition of $\bm{u}_{\zeta}$, it follows that 
\[       
(\partial_{\zeta} + \mathcal{L}_{\bm{u}_{\zeta}})\varrho_{\zeta} = (\partial_{\zeta} + d\iota_{\bm{u}_{\zeta}} + \iota_{\bm{u}_{\zeta}}d)\rho_{\zeta} = \partial_{\zeta}I^*_{\zeta} \iota_{\bm{B}}\Omega -dI_\zeta^*(\iota_{\partial_\zeta}\iota_{\bm{B}}\Omega) .
\]
The expression on the right-hand-side vanishes because $\partial_\zeta (I^*_\zeta \lambda) = I^*_\zeta(\mathcal{L}_{\partial_{\zeta}}\lambda)$, for any differential form $\lambda$ on $D$. Now assume the converse: $(\partial_\zeta + \mathcal{L}_{\bm{u}_\zeta})\varrho_\zeta = 0$. The previous formulas imply
\begin{align*}
0=\partial_\zeta I^*_\zeta(\iota_{\bm{B}}\Omega)-I_\zeta^*(d\iota_{\partial_\zeta}\iota_{\bm{B}}\Omega) = I_\zeta^*(\iota_{\partial_\zeta}d\iota_{\bm{B}}\Omega) = I_\zeta^*(\nabla\cdot\bm{B}\,\iota_{\partial_\zeta}\Omega) = I_\zeta^*(\nabla\cdot \bm{B})\,\omega_\zeta.
\end{align*}
Since the preceding formula applies for all $\zeta$ we find $\nabla\cdot\bm{B} = 0$, as desired.

Next we show that force balance $(\nabla\times\bm{B})\times\bm{B} = \nabla p$ is equivalent to the pair of spatial dynamics equations \eqref{pressure_evo_ld} and \eqref{vel_evo_ld}. Assume force balance. In terms of differential forms, force balance is equivalent to $\iota_{\bm{B}} d\bm{B}^\flat = dp $, where the covering space metric $\widetilde{g}$ is used to lower indices. By Lemma \ref{B_param_lemma}, there is a vector field $\bm{u}$ on $\widetilde{Q}$ and a nowhere-vanishing function $n$ on $\widetilde{Q}$ such that $\bm{B} = n(\partial_\zeta + \bm{u})$, $I^*_\zeta n = \varrho_\zeta/\omega_\zeta$, and  
\begin{align*}
\mathcal{L}_{\bm{u}_\zeta}p_\zeta = I^*_\zeta(\mathcal{L}_{\bm{u}}p).
\end{align*}
From this it follows that 
\[
(\partial_{\zeta} + \mathcal{L}_{\bm{u}_\zeta}) p_{\zeta} = I^*_{\zeta} (\iota_{\partial_\zeta}d p+ \iota_{\bm{u}}dp) = I^*_{\zeta} (\iota_{\partial_\zeta}\iota_{\bm{B}} d\bm{B}^\flat +  \iota_{\bm{u}}\iota_{\bm{B}} d\bm{B}^\flat) = \frac{1}{n_\zeta}I^*_\zeta(\iota_{\bm{B}}\iota_{\bm{B}}d\bm{B}^\flat) = 0,
\]
where we have used $\iota_{\bm{B}}\iota_{\bm{B}} = 0$. A similar calculus gives us that 
\[
\frac{dp_{\zeta}}{n_{\zeta}} + d\bigg( n_{\zeta} (\left|\bm{u}_{\zeta} + \bm{N}_{\zeta} \right|^2  + N_{\zeta}^2)\bigg)  = \frac{1}{n_{\zeta}} I^*_{\zeta} (\iota_{\bm{B}}d\bm{B}^\flat) + d\bigg(n_\zeta\bigg[ 2 h_{\zeta}(\bm{N}_{\zeta}, \bm{u}_{\zeta}) + h_{\zeta}(\bm{u}_{\zeta}, \bm{u}_{\zeta})+I^*_{\zeta} \widetilde{g}(\partial_\zeta, \partial_\zeta) \bigg]\bigg)
\]
\[
=\frac{1}{n_{\zeta}} I^*_{\zeta}(\iota_{\bm{B}}d\bm{B}^\flat)+ d\left(n_\zeta I^*_{\zeta} \left[2\widetilde{g}\left(\frac{1}{n} \bm{B} - \partial_\zeta, \partial_\zeta\right) + \widetilde{g}\left(\frac{1}{n} \bm{B} - \partial_{\zeta},\frac{1}{n} \bm{B} - \partial_\zeta\right) +  \widetilde{g}\left(\partial_\zeta, \partial_\zeta\right) \right] \right)
\]
\[
=\frac{1}{n_{\zeta}} I^*_{\zeta}(\iota_{\bm{B}}d\bm{B}^\flat) + d\left(\frac{1}{n_{\zeta}}I^*_{\zeta}\widetilde{g}(\bm{B}, \bm{B})\right). 
\]
Compare this to 
\[
(\partial_{\zeta} + \mathcal{L}_{\bm{u}_\zeta})\left(n_{\zeta}(\bm{u}_{\zeta} + \bm{N}_{\zeta})^{\flat} \right)= (\partial_{\zeta} + \mathcal{L}_{\bm{u}_\zeta})n_{\zeta}I^*_{\zeta} \left( \iota_{\partial_{\zeta}} \widetilde{g} + \iota_{\bm{u} } \widetilde{g} \right) = (\partial_{\zeta} + d\iota_{\bm{u}_\zeta} + \iota_{\bm{u}_\zeta} d)I^*_{\zeta}\iota_{\bm{B}}\widetilde{g} 
\]
\[
= I^*_{\zeta}\left[ (\iota_{\partial_\zeta}d + d\iota_{\partial_\zeta} +  d\iota_{\bm{u}} + \iota_{\bm{u}} d) \bm{B}^\flat \right] = \frac{1}{n_{\zeta}} I^*_{\zeta}\iota_{\bm{B}} d\bm{B}^\flat + d\left( \frac{1}{n_\zeta} I^*_{\zeta} \widetilde{g}(\bm{B}, \bm{B}) \right).
\]
Hence we arrive at the desired equations \eqref{pressure_evo_ld} and \eqref{vel_evo_ld}. Now assume that Eqs.\,\eqref{pressure_evo_ld} and \eqref{vel_evo_ld} are satisfied. Similar manipulations to those above reveal 
\[
\frac{dp_{\zeta}}{n_{\zeta}} = (\partial_{\zeta} + \mathcal{L}_{\bm{u}_\zeta})\left(n_{\zeta}(\bm{u}_{\zeta} + \bm{N}_{\zeta})^{\flat} \right) -  d( n_{\zeta} (\left|\bm{u}_{\zeta} + \bm{N}_{\zeta} \right|^2 ) + N_{\zeta}^2) = \frac{1}{n_{\zeta}} I^*_{\zeta}\iota_{\bm{B}} d\bm{B}^\flat ,
\]
which implies that 
\begin{align}
I^*_{\zeta} dp = I^*_{\zeta}\iota_{\bm{B}} d\bm{B}^\flat .\label{perp_force_balance}
\end{align}
Furthermore, Eq.\,\eqref{pressure_evo_ld} gives us that 
\begin{align}
(\partial_{\zeta} + \mathcal{L}_{\bm{u}_\zeta}) p_{\zeta}  = I^*_{\zeta} \left( \iota_{\partial_\zeta} dp + \iota_{\bm{u}} dp\right) = \frac{1}{n_\zeta} I^*_{\zeta}\iota_{\bm{B}} dp = 0.\label{parallel_force_balance}
\end{align}
It is straightforward to confirm that Eqs.\,\eqref{perp_force_balance} and \eqref{parallel_force_balance}, which must be satisfied for all $\zeta$, imply force balance, $\iota_{\bm{B}}d\bm{B}^\flat = dp$, as desired.

To complete the proof, we now must show that the boundary condition $\bm{B}\cdot\bm{n} = 0$ on $\partial{\widetilde{Q}}$ is equivalent to the tangential boundary condition on $\bm{u}_\zeta$. By the definition of toroidal coordinates, the boundary $\partial\widetilde{Q}$ is given as the level set $\{x^2 + y^2 = 1\}$. The normal vector $\bm{n}$ is therefore proportional to $\bm{n}_0 = x\,\nabla x + y\nabla y$, where gradients are taken using $\widetilde{g}$. The normal component of $\bm{B}$ is therefore given by
\begin{align*}
\bm{B}\cdot \bm{n} = \bm{B}\cdot \frac{\bm{n}_0}{|\bm{n}_0|} = n\,(\partial_\zeta + \bm{u})\cdot \frac{\bm{n}_0}{|\bm{n}_0|} = n\,\bm{u}\cdot \frac{\bm{n}_0}{|\bm{n}_0|} = n\,\frac{x\,u^x_\zeta + y\,u^y_\zeta}{|\bm{n}_0|}.
\end{align*}
So vanishing of $\bm{B}\cdot\bm{n}$ along $\partial\widetilde{Q}$ is equivalent to vanishing of $x\,u^x_\zeta + y\,u^y_\zeta$ along $\partial D$. But vanishing of $x\,u^x_\zeta + y\,u^y_\zeta$ along $\partial D$ is equivalent to tangency of $\bm{u}_\zeta$ to the boundary of the unit disc $\partial D$.
\end{proof}

Theorem \ref{sdf} provides the spatial dynamics formulation for MHS on covering space $\widetilde{Q}$. The remainder of this Article begins the process of applying ideas from dynamical systems theory to these equations in order to shed new light on three-dimensional MHD equilibria. But before proceeding it is important to emphasize the connection between the spatial dynamics formulation on $\widetilde{Q}$ and the original boundary value problem on $Q$. Otherwise the interpretation of solutions of the spatial dynamics equations will be ambiguous.

In principle, Lemma \ref{deck_transformations_lemma} already provides the precise link between solutions on $\widetilde{Q}$ and solutions on $Q$. However, using the machinery developed so far, that connection may be described in more concrete terms. According to Lemma \ref{deck_transformations_lemma}, MHS solutions on $Q$ are equivalent to MHS solutions on $\widetilde{Q}$ that are invariant under "deck transformations". In toroidal coordinates $(\zeta,x,y)$ on $\widetilde{Q}$ a deck transformation is simply a mapping of the form
\begin{align*}
(\zeta,x,y)\mapsto (\zeta + 2\pi k,x,y),\quad k\in \mathbb{Z}.
\end{align*}
If $(\rho_\zeta,\bm{u}_\zeta,p_\zeta)$ denotes a solution of the spatial dynamics equations on $\widetilde{Q}$ its deck transformation is $(\rho^\prime_\zeta,\bm{u}^\prime_\zeta,p^\prime_\zeta)$, where
\begin{align*}
\rho^\prime_\zeta = \rho_{\zeta +2\pi k}, \quad \bm{u}^\prime_\zeta = \bm{u}_{\zeta + 2\pi k},\quad p^\prime_\zeta = p_{\zeta + 2\pi k}.
\end{align*}
This shows that a solution of the spatial dynamics equations on $\widetilde{Q}$ will be invariant under deck transformations if and only if it is $2\pi$-periodic in time $\zeta$. In other words, the only solutions of the spatial dynamics equations that correspond to physical equilibria back in $Q$ are the periodic orbits with period $2\pi$. But there is no loss of generality here because every MHS solution on $Q$ corresponds to a unique periodic orbit of the spatial dynamics equations. This simple reinterpretation of Lemma \ref{deck_transformations_lemma} requires that the toroidal coordinates $\widetilde{\varphi} = (\zeta,x,y)$ on $\widetilde{Q}$ be related to a toroidal coordinate system $\varphi$ on $Q$ by the following commutative diagram.
\[\begin{tikzcd}
	{\widetilde{Q}} & {} & Q \\
	\\
	{\mathbb{R}\times D} && {\mathbb{S}^1\times D}
	\arrow["{\mathcal{P}}", from=1-1, to=1-3]
	\arrow[from=3-1, to=3-3]
	\arrow["{\widetilde{\varphi}}"{description}, from=1-1, to=3-1]
	\arrow["\varphi"{description}, from=1-3, to=3-3]
\end{tikzcd}\]
The bottom horizontal arrow denotes the mapping $(\zeta,x,y)\mapsto (\zeta\text{ mod }2\pi,x,y)$.

\section{MHS spatial dynamics are variational\label{variational_formulation}}
Variational principles for MHS go back at least as far as the work of Kruskal and Kulsrud\cite{Kruskal_Kulsrud_1958}, who showed that MHS solutions arise as constrained extrema of plasma potential energy. The constraints reflect kinematic features of dynamical ideal MHD, including advection of magnetic flux, mass, and entropy. Newcomb\cite{Newcomb_1962} discovered an elegant method for incorporating the Kruskal-Kulsrud constraints using configuration maps familiar from Lagrangian hydrodynamics. These ideas lead naturally to a variational principle for the spatial dynamics formulation of MHS.

With sufficient care, a spatial dynamics variational principle emerges from a deductive chain that starts with an MHS variational principle implicit in Newcomb's work\cite{Newcomb_1962}. We will not provide the details of this argument here. Instead, we will present the end-result and confirm its correctness through direct calculation. But, before precisely formulating the variational principle, we need to introduce a label map for magnetic field lines.

Geometrically, a magnetic field line is a $1$-dimensional curve that is everywhere tangent to the magnetic field $\bm{B}$. When $\bm{B}$ is toroidal, and toroidal coordinates $(\zeta,x,y)$ have been introduced, the field line takes on a dynamical character. The line's intersection with hypersurface $\zeta$ determines a unique "spatial" point $z(\zeta) = (x(\zeta),y(\zeta))\in D$ in the parameterizing disc; as $\zeta$ changes this spatial point $z(\zeta)$ moves. Because $\bm{B} = n(\partial_\zeta + \bm{u})$, the velocity of $z(\zeta)$ is precisely $\partial_\zeta z = \bm{u}_\zeta(z(\zeta))$. This leads to a picture of MHS solutions as comprised of many particles $z$ in the plane that flow along stream lines of the velocity field $\bm{u}_\zeta$. In the parlance of relativity theory, the world line of each particle is a field line.

Bookkeeping of the numerous particles $z\in D$ requires introducing a label map $\psi_\zeta$. The map $\psi_\zeta$ should assign a unique label $Z = (U,V) = \psi_\zeta(z)$ to each particle $z$ that does not change as $z$ evolves through time. Time-independence of the label attached to $z$ implies the velocity of the configuration map $\partial_\zeta\psi_\zeta$ determines the particle velocity $\bm{u}_\zeta$ according to
\begin{align*}
\bm{u}_\zeta(x,y) = -[D\psi_\zeta(x,y)]^{-1}\partial_\zeta\psi_\zeta(x,y).
\end{align*}
This formula for $\bm{u}_\zeta$ and the spatial dynamics equations \eqref{pressure_evo_ld}--\eqref{flux_evo_ld} further imply that hypersurface pressure $p_\zeta$ and toroidal flux $\varrho_\zeta$ both freeze when expressed in label space. In terms of differential forms, this means $\partial_\zeta (\psi_{\zeta\,*}p_\zeta) = 0$ and $\partial_\zeta(\psi_{\zeta\,*}\varrho_\zeta )= 0$. If the frozen label-space expressions for $p_\zeta$ and $\varrho_\zeta$ are denoted $\hat{p}$ and $\hat{\varrho}$ then $p_\zeta$ and $\varrho_\zeta$ are completely determined by $\psi_\zeta$ according to
\begin{align*}
p_\zeta = \psi_\zeta^*\hat{p},\quad \varrho_\zeta = \psi_\zeta^*\hat{\varrho}.
\end{align*}

Without loss of generality, we may assume that the space of labels $Z=(U,V)$ where $\psi_\zeta$ takes its values is $D(\Gamma_0)$, the centered disc in $\mathbb{R}^2$ with area given by the total toroidal flux $\Gamma_0 = \int_D \varrho_\zeta$. Furthermore, the frozen label space toroidal flux may always be assumed to be in the standard form $\hat{\varrho} = dU\wedge dV$. These statements may be understood as consequences of global existence of Clebsch-like coordinates\cite{Haeseleer_1991,XJ_2006} in covering space $\widetilde{Q}$. The same simplifications do not apply back in $Q$ due to field lines intersecting hypersurfaces multiple times.

The discussion so far leads to the following action functional for MHS spatial dynamics.

\begin{definition}
Let $D$ be the standard unit disc in $\mathbb{R}^2$ and let $D(\Gamma_0)$ denote the disc in $\mathbb{R}^2$ with area $\Gamma_0$. Let $\bm{N}_\zeta,N_\zeta,\omega_\zeta,h_\zeta$ be a shift vector, lapse function, hypersurface area element, and hypersurface metric on $D$ associated with some toroidal coordinates on covering space $\widetilde{Q}$. Fix $\Gamma_0 > 0$, a smooth function $\hat{p}:D(\Gamma_0)\rightarrow \mathbb{R}$, and $\zeta_1,\zeta_2\in\mathbb{R}$ with $\zeta_1 \leq \zeta_2$.   Finally, introduce the space $\mathcal{D} = \text{Diff}(D,D(\Gamma_0))$ of diffeomorphisms from $D$ to $D(\Gamma_0)$.
The \textbf{spatial dynamics action} with toroidal flux $\Gamma_0$ on the interval $[\zeta_1,\zeta_2]$ is the real-valued function $\mathcal{S}_{\hat{p}}$ of paths $\psi:[\zeta_1,\zeta_2]\rightarrow \mathcal{D}:\zeta\mapsto \psi_\zeta$ given as follows.
\begin{align}
\mathcal{S}_{\hat{p}}(\psi) &= \int_{\zeta_1}^{\zeta_2}\int_D\bigg[\frac{1}{2}\left(\frac{\varrho_\zeta}{\omega_\zeta}\right)^2|\bm{u}_\zeta+ \bm{N}_\zeta|^2 + \frac{1}{2}\left(\frac{\varrho_\zeta}{\omega_\zeta}\right)^2\,N_\zeta^2 - p_\zeta \bigg]\omega_\zeta\,d\zeta,\label{spatial_dynamics_action}
\end{align}
where
\begin{align*}
\bm{u}_\zeta= -[D\psi_\zeta]^{-1}\partial_\zeta\psi_\zeta,\quad p_\zeta = \psi_\zeta^*\hat{p},\quad \varrho_\zeta = \psi_\zeta^*(dU\wedge dV).
\end{align*}
Here the $2$-norm $|\cdot|$ is taken with respect to the hypersurface metric $h_\zeta$.
A path $\psi$ is a \textbf{critical point} for the spatial dynamics action with toroidal flux $\Gamma_0$ if
\begin{align*}
\delta \mathcal{S}_{\hat{p}}(\psi) = 0,\quad \delta \psi_\zeta=0, \text{ for }\zeta = \zeta_1\text{ or }\zeta = \zeta_2.
\end{align*}
\end{definition}

The integrand for the spatial dynamics action is equivalent to $\tfrac{1}{2}|\bm{B}|^2 - p$, which appears in the work of Kruskal and Kulsrud, Eq.\,(E1) with $\gamma=0$, Grad\cite{Grad_1964}, and Bhattacharjee\cite{Bhattacharjee_1984}, among others. Starting from Newcomb's equilibrium variational principle, this difference between magnetic and internal energy arises as follows. In Newcomb's approach to equilibria, the magnetic field $\bm{B}$ is given as a volume-preserving rearrangement of a reference magnetic field $\bm{B}_0$. This reference specifies the topological properties of field lines; restricting $\bm{B}_0$ to the straight field-line form\cite{Greene_1962} connects Newcomb's theory with Bhattacharjee's. Equilibria then arise as extrema for magnetic energy, subject to these constraints. Note that magnetic energy is not the same as plasma potential energy! The volume-preserving constraint on the rearrangment relating $\bm{B}$ with $\bm{B}_0$ may be lifted by introducing a Lagrange multiplier. When this is done, the Lagrange multiplier assumes the physical interpretation of pressure, and the value of the action functional changes. Finally, the spatial dynamics action emerges by first passing to covering space, introducing the parameterization of $(\bm{B},p)$ using $\varrho_\zeta,\bm{u}_\zeta,p_\zeta$, and then dropping a temporal boundary term from the action functional. It is interesting that $\tfrac{1}{2}|\bm{B}|^2 - p$ appears in this roundabout fashion, rather than from the plasma potential energy with $\gamma=0$.

Applying Hamilton's principle to the spatial dynamics action leads to a variational characterization of MHS spatial dynamics. The variational principle itself is described in Theorem \ref{sd_vp}; Lemma \ref{ele_lemma} provides a convenient independent characterization of the Euler-Lagrange equations associated with the spatial dynamics action $\mathcal{S}_{\hat{p}}$.

\begin{lemma}\label{ele_lemma}
The Euler-Lagrange equations associated with the spatial dynamics action are
\begin{gather}
(\partial_\zeta + \mathcal{L}_{\bm{u}_\zeta})\bigg(n_\zeta(\bm{u}_{\zeta} + \bm{N}_\zeta)^\flat\bigg)   =\frac{dp_\zeta}{n_\zeta} +  d\bigg(n_\zeta\,(|\bm{u}_\zeta + \bm{N}_\zeta|^2 + N_\zeta^2) \bigg).\label{ele}
\end{gather}
The Eulerian velocity $\bm{u}_\zeta$ must be tangent to $\partial D$ for each $\zeta$. The toroidal flux and hypersurface pressure must obey the evolution equations $\partial_\zeta\varrho_\zeta +\mathcal{L}_{\bm{u}_\zeta}\varrho_\zeta = 0$ and $\partial_\zeta p_\zeta +\mathcal{L}_{\bm{u}_\zeta}p_\zeta = 0$.
\end{lemma}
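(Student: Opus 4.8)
The plan is to regard the configuration path $\zeta\mapsto\psi_\zeta$ as the sole independent field and to extract \eqref{ele} as the only genuine Euler--Lagrange equation, the advection laws and the tangency condition being automatic features of the label-map ansatz rather than dynamical output. Indeed, $p_\zeta=\psi_\zeta^*\hat p$ and $\varrho_\zeta=\psi_\zeta^*(dU\wedge dV)$ are pullbacks of fixed label-space data, while $\bm u_\zeta=-[D\psi_\zeta]^{-1}\partial_\zeta\psi_\zeta$ is precisely the Eulerian velocity of the flow carrying those labels; differentiating a pullback $\psi_\zeta^*\hat\alpha$ and using $\partial_\zeta\psi_\zeta=-D\psi_\zeta\,\bm u_\zeta$ gives $(\partial_\zeta+\mathcal L_{\bm u_\zeta})p_\zeta=0$ and $(\partial_\zeta+\mathcal L_{\bm u_\zeta})\varrho_\zeta=0$ identically. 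Likewise, because every element of $\mathcal D=\mathrm{Diff}(D,D(\Gamma_0))$ carries $\partial D$ onto $\partial D(\Gamma_0)$, a boundary particle retains a boundary label and hence never leaves $\partial D$, forcing $\bm u_\zeta$ (and every admissible variation field, below) to be tangent to $\partial D$. So only the stationarity condition $\delta\mathcal S_{\hat p}=0$ remains to be converted into \eqref{ele}.

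The core of the argument is the Newcomb/Lin constrained-variation calculus. I would embed $\psi_\zeta$ in a two-parameter family $\psi_\zeta^\epsilon$ with $\psi_\zeta^0=\psi_\zeta$ and $\delta\psi_{\zeta_1}=\delta\psi_{\zeta_2}=0$, and introduce the Eulerian displacement $\bm\xi_\zeta=-[D\psi_\zeta]^{-1}\partial_\epsilon\psi_\zeta^\epsilon|_{\epsilon=0}$, the exact variational analogue of $\bm u_\zeta$. Since $\bm u_\zeta$ and $\bm\xi_\zeta$ are built from $\partial_\zeta\psi$ and $\partial_\epsilon\psi$ in the identical way, equality of the mixed partials $\partial_\epsilon\partial_\zeta\psi=\partial_\zeta\partial_\epsilon\psi$ yields the Lin constraint $\delta\bm u_\zeta=\partial_\zeta\bm\xi_\zeta+[\bm u_\zeta,\bm\xi_\zeta]$; the second-order terms cancel by symmetry of $\partial_i\partial_j\psi$, leaving exactly the Jacobi--Lie bracket. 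The same pullback computation as above, now in $\epsilon$, gives the advected variations $\delta p_\zeta=-\mathcal L_{\bm\xi_\zeta}p_\zeta$ and $\delta\varrho_\zeta=-\mathcal L_{\bm\xi_\zeta}\varrho_\zeta$, whence $\delta n_\zeta=\delta\varrho_\zeta/\omega_\zeta$ since the geometric backgrounds $\omega_\zeta,\bm N_\zeta,N_\zeta,h_\zeta$ do not depend on $\psi$ and are not varied.

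With these in hand I would substitute into $\delta\mathcal S_{\hat p}=\int_{\zeta_1}^{\zeta_2}\!\!\int_D\delta L_\zeta\,d\zeta$, where $L_\zeta=[\tfrac12 n_\zeta^2(|\bm u_\zeta+\bm N_\zeta|^2+N_\zeta^2)-p_\zeta]\omega_\zeta$. The pressure term produces $(\iota_{\bm\xi_\zeta}dp_\zeta)\omega_\zeta$; the $\delta n_\zeta$ term produces $-n_\zeta W_\zeta\,d\iota_{\bm\xi_\zeta}\varrho_\zeta$ with $W_\zeta=|\bm u_\zeta+\bm N_\zeta|^2+N_\zeta^2$; and the $\delta\bm u_\zeta$ term produces $(\iota_{\delta\bm u_\zeta}\mu_\zeta)\omega_\zeta$ with $\mu_\zeta=n_\zeta^2(\bm u_\zeta+\bm N_\zeta)^\flat=n_\zeta\pi_\zeta$, so that $\mu_\zeta\,\omega_\zeta=\pi_\zeta\,\varrho_\zeta$. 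I would then integrate by parts once in $\zeta$ (the endpoint terms vanish because $\bm\xi_{\zeta_1}=\bm\xi_{\zeta_2}=0$) and once over $D$. Every $\partial D$ boundary term vanishes by tangency of $\bm\xi_\zeta$ together with the top-form identity $\alpha\wedge\iota_{\bm\xi_\zeta}\varrho_\zeta=(\iota_{\bm\xi_\zeta}\alpha)\varrho_\zeta$, which also converts the $\delta n_\zeta$ contribution into $(\iota_{\bm\xi_\zeta}d(n_\zeta W_\zeta))\,\varrho_\zeta$. Collecting the coefficient of the arbitrary, boundary-tangent field $\bm\xi_\zeta$ then yields \eqref{ele} after dividing out the common weight $\varrho_\zeta=n_\zeta\omega_\zeta$, which is exactly what produces the factor $1/n_\zeta$ on the pressure term of the right-hand side.

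I expect the single real obstacle to be the reassembly of the convective term: the $\partial_\zeta$ integration by parts hits $\pi_\zeta\varrho_\zeta$ and generates both $(\partial_\zeta\pi_\zeta)\varrho_\zeta$ and $\pi_\zeta\,\partial_\zeta\varrho_\zeta$, and this latter piece must be combined with the bracket term $\iota_{[\bm u_\zeta,\bm\xi_\zeta]}\pi_\zeta$ coming from the Lin constraint. Using the advection law $\partial_\zeta\varrho_\zeta=-\mathcal L_{\bm u_\zeta}\varrho_\zeta$ and the fact that $\mathcal L_{\bm u_\zeta}$ is a derivation, these should assemble (modulo a vanishing $\partial D$ term) into the single coadjoint-type expression $\iota_{\bm\xi_\zeta}\big((\partial_\zeta+\mathcal L_{\bm u_\zeta})\pi_\zeta\big)\varrho_\zeta$; carefully tracking the density factors $n_\zeta=\varrho_\zeta/\omega_\zeta$ through this step is the delicate bookkeeping. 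As a consistency check I would confirm the result matches \eqref{vel_evo_ld} of Theorem \ref{sdf}: since $\pi_\zeta=n_\zeta(\bm u_\zeta+\bm N_\zeta)^\flat$ gives $|\pi_\zeta|^2/n_\zeta=n_\zeta|\bm u_\zeta+\bm N_\zeta|^2$, the right-hand side $dp_\zeta/n_\zeta+d(n_\zeta W_\zeta)$ of \eqref{ele} coincides with $dp_\zeta/n_\zeta+d(|\pi_\zeta|^2/n_\zeta+n_\zeta N_\zeta^2)$, as required.
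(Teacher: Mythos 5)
Your proposal is correct and follows essentially the same route as the paper's proof: Eulerianized (Lin-constrained) variations $\delta\bm{u}_\zeta = \partial_\zeta\bm{\xi}_\zeta + \mathcal{L}_{\bm{u}_\zeta}\bm{\xi}_\zeta$, $\delta\varrho_\zeta = -\mathcal{L}_{\bm{\xi}_\zeta}\varrho_\zeta$, $\delta p_\zeta = -\mathcal{L}_{\bm{\xi}_\zeta}p_\zeta$, substitution into $\delta\mathcal{S}_{\hat{p}}$, integration by parts in $\zeta$ and over $D$ with boundary terms killed by tangency, then stripping off the arbitrary $\bm{\xi}_\zeta$ against the nowhere-vanishing weight $\varrho_\zeta$, with the tangency of $\bm{u}_\zeta$ and the advection laws read off as kinematic consequences of the label-map ansatz exactly as in the paper. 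The ``delicate'' reassembly of the convective term that you flag resolves precisely as you predict, via the identity $\partial_\zeta\varrho_\zeta + \mathcal{L}_{\bm{u}_\zeta}\varrho_\zeta = 0$ and the derivation property of $\mathcal{L}_{\bm{u}_\zeta}$, so there is no gap.
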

\begin{proof}
First it is useful to record formulas for the first variations of various terms in the hydrodynamic MHS action. Let $\bm{\xi}_\zeta = -(T\psi_\zeta)^{-1}\delta\psi_\zeta$ denote the Eulerianized first variation of $\psi_\zeta$. We have
\begin{align*}
\delta\bm{u}_\zeta = \partial_\zeta\bm{\xi}_\zeta  + \mathcal{L}_{\bm{u}_\zeta}\bm{\xi}_\zeta,\quad \delta \varrho_\zeta  = -\mathcal{L}_{\bm{\xi}_\zeta}\varrho_\zeta,\quad \delta p_\zeta = -\mathcal{L}_{\bm{\xi}_\zeta}p_\zeta.
\end{align*}
Now the first variation of the spatial dynamics action may be computed directly as follows:
\begin{align}
\delta\mathcal{S}_{\hat{p}} & = \int_{\zeta_1}^{\zeta_2}\int_D \left(\frac{\varrho_\zeta}{\omega_\zeta}\right)\left(|\bm{u}_\zeta + \bm{N}_\zeta|^2 + N_\zeta^2\right)\,\delta\varrho_\zeta\,d\zeta+\int_{\zeta_1}^{\zeta_2}\int_D \left(\frac{\varrho_\zeta}{\omega_\zeta}\right)^2\,(\bm{u}_\zeta + \bm{N}_\zeta)^\flat(\delta \bm{u}_\zeta)\,\omega_\zeta\,d\zeta-\int_{\zeta_1}^{\zeta_2}\int_D \delta p_\zeta\,\omega_\zeta\,d\zeta\nonumber\\
&= \int_{\zeta_1}^{\zeta_2}\int_D d\left[\left(\frac{\varrho_\zeta}{\omega_\zeta}\right)(|\bm{u}_\zeta + \bm{N}_\zeta|^2 + N_\zeta^2)\right](\bm{\xi}_\zeta)\,\varrho_\zeta + \int_{\zeta_1}^{\zeta_2}\int_D \left(\frac{\omega_\zeta}{\varrho_\zeta}\right)\,dp_\zeta(\bm{\xi}_\zeta)\,\varrho_\zeta\nonumber\\
&-\int_{\zeta_1}^{\zeta_2}\int_D(\partial_\zeta + \mathcal{L}_{\bm{u}_\zeta})\left[\left(\frac{\varrho_\zeta}{\omega_\zeta}\right)\,(\bm{u}_\zeta + \bm{N}_\zeta)^\flat\right](\bm{\xi}_\zeta)\,\varrho_\zeta + \int_D \left(\frac{\varrho_\zeta}{\omega_\zeta}\right)(\bm{u}_\zeta + \bm{N}_\zeta)^\flat(\bm{\xi}_\zeta)\,\varrho_\zeta\bigg|_{\zeta_1}^{\zeta_2}.
\end{align}
Obtaining the second equality requires some spatial integration by parts that generates spatial boundary terms. These boundary terms all vanish because $\bm{u}_\zeta$ is tangent to $\partial D$. Now suppose that $\psi$ is a critical point. The last term vanishes because $\delta \psi_{\zeta_1} = \delta\psi_{\zeta_2} = 0$. The remaining terms must vanish for each $\bm{\xi}_\zeta$. By the nowhere-vanishing property of the toroidal flux $\varrho_\zeta$, we conclude that Eq.\,\eqref{ele} provides the Euler-Lagrange equations, as claimed. 

The velocity $\bm{u}_\zeta$ must be tangent to $\partial D$ because the diffeomorphism $\psi_\zeta:D\rightarrow D(\Gamma_0)$ maps $\partial D$ diffeomorphically onto $\partial D(\Gamma_0)$ for each $\zeta$. The evolution equations $\partial_\zeta\varrho_\zeta + \mathcal{L}_{\bm{u}_\zeta}\varrho_\zeta = 0$ and $\partial_\zeta p_\zeta + \mathcal{L}_{\bm{u}_\zeta}p_\zeta = 0$ follow from time differentiation of the definitions of $\varrho_\zeta$ and $p_\zeta$ in terms of $\psi_\zeta$: $\varrho_\zeta = \psi_\zeta^*(dU\wedge dV)$ and $p_\zeta = \psi_\zeta^*\hat{p}$.

\end{proof}

\begin{theorem}\label{sd_vp}
Let $\varrho_\zeta$ be an evolving nowhere-vanishing $2$-form on $D$, $\bm{u}_\zeta$ an evolving vector field on $D$, and $p_\zeta$ and evolving function on $D$. The following are equivalent.
\begin{itemize}
\item $\varrho_\zeta$, $\bm{u}_\zeta$, and $p_\zeta$ solve the spatial dynamics equations \eqref{pressure_evo_ld}-\eqref{vel_evo_ld}, with $\bm{u}_\zeta$ obeying tangential boundary conditions.
\item There is a function $\hat{p}:D(\Gamma_0)\rightarrow\mathbb{R}$ and a critical point $\psi$ of $\mathcal{S}_{\hat{p}}$ such that 
\begin{align*}
\bm{u}_\zeta = -[D\psi_\zeta]^{-1}\partial_\zeta\psi_\zeta,\quad \varrho_\zeta = \psi_\zeta^*(dU\wedge dV),\quad p_\zeta = \psi_\zeta^*\hat{p}.
\end{align*}
\end{itemize}
\end{theorem}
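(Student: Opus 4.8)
The plan is to reduce the theorem to Lemma \ref{ele_lemma} after first noticing that the velocity evolution equation \eqref{vel_evo_ld} and the Euler--Lagrange equation \eqref{ele} are \emph{literally the same equation}. Indeed, with $\pi_\zeta = n_\zeta(\bm{u}_\zeta + \bm{N}_\zeta)^\flat$ one has $|\pi_\zeta|^2 = n_\zeta^2\,|\bm{u}_\zeta + \bm{N}_\zeta|^2$, both norms taken with respect to $h_\zeta$, so that $|\pi_\zeta|^2/n_\zeta = n_\zeta\,|\bm{u}_\zeta + \bm{N}_\zeta|^2$. Substituting this into the right-hand side of \eqref{vel_evo_ld} reproduces the right-hand side of \eqref{ele} verbatim, and the left-hand sides agree by the definition of $\pi_\zeta$. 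Thus the only nontrivial content of the equivalence is the passage between the Eulerian fields $(\varrho_\zeta,\bm{u}_\zeta,p_\zeta)$ and a Lagrangian configuration path $\psi$.

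The direction that the second bullet implies the first is then immediate. Given a critical point $\psi$ of $\mathcal{S}_{\hat{p}}$ together with the kinematic relations, Lemma \ref{ele_lemma} asserts that the associated fields satisfy \eqref{ele}, hence \eqref{vel_evo_ld}; that $\bm{u}_\zeta$ is tangent to $\partial D$; and that $\varrho_\zeta$ and $p_\zeta$ obey the advection equations \eqref{flux_evo_ld} and \eqref{pressure_evo_ld}. This is precisely the first bullet.

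For the converse I would reconstruct the label map from the flow of the velocity field. Let $g_\zeta:D\to D$ be the flow of $\bm{u}_\zeta$ with $g_{\zeta_1} = \mathrm{id}$; since $\bm{u}_\zeta$ is tangent to $\partial D$ the flow is well defined and preserves $D$. Using a Moser/Dacorogna--Moser construction on the disc with boundary, fix a reference diffeomorphism $\psi_0:D\to D(\Gamma_0)$ carrying $\partial D$ to $\partial D(\Gamma_0)$ with $\psi_0^*(dU\wedge dV) = \varrho_{\zeta_1}$; this is possible because $\int_D\varrho_\zeta = \Gamma_0$ is conserved under \eqref{flux_evo_ld} (the boundary flux $\int_{\partial D}\iota_{\bm{u}_\zeta}\varrho_\zeta$ vanishes by tangency) and equals the area of $D(\Gamma_0)$ by definition. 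Set $\psi_\zeta = \psi_0\circ g_\zeta^{-1}$ and $\hat{p} = (\psi_0^{-1})^*p_{\zeta_1}$. A short chain-rule computation gives $\bm{u}_\zeta = -[D\psi_\zeta]^{-1}\partial_\zeta\psi_\zeta$ by construction, and because \eqref{flux_evo_ld} and \eqref{pressure_evo_ld} say exactly that $\varrho_\zeta$ and $p_\zeta$ are advected by $g_\zeta$, so that $g_\zeta^*\varrho_\zeta = \varrho_{\zeta_1}$ and $g_\zeta^*p_\zeta = p_{\zeta_1}$, one recovers $\varrho_\zeta = \psi_\zeta^*(dU\wedge dV)$ and $p_\zeta = \psi_\zeta^*\hat{p}$. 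With the kinematic relations in hand, the assumed equation \eqref{vel_evo_ld} becomes \eqref{ele}, and Lemma \ref{ele_lemma} read in reverse shows $\psi$ is a critical point of $\mathcal{S}_{\hat{p}}$.

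The main obstacle is this reconstruction step, specifically producing the initial flux-normalizing diffeomorphism $\psi_0$ with the correct boundary behavior. The total-flux bookkeeping and the nowhere-vanishing, orientation-compatible nature of $\varrho_\zeta$ are what make the Moser construction applicable on the disc with boundary; one must also check that the resulting path $\psi_\zeta$ is genuinely valued in $\mathcal{D} = \mathrm{Diff}(D,D(\Gamma_0))$ and smooth in $\zeta$, which follows from smoothness and completeness of the flow $g_\zeta$ guaranteed by the tangential boundary condition. Everything else is a direct comparison of two identical equations through Lemma \ref{ele_lemma}.
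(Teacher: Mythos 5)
Your proposal is correct and follows essentially the same route as the paper: the easy direction is delegated to Lemma \ref{ele_lemma}, and the converse is handled by integrating the flow of $\bm{u}_\zeta$ (well defined by the tangential boundary condition), normalizing the advected flux $\varrho_{\zeta_1}$ to $dU\wedge dV$ via a Moser-type theorem on the disc with boundary (the paper cites Banyaga for exactly this), and composing to build $\psi_\zeta$ and $\hat{p}$. Your explicit observation that \eqref{vel_evo_ld} and \eqref{ele} coincide once $|\pi_\zeta|^2/n_\zeta = n_\zeta|\bm{u}_\zeta+\bm{N}_\zeta|^2$ is noted is a point the paper leaves implicit, but the argument is otherwise the same.
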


\begin{proof}
That the second bullet implies the first bullet is equivalent to Lemma \ref{ele_lemma}. So we only need to prove that the first bullet implies the second.

Assume that $\varrho_\zeta$, $\bm{u}_\zeta$, $p_\zeta$ solve the MHS spatial dynamics equations \eqref{pressure_evo_ld}-\eqref{vel_evo_ld}, with $\bm{u}_\zeta$ satisfying tangential boundary conditions. We must construct a critical point $\psi$ for the spatial dynamics action that recovers this solution in the appropriate sense. By tangency of $\bm{u}_\zeta$ to $\partial D$, the time-dependent flow map $\Phi_{\zeta_b,\zeta_a}:z(\zeta_a) \mapsto z(\zeta_b)$ for $\bm{u}_\zeta$ defines a $2$-parameter family of diffemorphisms $\Phi_{\zeta_b,\zeta_a}:D\rightarrow D$. Fix $\zeta_1\in\mathbb{R}$ and let $\Psi_\zeta = (\Phi_{\zeta,\zeta_1})^{-1}$. By the relationship between flows and Lie derivatives\cite{Abraham_2008}, and the spatial dynamics equation \eqref{flux_evo_ld}, the $2$-form $\Psi_{\zeta\,*}\varrho_\zeta$ on $D$ is time-independent:
\begin{align*}
\partial_\zeta(\Psi_{\zeta\,*}\varrho_\zeta) = \partial_{\zeta}(\Phi_{\zeta,\zeta_1}^*\varrho_\zeta) = \Phi_{\zeta,\zeta_1}^*(\partial_\zeta\varrho_\zeta + \mathcal{L}_{\bm{u}_\zeta}\varrho_\zeta) = 0.
\end{align*}
In light of the spatial dynamics equation \eqref{pressure_evo_ld}, the same calculation shows that the function $\Psi_{\zeta\,*}p_\zeta$ on $D$ is time independent.

Let $\hat{R} =\Psi_{\zeta\,*}\varrho_\zeta$ and $\hat{P} =\Psi_{\zeta\,*}p_\zeta $. Introduce the total toroidal flux $\Gamma_0 = \int_D\hat{R} = \int_D\varrho_\zeta$. Since $\Gamma_0 = \int_{D(\Gamma_0)}dU\wedge dV$, a result due to Bayaga\cite{Banyaga_1974} implies there is a diffeomorphism $\eta:D\rightarrow D(\Gamma_0)$ such that $\eta_*\hat{R} = dU\wedge dV$. The time-dependent diffeomorphism $\psi_\zeta = \eta\circ \Psi_\zeta : D\rightarrow D(\Gamma_0)$ satisfies
\begin{align*}
\psi_\zeta^*(dU\wedge dV) = \Psi_\zeta^*\eta^*(dU\wedge dV)=\Psi_\zeta^*\hat{R} = \varrho_\zeta.
\end{align*}
If $\hat{p}= \eta_*\hat{P}$, it also satisfies
\begin{align*}
\psi_\zeta^*\hat{p} = \Psi_\zeta^*\eta^*\hat{p} = \Psi_\zeta^*\hat{P}=p_\zeta.
\end{align*}
Finally, from the time derivative of $\psi_\zeta\circ \Phi_{\zeta,\zeta_1} = \eta$ we obtain, for each $z\in D$,
\begin{align*}
0 = \partial_\zeta\psi_\zeta(\Phi_{\zeta,\zeta_1}(z)) + D\psi_\zeta(\Phi_{\zeta,\zeta_1}(z))\,\partial_\zeta\Phi_{\zeta,\zeta_1}(z) =  \partial_\zeta\psi_\zeta(\Phi_{\zeta,\zeta_1}(z)) +D\psi_\zeta(\Phi_{\zeta,\zeta_1}(z))\,\bm{u}_\zeta(\Phi_{\zeta,\zeta_1}(z)),
\end{align*}
which is equivalent to $\bm{u}_\zeta = -[D\psi_\zeta]^{-1}\partial_\zeta\psi_\zeta$. Since $\bm{u}_\zeta$ satisfies the spatial dynamics equation \eqref{vel_evo_ld}, we conclude that $\psi:\zeta\mapsto \psi_\zeta$ is a critical point for $\mathcal{S}_{\hat{p}}$.

\end{proof}

\section{Conservation laws via Noether's theorem\label{cons_noether}}
Section \ref{variational_formulation} showed that the spatial dynamics formulation of MHS derived in Section \ref{sec:direct} comprises a variational dynamical system. Noether's theorem therefore guarantees that every symmetry of the spatial dynamics action \eqref{spatial_dynamics_action} corresponds to a conservation law for the spatial dynamics equations \eqref{pressure_evo_ld}-\eqref{vel_evo_ld}. Two of these conservation laws warrant special attention.

Time translation invariance of $\mathcal{S}_{\hat{p}}$ implies conservation of "energy." Since $\zeta$ is not physical time, the physical interpretation of this "energy" is initially unclear. 
Suppose that all geometric quantities associated with the hypersurface foliation, namely $h_\zeta$, $\bm{N}_\zeta$, $N_\zeta$, and $\omega_\zeta$, are $\zeta$-independent. This can always be arranged when the spatial domain $Q$ is a volume of revolution. Then the spatial dynamics action is time-independent. Straightforward application of Noether's theorem implies that the associated conserved quantity is
\begin{align}
E(\bm{u},\varrho,p) = \int_D\left[\frac{1}{2}\left(\frac{\varrho}{\omega}\right)^2|\bm{u}|^2-\frac{1}{2}\left(\frac{\varrho}{\omega}\right)^2(|\bm{N}|^2 + N^2) + p\right]\omega.\label{energy_invariant}
\end{align}
Proving conservation of $E$ directly using the spatial dynamics equations \eqref{pressure_evo_ld}-\eqref{vel_evo_ld} presents an instructive exercise for interested readers. The three terms appearing here admit simple interpretations. By Lemma \ref{B_param_lemma}, the magnetic field may be written as the sum of a toroidal component $\bm{B}_T = (\varrho/\omega)\partial_\zeta$ and a poloidal component $\bm{B}_P = (\varrho/\omega)\bm{u}$. The first term in \eqref{energy_invariant} gives the hypersurface poloidal magnetic field energy. The second term gives (minus) the hypersurface toroidal magnetic field energy. The third term is the hypersurface internal energy. The difference between toroidal and poloidal magnetic field energies also appears in the variational principle for the Grad-Shafranov equation\cite{RWhite_2014} and its generalizations\cite{BKM_2020a,BKM_2020b}. Sign indefiniteness of $E$ foreshadows the discussions in Sections \ref{sec:fs_formulation} and \ref{sec:snakes}, where the well-known mixed elliptic-hyperbolic type of the MHS equations manifests itself as a normally-hyperbolic slow manifold\cite{Fenichel_1979,Lorenz_1986,Lorenz_1987,Lorenz_1992,Gorban_2004,MacKay_2004,Gorban_2018,Burby_Klotz_2020} for the spatial dynamics equations.

Rotation invariance of $\mathcal{S}_{\hat{p}}$ implies conservation of "angular momentum." We will not discuss this conservation law further because toroidal confinement devices generally do not admit poloidal rotation symmetries; toroidicity spoils poloidal rotation invariance. However, it would be interesting to determine the Noether conserved quantity associated with "quasisymmetry"\cite{Boozer_1983,NZ_1988,Burby_Qin_2013,Helander_2014,BKM_2020b,Rodriguez_2020}. While the asymptotic analysis of Garren and Boozer\cite{Garren_Boozer_1991} suggests that three-dimensional quasisymmetric MHS solutions cannot exist, more recent numerical studies by Landreman and Paul\cite{Landreman_Paul_2022} indicate that deviations from quasisymmetry can be extremely small. This, together with Noether's theorem, suggests the presence of an adiabatic invariant\cite{Kruskal_1962,Burby_Squire_2020} associated with quasisymmetry.

Invariance of $\mathcal{S}_{\hat{p}}$ under relabeling of particles implies conservation of "circulation".  Examples from non-dissipative continuum mechanics include the Vlasov-Poisson system, where the conserved circulation corresponds to the Poincar\'e integral invariant $\oint p\,dq$, and the ideal barotropic Euler equations, where the conserved circulation corresponds to ordinary fluid circulation. In the fluid context more generally, Cotter and Holm\cite{Cotter_Holm_2012} provide a detailed study the Noether conserved quantities associated with particle relabeling symmetry in the presence of advected quantities. The conserved "ciculation" for MHS spatial dynamics warrants careful consideration. 

In contrast to time translation symmetry, which invariably breaks in non-symmetric toroidal domains $Q$, the spatial dynamics action $\mathcal{S}_{\hat{p}}$ always enjoys invariance under a group of particle relabeling transformations. The group $\mathcal{G}$ comprises diffeomorphisms of label space $\eta:D(\Gamma_0)\rightarrow D(\Gamma_0)$ that preserve both area, $\eta^*(dU\wedge dV) = dU\wedge dV$, and reference pressure, $\eta^*\hat{p} = \hat{p}$. The transformation of a path $\psi$ by a relabeling transformation $\eta\in\mathcal{G}$ is $\psi_{\eta}:\zeta\mapsto \eta\circ \psi_\zeta$. The spatial dynamics action $\mathcal{S}_{\hat{p}}$ satisfies $\mathcal{S}_{\hat{p}}(\psi_\eta) = \mathcal{S}_{\hat{p}}(\psi)$ because relabeling does not change toroidal flux or hypersurface pressure,
\begin{align*}
(\eta\circ \psi_\zeta)^*(dU\wedge dV) = \psi_\zeta^*\eta^*(dU\wedge dV) = \psi_\zeta^*(dU\wedge dV),\quad (\eta\circ \psi_\zeta)^*\hat{p}  = \psi_\zeta^*\eta^*\hat{p} = \psi_\zeta^*\hat{p},
\end{align*}
nor does it change field line velocity,
\begin{align*}
-[D(\eta\circ\psi_\zeta)]^{-1}\partial_\zeta(\eta\circ\psi_\zeta) = -[D\psi_\zeta]^{-1}[D\eta]^{-1}(D\eta\,\partial_\zeta\psi_\zeta) = -[D\psi_\zeta]^{-1}\,\partial_\zeta\psi_\zeta.
\end{align*}
This explains why Noether's theorem must imply a conservation law related to the group $\mathcal{G}$.

Theorem \ref{relabeling_symmetry} below finds this conservation law. Its proof uses the basic argument underlying Noether's theorem to show that the conserved circulation associated with $\mathcal{G}$ corresponds circulation of $\pi_\zeta$ (c.f. \eqref{vel_evo_ld} and \eqref{vel_evo_ind}) around contours of constant hypersurface pressure $p_\zeta$. An equivalent result can be found without recourse to Noether's theorem as follows. Let $(\bm{B},p)$ denote an MHS solution on $\widetilde{Q}$ and suppose that $P_0$ is a regular value for $p$ so that the level set $\Sigma(P_0) = \{p = P_0\}$ is an embedded $2$-torus. The intersection of $\Sigma(P_0)$ with each disc $D_\zeta$ in the hypersurface foliation defines a simple closed curve $C_{\zeta}(P_0)\subset \widetilde{Q}$ that bounds a disc $D_{\zeta}(P_0)\subset D_\zeta$. The current that passes through $D_\zeta(P_0)$ is $I_\zeta(P_0) = \int_{D_\zeta(P_0)}\bm{J}\cdot d\bm{S}$, where $\bm{J} = \nabla\times \bm{B}$ denotes the current density. Finally, since $\bm{J}\cdot\nabla p = 0$ and $\nabla\cdot\bm{J} = 0$, the current passing through $D_{\zeta_1}(P_0)$ must equal the current passing through $D_{\zeta_2}(P_0)$, which is equivalent to Theorem \ref{relabeling_symmetry}. 

\begin{definition}
A reference pressure $\hat{p}:D(\Gamma_0)\rightarrow\mathbb{R}$ is \textbf{nondegenerate} if $\hat{p}$ is constant on $\partial D(\Gamma_0)$ and $d\hat{p}$ is nonzero everywhere except at a single point $a_0\in D(\Gamma_0)$. The point $a_0$ will be referred to as the \textbf{reference magnetic axis}.
\end{definition}

\begin{theorem}\label{relabeling_symmetry}
Assume that the reference pressure in the spatial dynamics action $\mathcal{S}_{\hat{p}}$ is nondegenerate. Let $P_0$ denote any regular value for $\hat{p}$ and let $c_\zeta(P_0) = \{(x,y)\mid p_\zeta(x,y) = P_0\}$ denote the isobar for $p_\zeta=\psi_\zeta^*\hat{p}$ with pressure $P_0$. Along any solution $\psi_\zeta$ of the Euler-Lagrange equations for $\mathcal{S}_{\hat{p}}$, and for any pair of times $\zeta_1\leq \zeta_2$, circulation around the isobar is conserved:
\begin{align}
\oint_{c_{\zeta_1}(P_0)}\pi_{\zeta_1} = \oint_{c_{\zeta_2}(P_0)}\pi_{\zeta_2}.
\end{align}
\end{theorem}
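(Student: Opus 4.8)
The plan is to prove the theorem by a Kelvin-type circulation argument, converting the statement about the moving contour $c_\zeta(P_0)$ into a statement about a fixed contour by transporting everything along the material flow of $\bm{u}_\zeta$. The structural input is that $\pi_\zeta$ obeys the transport equation \eqref{vel_evo_ld}, whose right-hand side is the sum of an exact $1$-form and the pressure forcing $dp_\zeta/n_\zeta$. Circulation will be annihilated by exactness in the first case and by level-set tangency in the second.

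First I would observe that isobars are materially advected by $\bm{u}_\zeta$. Since $p_\zeta$ solves the advection equation \eqref{pressure_evo_ld}, the flow $\Phi_{\zeta,\zeta_0}$ of $\bm{u}_\zeta$ (which preserves $D$ because $\bm{u}_\zeta$ is tangent to $\partial D$) satisfies $\Phi_{\zeta,\zeta_0}^* p_\zeta = p_{\zeta_0}$, so $\Phi_{\zeta,\zeta_0}$ carries $c_{\zeta_0}(P_0)$ diffeomorphically and orientation-preservingly onto $c_\zeta(P_0)$. Because $P_0$ is a regular value of $\hat p$ and $p_\zeta = \psi_\zeta^*\hat p$ with $\psi_\zeta$ a diffeomorphism, $P_0$ is a regular value of every $p_\zeta$, so each $c_\zeta(P_0)$ is a smooth simple closed curve (a single loop encircling the image of the reference axis, by nondegeneracy of $\hat p$). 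Rewriting the circulation as a fixed-contour integral then gives
\[
\oint_{c_\zeta(P_0)}\pi_\zeta = \oint_{c_{\zeta_0}(P_0)}\Phi_{\zeta,\zeta_0}^*\pi_\zeta.
\]

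Next I would differentiate in $\zeta$ under the (now fixed) contour integral using the standard transport identity $\tfrac{d}{d\zeta}\Phi_{\zeta,\zeta_0}^*\pi_\zeta = \Phi_{\zeta,\zeta_0}^*\big((\partial_\zeta + \mathcal{L}_{\bm{u}_\zeta})\pi_\zeta\big)$ together with the evolution equation \eqref{vel_evo_ld}. This produces two terms. The exact term $d(|\pi_\zeta|^2/n_\zeta + n_\zeta N_\zeta^2)$ pulls back to an exact form and integrates to zero around the closed curve by Stokes. For the pressure term I would use $\Phi_{\zeta,\zeta_0}^*(dp_\zeta/n_\zeta) = d(\Phi_{\zeta,\zeta_0}^* p_\zeta)/\Phi_{\zeta,\zeta_0}^* n_\zeta = dp_{\zeta_0}/(\Phi_{\zeta,\zeta_0}^* n_\zeta)$; restricted to $c_{\zeta_0}(P_0) = \{p_{\zeta_0} = P_0\}$ the factor $dp_{\zeta_0}$ annihilates every tangent vector to its own level set, so this term integrates to zero as well. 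Hence $\tfrac{d}{d\zeta}\oint_{c_\zeta(P_0)}\pi_\zeta = 0$, which yields the claimed equality of circulations at $\zeta_1$ and $\zeta_2$.

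The computations are routine, and the delicate points are bookkeeping rather than analysis. The one genuine conceptual step is recognizing that material transport by $\bm{u}_\zeta$ (and no other flow) sends isobars to isobars: this is exactly where the pressure advection equation \eqref{pressure_evo_ld} enters, and it is what makes the pressure forcing harmless once restricted to a $p_\zeta$-level set. Secondary care is needed to justify that $c_\zeta(P_0)$ is a smooth, consistently oriented simple closed curve for all $\zeta$, which follows from regularity of $P_0$ and the diffeomorphism property of $\psi_\zeta$. Finally, I would remark that this direct argument is the Noether law attached to the relabeling group $\mathcal{G}$: choosing the infinitesimal generator to be a divergence-free vector field on $D(\Gamma_0)$ that is tangent to $\{\hat p = P_0\}$ and supported near that loop reproduces $\oint_{c_\zeta(P_0)}\pi_\zeta$ as the corresponding conserved momentum, so the two derivations agree.
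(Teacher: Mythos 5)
Your proof is correct, but it takes a genuinely different route from the paper's. The paper proves Theorem \ref{relabeling_symmetry} as a literal instance of Noether's theorem for the relabeling group $\mathcal{G}$: it characterizes the Lie algebra $\mathfrak{g}$ as the vector fields generated by stream functions of the form $H(\hat{p})$, differentiates the invariance $\mathcal{S}_{\hat{p}}(\psi_\eta)=\mathcal{S}_{\hat{p}}(\psi)$ along $\Xi\in\mathfrak{g}$ to obtain the endpoint identity $0=\int_D\pi_\zeta(\bm{\xi}_\zeta)\,\varrho_\zeta\big|_{\zeta_1}^{\zeta_2}$, and then specializes $H$ to the indicator function of the pressure range enclosed by the isobar so that Stokes' theorem converts that identity into equality of circulations. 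Your argument is instead a direct Kelvin-type computation on the Eulerian equations: advect the contour with the flow of $\bm{u}_\zeta$ (legitimate precisely because \eqref{pressure_evo_ld} makes isobars material curves), differentiate the resulting fixed-contour integral with the standard transport identity, and kill the two forcing terms in \eqref{vel_evo_ld} by exactness and by level-set tangency of $dp_{\zeta_0}$, respectively; this is exactly the Kelvin--Noether mechanism of Holm--Marsden--Ratiu that the paper cites after the theorem as an alternative derivation but does not carry out. What your approach buys: it is more elementary and self-contained (only the transport equations and Cartan calculus enter), it avoids the paper's mild liberty of inserting a non-smooth indicator function into a Noether variation (which strictly requires an approximation step), and it makes transparent why isobars are the distinguished contours --- the non-exact forcing $dp_\zeta/n_\zeta$ vanishes when restricted to its own level set. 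What the paper's approach buys: it exhibits the conservation law as the Noether charge of the relabeling symmetry, which is the organizing theme of that section, and the intermediate identity valid for arbitrary $H$ encodes a whole family of conserved functionals of which the isobar circulation is one specialization. Your secondary bookkeeping points (regularity of $P_0$ for every $p_\zeta$ via the diffeomorphism $\psi_\zeta$, orientation consistency under the flow) are handled adequately, so the two proofs are complementary rather than conflicting.
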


\begin{proof}
The Lie algebra $\mathfrak{g}$ for $\mathcal{G}$ contains vector fields $\Xi$ on $D(\Gamma_0)$ such that $\mathcal{L}_{\Xi}(dU\wedge dV) = 0$ and $\mathcal{L}_{\Xi}\hat{p} = 0$. The first condition implies there is a function $h$ on $D(\Gamma_0)$ such that $\iota_{\Xi}(dU\wedge dV) = dh$. The second condition implies that $dh\wedge d\hat{p} = 0$. By non-degeneracy of $\hat{p}$, the last condition implies there is a smooth single-variable function $H:\mathbb{R}\rightarrow\mathbb{R}$ such that $h = H\circ \hat{p}$. Overall we have shown that $\Xi$ is in $\mathfrak{g}$ if and only if there is a smooth single-variable function $H$ such that
\begin{align*}
\Xi = \partial_VH(\hat{p})\,\partial_U - \partial_UH(\hat{p})\,\partial_V.
\end{align*}

As shown previously, if $\eta\in \mathcal{G}$ is any relabeling transformation and $\psi$ is any critical point for $\mathcal{S}_{\hat{p}}$ then $\mathcal{S}_{\hat{p}}(\psi_\eta) = \mathcal{S}_{\hat{p}}(\psi)$. Differentiating this formula in $\eta$ along the direction $\Xi\in\mathfrak{g}$ and using the fact that $\psi$ satisfies the Euler-Lagrange equations implies
\begin{align}
0 = \int_D \pi_\zeta(\bm{\xi}_\zeta)\,\varrho_\zeta\bigg|_{\zeta_1}^{\zeta_2},\quad \bm{\xi}_\zeta = -\psi_\zeta^*\Xi.\label{proto_noether}
\end{align}
The integral $I_\zeta = \int_D \pi_\zeta(\bm{\xi}_\zeta)\,\varrho_\zeta$ simplifies according to
\begin{align*}
I_\zeta = \int_D\pi_\zeta\wedge (\iota_{\bm{\xi}_\zeta}\varrho_\zeta) = -\int_D\pi_\zeta\wedge \psi_\zeta^*(\iota_{\Xi}[dU\wedge dV]) = -\int_D\pi_\zeta\wedge \psi_\zeta^*dh = \int_{\partial D}(H\circ p_\zeta) \pi_\zeta - \int_D (H\circ p_\zeta)\,d\pi_\zeta.
\end{align*}

Now consider the level set $\gamma = \{\hat{p} = P_0\}$ in $D(\Gamma_0)$. The reference magnetic axis cannot be contained in $\gamma$ because $P_0$ is a regular value for $\hat{p}$. On the interior disc $D_\gamma(\Gamma_0)$ spanned by $\gamma$ the reference pressure assumes a minimum value $p^{\text{in}}_{\text{min}}$ and a maximum value $p^{\text{in}}_{\text{max}}$. On the complement $D(\Gamma_0) - D_\gamma(\Gamma_0)$, the reference pressure assumes a minimum value $p^{\text{out}}_{\text{min}}$ and a maximum value $p^{\text{out}}_{\text{max}}$. By nondegeneracy of $\hat{p}$, the open intervals $(p^{\text{in}}_{\text{min}},p^{\text{in}}_{\text{max}})$ and $(p^{\text{out}}_{\text{min}},p^{\text{out}}_{\text{max}})$ must be disjoint. If $H$ is the indicator function for the set $(p^{\text{in}}_{\text{min}},p^{\text{in}}_{\text{max}})\subset\mathbb{R}$ then $H\circ p_\zeta$ is the indicator function for the interior disc $D_\zeta(P_0)$ spanned by the level set $c_\zeta(P_0) = \{(x,y)\mid p_\zeta(x,y) = P_0\}$. The equation \eqref{proto_noether} therefore implies
\begin{align*}
0 = \int_{\partial D}(H\circ p_\zeta) \pi_\zeta\bigg|_{\zeta_1}^{\zeta_2} - \int_D (H\circ p_\zeta)\,d\pi_\zeta\bigg|_{\zeta_1}^{\zeta_2} = -\int_{D_\zeta(P_0)}d\pi_\zeta\bigg|_{\zeta_1}^{\zeta_2} = -\int_{c_\zeta(P_0)}\pi_\zeta\bigg|_{\zeta_1}^{\zeta_2},
\end{align*}
where the last equality follows from Stokes theorem.
\end{proof}

Conservation of circulation around an isobar for MHS spatial dynamics fits into a more general pattern that applies to all variational dynamical systems of Euler-Poincar\'e type\cite{Holm_EP_1998}. In fact, Theorem \ref{relabeling_symmetry} follows directly from the so-called Klevin-Noether theorem, first formulated by Holm, Marsden, and Ratiu\cite{Holm_EP_1998}, Theorem 6.2. That theorem equates the rate of change of circulation $I$ with the closed loop integral of the $1$-form density $\varrho^{-1}\delta\ell/\delta a\diamond a$. Here $\ell$ denotes the Lagrangian and $a$ denotes an advected parameter. For MHS spatial dynamics, the loop integral vanishes when the loop is an isobar.


\section{MHS spatial dynamics as a Lie-Poisson system\label{sec:lie_poisson}}
The variational formulation of MHS spatial dynamics from Section \ref{variational_formulation} suggests there should be a concomitant Hamiltonian formulation. In fact, the spatial dynamics variational principle fits into the standard paradigm from classical mechanics, where Lagrangian formulations correspond to Hamiltonian formulations by way of the Legendre transform. We will not discuss the Legendre transform here. Instead, we will construct an alternative Hamiltonian formulation using Lie-theoretic methods. Marsden, Ratiu, and Weinstein\cite{Marsden_SD_1984} developed the abstract theory that relates the Hamiltonian formulation developed below with the standard method based on Legendre transforms. Section \ref{sec:snakes} illustrates one of the benefits of the Lie-theoretic approach, where it is used to derive finite-dimensional Hamiltonian reductions of MHS spatial dynamics.

\begin{definition}\label{p_def}
The \textbf{Lie algebra} $\mathfrak{p}$ underlying the Hamiltonian picture comprises tuples $X=(\bm{U},S,T)\in\mathfrak{p}$, where $S$ and $T$ are smooth functions on $D$ and $\bm{U} = U^x\,\partial_x + U^y\,\partial_y$ is a vector field on $D$ that satisfies the boundary condition $x\,U^x + y\,U^y = 0$ on $\partial D$. The Lie bracket on $\mathfrak{p}$ is given by the semidirect product formula
\begin{align}
[X_1,X_2] = -([\bm{U}_1,\bm{U}_2],\mathcal{L}_{\bm{U}_1}S_2 - \mathcal{L}_{\bm{U}_2}S_1,\mathcal{L}_{\bm{U}_1}T_2 - \mathcal{L}_{\bm{U}_2}T_1).
\end{align}
\end{definition}

Phase space is $\mathfrak{p}^*$, the linear functionals $X^*:\mathfrak{p}^*\rightarrow\mathbb{R}$ over $\mathfrak{p}$. Each linear functional comprises a tuple $X^* = (\bm{U}^*,S^*,T^*)$, where $S^*$ and $T^*$ are distributional $2$-forms on $D$,
\[
S^* = \sigma\, dx\wedge dy,\quad T^* = \tau \,dx\wedge dy,
\]
and $\bm{U}^*$ is a distributional $1$-form density on $D$ that satisfies tangential boundary conditions,
\begin{align}
\bm{U}^* = (U_x\,dx + U_y\,dy)\otimes(dx\wedge dy),\quad x\,U_x + y\,U_y = 0\text{ on }\partial D.\label{one_form_density_wbc}
\end{align}
The duality pairing between $\mathfrak{p}$ and $\mathfrak{p}^*$ is
\begin{align}
\langle X^*,X\rangle =\int_D \bm{U}^*\cdot \bm{U} + \int_D S^*\,S + \int_D T^*\,T=\int_D  U_i\,U^i\,dx\,dy + \int_D \sigma\,S\,dx\,dy + \int_D \tau\,T\,dx\,dy.\nonumber
\end{align}
The Poisson bracket on $\mathfrak{p}^*$ is given by the famous Lie-Poisson formula\cite{Abraham_2008},
\begin{align}
\{F,G\}(X^*) = \left\langle X^*,\left[\frac{\delta F}{\delta X^*},\frac{\delta G}{\delta X^*}\right] \right\rangle,\quad  F,G:\mathfrak{p}^*\rightarrow\mathbb{R}.\label{LP_bracket_def}
\end{align}
Here $\delta F/\delta X^*: \mathfrak{p}^*\rightarrow\mathfrak{p}$ denotes functional differentiation, as defined by the directional derivative formula
\begin{align}
\frac{d}{d\lambda}\bigg|_0 F(X^* + \lambda\,\delta X^*) = \left\langle \frac{\delta F}{\delta X^*}(X^*),\delta X^*\right\rangle,\quad X^*,\delta X^*\in\mathfrak{p}^*.
\end{align}

We aim to recast MHS spatial dynamics as a Hamiltonian system on $\mathfrak{p}^*$, which clearly requires introducing new dependent variables. From the general theory developed by Marsden, Ratiu, and Weinstein\cite{Marsden_SD_1984}, we anticipate that $S^*$ and $T^*$ should correspond to advected quantities, while $\bm{U}^*$ should represent the momentum density arising from Legendre transformation.  Changing variables according to
\begin{align}
\bm{U}^*_\zeta = \left(\frac{\varrho_\zeta}{\omega_\zeta}\right)(\bm{u}_\zeta + \bm{N}_\zeta)^\flat\otimes \varrho_\zeta,\quad S^*_\zeta = \varrho_\zeta,\quad T^*_\zeta = p_\zeta\,\varrho_\zeta,\label{change_of_variables}
\end{align}
transforms the spatial dynamics equations \eqref{pressure_evo_ld}-\eqref{vel_evo_ld} into a from that resembles evolution equations on $\mathfrak{p}^*$ of the appropriate type. However, the $1$-form density $\bm{U}^*_\zeta$ does not necessarily satisfy the boundary conditions specified in \eqref{one_form_density_wbc}. This technical annoyance motivates restricting attention to a specific class of toroidal coordinates $(\zeta,x,y)$ where $\bm{U}^*$ does satisfy tangential boundary conditions. (The essential problem is not the form of boundary conditions on $\bm{U}^*_\zeta$ \emph{per se}, but the fact that the boundary conditions are time-dependent in general.)

The Hamiltonian formulation of MHS spatial dynamics presented here will require toroidal coordinates $(\zeta,x,y)$ that are both isothermal and boundary-compatible.
\begin{definition}
A system of toroidal coordinates $(\zeta,x,y)$ on $\widetilde{Q}$ is \textbf{boundary-compatible} if, for each point $\widetilde{q}\in\partial\widetilde{Q}$, the $\zeta$-component of the unit normal vector $\bm{n}(\widetilde{q})$ vanishes, $d\zeta(\bm{n}) = 0$. The coordinates $(\zeta,x,y)$ are \textbf{isothermal} if the hypersurface metric $h_\zeta$ is conformal to the standard flat metric $dx^2 + dy^2$, i.e. there is a nowhere vanishing function $f_\zeta$ such that $h_\zeta = f_\zeta\,(dx^2 + dy^2)$.
\end{definition}
\noindent A subsequent publication will use uniformization of surfaces to demonstrate that isothermal boundary-compatible toroidal coordinates may always be found. We will not elaborate further here. However, the following technical Lemma establishes that isothermal boundary-compatible toroidal coordinates always produce a $\bm{U}^*_\zeta$ that satisfies tangential boundary conditions.

\begin{lemma}\label{bc_lemma}
Let $(\zeta,x,y)$ denote isothermal boundary-compatible toroidal coordinates on $\widetilde{Q}$. Let $\bm{u}_\zeta$ be a vector field on $D$ tangent to $\partial D$. If $\bm{r} = x\,\partial_x + y\,\partial_y$ and $\bm{N}_\zeta$ denotes the shift vector associated with $(\zeta,x,y)$ then 
\begin{align*}
\bm{r}\cdot (\bm{u}_\zeta + \bm{N}_\zeta) = 0\text{ on }\partial D,
\end{align*}
where $\cdot$ denotes the standard dot product in $\mathbb{R}^2$.
\end{lemma}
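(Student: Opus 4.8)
The plan is to reduce the claim to a statement about the shift vector alone, and then extract that statement from the boundary-compatibility condition via an ADM-type decomposition of $\partial_\zeta$. Since $\bm{u}_\zeta$ is tangent to $\partial D$, the computation already carried out in the proof of Theorem \ref{sdf} shows $x\,u^x_\zeta + y\,u^y_\zeta = 0$ on $\partial D$, i.e. $\bm{r}\cdot\bm{u}_\zeta = 0$. Hence $\bm{r}\cdot(\bm{u}_\zeta + \bm{N}_\zeta) = \bm{r}\cdot\bm{N}_\zeta$ on $\partial D$, and it suffices to prove $\bm{r}\cdot\bm{N}_\zeta = 0$ there.

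The key geometric observations come next. First, because $x$ and $y$ are coordinates that do not depend on $\zeta$, the coordinate field satisfies $\partial_\zeta(x^2+y^2)=0$, so $\partial_\zeta$ is tangent to $\partial\widetilde{Q} = \{x^2+y^2=1\}$; consequently $\widetilde{g}(\partial_\zeta,\bm{n})=0$, where $\bm{n}$ is the outward unit normal to $\partial\widetilde{Q}$. Second, I would invoke the decomposition $\partial_\zeta = N_\zeta\,\widehat{\bm{n}}_\zeta + I_{\zeta *}\bm{N}_\zeta$, where $\widehat{\bm{n}}_\zeta$ is the $\widetilde{g}$-unit normal to the slice $D_\zeta$ and $I_{\zeta *}\bm{N}_\zeta$ is the pushforward of the shift vector. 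One verifies directly from $\iota_{\bm{N}_\zeta}h_\zeta = \mathcal{N}_\zeta = I_\zeta^*(\iota_{\partial_\zeta}\widetilde{g})$ that this pushforward is exactly the $D_\zeta$-tangential part of $\partial_\zeta$. Boundary-compatibility, $d\zeta(\bm{n})=0$, says precisely that $\bm{n}$ is tangent to the slices $D_\zeta$, so $\widetilde{g}(\widehat{\bm{n}}_\zeta,\bm{n})=0$.

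Pairing the decomposition against $\bm{n}$ then collapses everything: $0 = \widetilde{g}(\partial_\zeta,\bm{n}) = N_\zeta\,\widetilde{g}(\widehat{\bm{n}}_\zeta,\bm{n}) + \widetilde{g}(I_{\zeta *}\bm{N}_\zeta,\bm{n}) = \widetilde{g}(I_{\zeta *}\bm{N}_\zeta,\bm{n})$. Pulling this back along $I_\zeta$ gives $h_\zeta(\bm{N}_\zeta,\bm{n}_D)=0$, where $\bm{n}_D$ satisfies $\bm{n} = I_{\zeta *}\bm{n}_D$ and is the $h_\zeta$-normal to the curve $\partial D$ (it is $h_\zeta$-normal because $\bm{n}$ is $\widetilde{g}$-orthogonal to $T\partial D_\zeta \subset T\partial\widetilde{Q}$). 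In the two-dimensional slice this forces $\bm{N}_\zeta$ to be $h_\zeta$-tangent to $\partial D$. Finally the isothermal hypothesis $h_\zeta = f_\zeta(dx^2+dy^2)$ enters: conformality to the flat metric means $h_\zeta$-orthogonality coincides with Euclidean orthogonality, so $\bm{n}_D$ is parallel to the Euclidean radial field $\bm{r}$; hence $h_\zeta$-tangency of $\bm{N}_\zeta$ to $\partial D$ is the same as $\bm{r}\cdot\bm{N}_\zeta = 0$, completing the argument.

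I expect the main obstacle to be the bookkeeping between the three arenas — the ambient $\widetilde{Q}$, the abstract disc $D$, and its boundary curve $\partial D$ — and in particular making the pullback step identify $\bm{n}_D$ as the genuine $h_\zeta$-normal to $\partial D$ rather than merely some boundary vector. The conceptual subtlety worth flagging is that boundary-compatibility is naturally a condition on a contravariant metric component (the $\zeta$-component of $\bm{n}$, equivalently $\widetilde{g}^{\zeta i}$), whereas the shift vector is assembled from the covariant components $\widetilde{g}_{\zeta i} = \mathcal{N}_{\zeta\,i}$; the isothermal hypothesis is exactly what permits trading $h_\zeta$-orthogonality for the Euclidean dot product in the last line. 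As an alternative one can bypass the geometry by invoking the standard ADM identity $\widetilde{g}^{\zeta i} = -N^i_\zeta/N_\zeta^2$ and reading $\bm{r}\cdot\bm{N}_\zeta = 0$ directly off $d\zeta(\bm{n}) \propto x\,\widetilde{g}^{\zeta x} + y\,\widetilde{g}^{\zeta y} = 0$.
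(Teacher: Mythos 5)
Your proof is correct and takes essentially the same route as the paper's: both reduce the claim to $\bm{r}\cdot\bm{N}_\zeta = 0$, exploit $\widetilde{g}(\partial_\zeta,\bm{n})=0$ on $\partial\widetilde{Q}$ together with boundary-compatibility (which lets $\bm{n}$ be regarded as a slice-tangent field $\bm{n}_D$) to obtain $h_\zeta(\bm{N}_\zeta,\bm{n}_D)=0$, and then use the isothermal hypothesis to trade $h_\zeta$-orthogonality for the Euclidean dot product and identify $\bm{n}_D$ with the radial field $\bm{r}$. The only difference is presentational: you route the key identity through the explicit ADM splitting $\partial_\zeta = N_\zeta\,\widehat{\bm{n}}_\zeta + I_{\zeta*}\bm{N}_\zeta$ before pairing with $\bm{n}$, whereas the paper pulls back $\widetilde{g}(\bm{n},\partial_\zeta)=0$ along $I_\zeta$ directly from the definition of the shift form --- the resulting computation is identical.
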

\begin{proof}
It is clear that $\bm{r}\cdot \bm{u}_\zeta = 0$ because $\bm{u}_\zeta$ is tangent to $\partial D$. So it is enough to show that $\bm{r}\cdot \bm{N}_\zeta = 0$ on $\partial D$. 

For any system of toroidal coordinates we have $\widetilde{g}(\bm{n},\partial_\zeta) = 0$ along $\partial\widetilde{Q}$, where $\widetilde{g}$ denotes the metric tensor on $\widetilde{Q}$. But because $(\zeta,x,y)$ is boundary compatible there is a vector field $\bm{n}_\zeta = n^x_\zeta\,\partial_x + n^y\,\partial_y$ defined along $\partial D$ such that $\bm{n}(\zeta,x,y) = n^x_\zeta(x,y)\,\partial_x + n^y_\zeta(x,y)\,\partial_x$. Therefore pulling back $\widetilde{g}(\bm{n},\partial_\zeta) = 0$ along the poloidal inclusion map $I_\zeta$ implies $h_\zeta(\bm{n}_\zeta,\bm{N}_\zeta) = 0$, where $h_\zeta$ is the hypersurface metric. Since $h_\zeta= f_\zeta\,(dx^2 + dy^2)$ this in turn implies $\bm{n}_\zeta \cdot \bm{N}_\zeta = 0$. The proof will be complete if $\bm{n}_\zeta$ is parallel to $\bm{r}$, which we show next.

If $\bm{v}$ is any vector field tangent to $\partial\widetilde{Q}$ then $\widetilde{g}(\bm{v},\bm{n}) = 0$. Let $\bm{v}_\zeta(x,y) = v^x_\zeta(x,y)\,\partial_x + v^y_\zeta(x,y)\,\partial_y$ be any vector field on $D$ tangent to $\partial D$ and let $\bm{v}(\zeta,x,y) = v^x_\zeta(x,y)\,\partial_x + v^y_\zeta(x,y)\,\partial_y$ denote the corresponding vector field on $\widetilde{Q}$ that is tangent to the hypersurface foliation and $\partial \widetilde{Q}$. Pulling back $\widetilde{g}(\bm{v},\bm{n}) = 0$ along $I_\zeta$ implies $h_\zeta(\bm{v}_\zeta,\bm{n}_\zeta) = f_\zeta\, \bm{v}_\zeta\cdot \bm{n}_\zeta = 0$. Since $f_\zeta$ is nowhere-vanishing we deduce that $\bm{n}_\zeta$ must be orthogonal to $\partial D$ with respect to the standard dot product on $\mathbb{R}^2$. This implies that $\bm{n}_\zeta$ is parallel to $\bm{r}$ along $\partial D$, as desired.
\end{proof}

We may now establish the precise sense in which the MHS spatial dynamics equations \eqref{pressure_evo_ld}-\eqref{vel_evo_ld} comprise a Hamiltonian system on $\mathfrak{p}^*$.

\begin{theorem}\label{hamiltonian_formulation}
Assume $\widetilde{Q}$ is equipped with isothermal boundary-compatible toroidal coordinates $(\zeta,x,y)$. Let $\varrho_\zeta$ be an evolving nowhere-vanishing $2$-form on $D$, $\bm{u}_\zeta$ an evolving vector field on $D$, and $p_\zeta$ and evolving function on $D$. The following are equivalent.
\begin{itemize}
\item $\varrho_\zeta$, $\bm{u}_\zeta$, and $p_\zeta$ solve the spatial dynamics equations \eqref{pressure_evo_ld}-\eqref{vel_evo_ld}, with $\bm{u}_\zeta$ obeying tangential boundary conditions.
\item $\varrho_\zeta = S^*_\zeta$, $\bm{u}_\zeta =\frac{\omega_\zeta}{S^*_\zeta}(\bm{U}^*_\zeta/S^*_\zeta)^{\sharp} -\bm{N}_\zeta  $, and $p_\zeta = T^*_\zeta/S^*_\zeta$, where $(\bm{U}_\zeta^*,S^*_\zeta,T^*_\zeta)=X^*_\zeta$ solves Hamilton's equations on $\mathfrak{p}^*$ with time-dependent Hamiltonian 
\begin{align}
\mathcal{H}_\zeta(X^*)=  \int_D \left[\frac{1}{2}\left|\frac{\bm{U}^*}{S^*}\right|^2 + \frac{T^*}{S^*} - \frac{1}{2}\left(\frac{S^*}{\omega_\zeta}\right)^2N_\zeta^2\right]\omega_\zeta - \int_D\bm{U}^*\cdot \bm{N}_\zeta.\label{MHS_hamiltonian_general}
\end{align}
Here $|\cdot|$ denotes the inner product between $1$-forms defined by the hypersurface metric $h_\zeta$.
\end{itemize}
\end{theorem}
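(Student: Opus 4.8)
The change of variables \eqref{change_of_variables} is an explicit bijection between $(\varrho_\zeta,\bm{u}_\zeta,p_\zeta)$ and $X^*_\zeta=(\bm{U}^*_\zeta,S^*_\zeta,T^*_\zeta)$, with inverse given precisely by the formulas $\varrho_\zeta=S^*_\zeta$, $\bm{u}_\zeta=\frac{\omega_\zeta}{S^*_\zeta}(\bm{U}^*_\zeta/S^*_\zeta)^\sharp-\bm{N}_\zeta$, $p_\zeta=T^*_\zeta/S^*_\zeta$ recorded in the second bullet. It therefore suffices to show that this bijection carries the spatial dynamics equations \eqref{pressure_evo_ld}--\eqref{vel_evo_ld} term by term onto Hamilton's equations for $\mathcal{H}_\zeta$, and conversely. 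My plan is to (i) write the Lie-Poisson equations on $\mathfrak{p}^*$ in fully explicit form, (ii) compute the three functional derivatives of $\mathcal{H}_\zeta$, and (iii) substitute and match. For step (i) I would first compute the coadjoint representation $\mathrm{ad}^*$ dual to the semidirect-product bracket of Definition \ref{p_def}. Writing $\xi=(\bm{V},A,B)=\delta\mathcal{H}/\delta X^*$ and pairing $\mathrm{ad}^*_\xi X^*$ against an arbitrary $Y=(\bm{U},S,T)\in\mathfrak{p}$ via \eqref{LP_bracket_def}, I would integrate by parts in each slot — every boundary contribution dropping because $\bm{V}$ and $\bm{U}$ are tangent to $\partial D$ — to read off components. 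Since the bracket acts by Lie derivative on the $S,T$ slots, the system $\partial_\zeta X^*=-\mathrm{ad}^*_{\delta\mathcal{H}/\delta X^*}X^*$ decouples into two advection equations $\partial_\zeta S^*+\mathcal{L}_{\bm{V}}S^*=0$ and $\partial_\zeta T^*+\mathcal{L}_{\bm{V}}T^*=0$, together with an Euler--Poincar\'e-type momentum equation $\partial_\zeta\bm{U}^*+\mathcal{L}_{\bm{V}}\bm{U}^*=-A\diamond S^*-B\diamond T^*$, where the diamond products are fixed by $\langle A\diamond S^*,\bm{U}\rangle=\int_D S^*\,\mathcal{L}_{\bm{U}}A$ and similarly for $B\diamond T^*$.

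For step (ii), the decisive computation is the velocity $\bm{V}=\delta\mathcal{H}/\delta\bm{U}^*$. Differentiating the two $\bm{U}^*$-dependent terms of \eqref{MHS_hamiltonian_general} and substituting \eqref{change_of_variables}, I expect to find $\bm{V}=\bm{u}_\zeta$ exactly: the quadratic term contributes $\bm{u}_\zeta+\bm{N}_\zeta$ while the linear term $-\int_D\bm{U}^*\cdot\bm{N}_\zeta$ subtracts $\bm{N}_\zeta$. With $\bm{V}=\bm{u}_\zeta$ the advection equations become $\partial_\zeta\varrho_\zeta+\mathcal{L}_{\bm{u}_\zeta}\varrho_\zeta=0$, which is \eqref{flux_evo_ld}, and $\partial_\zeta(p_\zeta\varrho_\zeta)+\mathcal{L}_{\bm{u}_\zeta}(p_\zeta\varrho_\zeta)=0$, which combined with flux advection collapses by the nowhere-vanishing property of $\varrho_\zeta$ to \eqref{pressure_evo_ld}. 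The remaining derivatives I would compute to be $B=\delta\mathcal{H}/\delta T^*=\omega_\zeta/S^*=1/n_\zeta$ and $A=\delta\mathcal{H}/\delta S^*=-n_\zeta(|\bm{u}_\zeta+\bm{N}_\zeta|^2+N_\zeta^2)-p_\zeta/n_\zeta$.

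For step (iii), I would expand the momentum equation. Writing $\bm{U}^*=\pi_\zeta\otimes\varrho_\zeta$ with $\pi_\zeta=n_\zeta(\bm{u}_\zeta+\bm{N}_\zeta)^\flat$, the Leibniz rule for $\mathcal{L}_{\bm{u}_\zeta}$ together with flux advection cancels the $\partial_\zeta\varrho_\zeta+\mathcal{L}_{\bm{u}_\zeta}\varrho_\zeta$ factor and reduces $\partial_\zeta\bm{U}^*+\mathcal{L}_{\bm{u}_\zeta}\bm{U}^*$ to $(\partial_\zeta\pi_\zeta+\mathcal{L}_{\bm{u}_\zeta}\pi_\zeta)\otimes\varrho_\zeta$. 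Dividing out the common flux density yields $\partial_\zeta\pi_\zeta+\mathcal{L}_{\bm{u}_\zeta}\pi_\zeta=-dA-p_\zeta\,dB$, and the pressure contributions recombine through the identity $d(p_\zeta/n_\zeta)-p_\zeta\,d(1/n_\zeta)=dp_\zeta/n_\zeta$ to produce exactly the right-hand side of \eqref{vel_evo_ld}, using $|\pi_\zeta|^2/n_\zeta=n_\zeta|\bm{u}_\zeta+\bm{N}_\zeta|^2$. Finally, Lemma \ref{bc_lemma} guarantees $\bm{r}\cdot(\bm{u}_\zeta+\bm{N}_\zeta)=0$ on $\partial D$, which is precisely the statement that $\bm{U}^*_\zeta$ obeys the tangential boundary condition \eqref{one_form_density_wbc} defining $\mathfrak{p}^*$; this is where the isothermal, boundary-compatible hypothesis enters, certifying that the change of variables genuinely lands in phase space. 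Reading the same chain of equalities in reverse, with the explicit inverse of \eqref{change_of_variables}, gives the converse implication.

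The \emph{main obstacle} I anticipate is bookkeeping rather than conceptual: pinning down the correct signs in $\mathrm{ad}^*$ and in the diamond operator, since the overall minus sign in the bracket of Definition \ref{p_def} propagates through the whole derivation, and handling the $1$-form density $\bm{U}^*=\pi_\zeta\otimes\varrho_\zeta$ consistently so that the tensorial Leibniz rule and the division by the flux density are legitimate operations. A secondary point requiring care is the vanishing of every boundary term generated by integration by parts, which rests on tangency of $\bm{u}_\zeta$ to $\partial D$ and, in the $\bm{U}^*$ slot, on the compatibility furnished by Lemma \ref{bc_lemma}.
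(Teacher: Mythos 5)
Your proposal is correct and follows essentially the same route as the paper's proof: both derive the explicit Lie--Poisson equations on $\mathfrak{p}^*$ by pairing against arbitrary elements of $\mathfrak{p}$ and integrating by parts (your $\mathrm{ad}^*$/diamond formulation is the same computation in different notation, with $A\diamond S^* = dA\otimes S^*$ matching Eq.~\eqref{ham_momentum}), compute the same three functional derivatives of $\mathcal{H}_\zeta$, pass through the change of variables \eqref{change_of_variables} using the Leibniz rule and nowhere-vanishing of $\varrho_\zeta$, and invoke Lemma \ref{bc_lemma} to handle the tangential boundary conditions. Your cancellation identity $d(p_\zeta/n_\zeta)-p_\zeta\,d(1/n_\zeta)=dp_\zeta/n_\zeta$ and the reduction $|\pi_\zeta|^2/n_\zeta=n_\zeta|\bm{u}_\zeta+\bm{N}_\zeta|^2$ reproduce exactly the right-hand side of \eqref{vel_evo_ld}, so the matching goes through as you anticipate.
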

\begin{proof}
We begin by recording the functional derivatives of the Hamiltonian $\mathcal{H}_\zeta$. The calculation, which is purely algebraic, leads to the formulas
\begin{align}
\frac{\delta \mathcal{H}_\zeta}{\delta \bm{U}^*} & = \frac{\omega_\zeta}{S^*}\left(\frac{\bm{U}^*}{S^*}\right)^\sharp - \bm{N}_\zeta\\
\frac{\delta \mathcal{H}_\zeta}{\delta S^*} & = -\frac{\omega_\zeta}{S^*}\left|\frac{\bm{U}^*}{S^*}\right|^2 - \frac{\omega_\zeta}{S^*}\frac{T^*}{S^*} - \left(\frac{S^*}{\omega_\zeta}\right)\,N_\zeta^2\\
\frac{\delta \mathcal{H}_\zeta}{\delta T^*} & = \frac{\omega_\zeta}{S^*}.
\end{align}
Because the toroidal coordinates are isothermal and boundary-compatible, and $\bm{U}^*$ satisfies tangential boundary conditions by definition of $\mathfrak{p}^*$, Lemma \ref{bc_lemma} implies the vector field $\delta\mathcal{H}_\zeta/\delta \bm{U}^*$ is tangent to $\partial D$.

Next we find the explicit form of Hamilton's equations on $\mathfrak{p}^*$ with the Hamiltonian \eqref{MHS_hamiltonian_general}. Let $Q:\mathfrak{p}^*\rightarrow\mathbb{R}$ be the functional given by $Q(X^*) = \langle X^*,X\rangle$, where $X=(\bm{U},S,T)$ is a fixed, but arbitrary element of $\mathfrak{p}$. The functional derivatives of $Q$ are clearly
\begin{align*}
\frac{\delta Q}{\delta \bm{U}^*} = \bm{U},\quad \frac{\delta Q}{\delta S^*} = S,\quad \frac{\delta Q}{\delta T^*} = T.
\end{align*}
If $X^*_\zeta = (\bm{U}^*_\zeta,S^*_\zeta,T^*_\zeta)$ is a solution of Hamilton's equations on $\mathfrak{p}^*$ then $\partial_\zeta Q(X^*_\zeta) = \{Q,\mathcal{H}_\zeta\}(X^*_\zeta)$. By the chain rule and the definition of the Lie-Poisson bracket \eqref{LP_bracket_def} this implies
\begin{align}
&\int_D \partial_\zeta\bm{U}^*_\zeta\cdot \bm{U} +  \int_D \partial_\zeta S^*_\zeta\,S +\int_D \partial_\zeta T^*_\zeta\,T \nonumber\\
& =-\int_D\bm{U}^*_\zeta\cdot \left[\bm{U},\frac{\delta \mathcal{H}_\zeta}{\delta \bm{U}^*}\right] -\int_D S^*_\zeta\,\left(\mathcal{L}_{\bm{U}}\frac{\delta \mathcal{H}_\zeta}{\delta S^*} - \mathcal{L}_{\delta\mathcal{H}_\zeta/\delta\bm{U}^*}S\right)-\int_D T^*_\zeta\,\left(\mathcal{L}_{\bm{U}}\frac{\delta \mathcal{H}_\zeta}{\delta T^*} - \mathcal{L}_{\delta\mathcal{H}_\zeta/\delta\bm{U}^*}T\right).\nonumber
\end{align}
Applying Stokes theorem and the Leibniz property for the Lie derivative $\mathcal{L}_{\delta\mathcal{H}_\zeta/\delta\bm{U}^*}$ to this formula then leads to 
\begin{align}
&\int_D \partial_\zeta\bm{U}^*_\zeta\cdot \bm{U} +  \int_D \partial_\zeta S^*_\zeta\,S +\int_D \partial_\zeta T^*_\zeta\,T \nonumber\\
 & =-\int_D\left(\mathcal{L}_{\delta\mathcal{H}_\zeta/\delta\bm{U}^*}\bm{U}^*_\zeta + d\left[\frac{\delta \mathcal{H}_\zeta}{\delta S^*}\right]\otimes S^*_\zeta + d\left[\frac{\delta\mathcal{H}_\zeta}{\delta T^*}\right]\otimes T^*_\zeta\right)\cdot \bm{U} - \int_D \left(\mathcal{L}_{\delta\mathcal{H}/\delta\bm{U}^*}S^*_\zeta\right)\,S - \int_D \left(\mathcal{L}_{\delta\mathcal{H}/\delta\bm{U}^*}T^*_\zeta\right)\,T.\nonumber\\
&+ \int_{\partial D}\iota_{\delta\mathcal{H}_\zeta/\delta \bm{U}^*}\left(\bm{U}^*_\zeta\cdot \bm{U}\right) + \int_{\partial D}\iota_{\delta\mathcal{H}_\zeta/\delta \bm{U}^*}\left(S^*_\zeta\,S\right) +\int_{\partial D} \iota_{\delta\mathcal{H}_\zeta/\delta \bm{U}^*}\left(T^*_\zeta\,T\right)\label{derivation_formula}
\end{align}
Each of the boundary terms vanishes because $\delta\mathcal{H}_\zeta/\delta \bm{U}^*$ is tangent to $\partial D$. Since $X\in\mathfrak{p}$ is arbitrary we therefore find
\begin{align}
\partial_\zeta\bm{U}^*_\zeta &= -\mathcal{L}_{\delta\mathcal{H}_\zeta/\delta\bm{U}^*}\bm{U}^*_\zeta - d\left[\frac{\delta \mathcal{H}_\zeta}{\delta S^*}\right]\otimes S^*_\zeta - d\left[\frac{\delta\mathcal{H}_\zeta}{\delta T^*}\right]\otimes T^*_\zeta\label{ham_momentum}\\
\partial_\zeta S^*_\zeta &= -\mathcal{L}_{\delta\mathcal{H}/\delta\bm{U}^*}S^*_\zeta\label{ham_S}\\
\partial_\zeta T^*_\zeta & = -\mathcal{L}_{\delta\mathcal{H}/\delta\bm{U}^*}T^*_\zeta.\label{ham_T}
\end{align}
The equations \eqref{ham_momentum}-\eqref{ham_T} express Hamilton's equations on $\mathfrak{p}^*$ for the Hamiltonian $\mathcal{H}_\zeta$ given in \eqref{MHS_hamiltonian_general}. Equivalently, substituting the above formulas for functional derivatives of $\mathcal{H}_\zeta$ leads to
\begin{gather}
(\partial_\zeta + \mathcal{L}_{\bm{v}_\zeta})\bm{U}^*_\zeta = d\left[\frac{T^*_\zeta}{S^*_\zeta}\right]\otimes \omega_\zeta+d\left[\left(\frac{\omega_\zeta}{S^*_\zeta}\right)\left|\frac{\bm{U}^*_\zeta}{S^*_\zeta}\right|^2 + \left(\frac{S^*_\zeta}{\omega_\zeta}\right)\,N_\zeta^2\right]\otimes S^*_\zeta,\quad  \bm{v}_\zeta  = \frac{\omega_\zeta}{S^*_\zeta}\left(\frac{\bm{U}^*_\zeta}{S^*_\zeta}\right)^\sharp - \bm{N}_\zeta\label{ham_momentum_explicit}\\
(\partial_\zeta + \mathcal{L}_{\bm{v}_\zeta})S^*_\zeta = 0\label{ham_S_explicit},\\
(\partial_\zeta + \mathcal{L}_{\bm{v}_\zeta})T^*_\zeta = 0\label{ham_T_explicit}.
\end{gather}

Next we observe how Hamilton's equations \eqref{ham_momentum_explicit}-\eqref{ham_T_explicit} transform when subjected to the following invertible change of dependent variables, $(\bm{U}^*_\zeta,S^*_\zeta,T^*_\zeta)\mapsto (\bm{u}_\zeta,\varrho_\zeta,p_\zeta)$:
\begin{align*}
\bm{U}^*_\zeta = \left(\frac{S^*_\zeta}{\omega_\zeta}\right)(\bm{u}_\zeta + \bm{N}_\zeta)^\flat\otimes \varrho_\zeta,\quad S^*_\zeta = \varrho_\zeta,\quad T^*_\zeta = p_\zeta\,\varrho_\zeta.
\end{align*}
The Hamilton equation \eqref{ham_S_explicit} immediately implies $(\partial_\zeta + \mathcal{L}_{\bm{v}_\zeta})\varrho_\zeta = 0$. The nowhere vanishing property for $S^*_\zeta = \varrho_\zeta$, the Leibniz property for $\partial_\zeta + \mathcal{L}_{\bm{v}_\zeta}$, and the Hamilton equation \eqref{ham_T_explicit} then gives $(\partial_\zeta + \mathcal{L}_{\bm{v}_\zeta})p_\zeta = 0$. The Leibniz property for $\partial_\zeta + \mathcal{L}_{\bm{v}_\zeta}$ and the advection equation for $\varrho_\zeta$ also imply
\begin{align*}
(\partial_\zeta + \mathcal{L}_{\bm{v}_\zeta})\bm{U}^*_\zeta = \left[(\partial_\zeta + \mathcal{L}_{\bm{v}_\zeta})\pi_\zeta\right]\otimes \varrho_\zeta,\quad \pi_\zeta = \left(\frac{S^*_\zeta}{\omega_\zeta}\right)(\bm{u}_\zeta + \bm{N}_\zeta)^\flat.
\end{align*}
The Hamilton equation \eqref{ham_momentum_explicit} may therefore be rewritten as
\begin{align*}
\left[(\partial_\zeta + \mathcal{L}_{\bm{v}_\zeta})\pi_\zeta\right]\otimes \varrho_\zeta = \left(\frac{\omega_\zeta}{\varrho_\zeta}\right)dp_\zeta\otimes \varrho_\zeta+d\left[\left(\frac{\omega_\zeta}{\varrho_\zeta}\right)\left|\pi_\zeta\right|^2 + \left(\frac{\varrho_\zeta}{\omega_\zeta}\right)\,N_\zeta^2\right]\otimes \varrho_\zeta,
\end{align*}
or, upon division by $\varrho_\zeta$,
\begin{align*}
(\partial_\zeta + \mathcal{L}_{\bm{v}_\zeta})\pi_\zeta = \left(\frac{\omega_\zeta}{\varrho_\zeta}\right)dp_\zeta+d\left[\left(\frac{\omega_\zeta}{\varrho_\zeta}\right)\left|\pi_\zeta\right|^2 + \left(\frac{\varrho_\zeta}{\omega_\zeta}\right)\,N_\zeta^2\right].
\end{align*}
Finally, notice that
\begin{align*}
\bm{v}_\zeta = \frac{\omega_\zeta}{S^*_\zeta}\left(\frac{\bm{U}^*_\zeta}{S^*_\zeta}\right)^\sharp - \bm{N}_\zeta =\left((\bm{u}_\zeta + \bm{N}_\zeta)^\flat\right)^\sharp - \bm{N}_\zeta = \bm{u}_\zeta, 
\end{align*}
and that $\bm{u}_\zeta = \delta\mathcal{H}_\zeta/\delta\bm{U}^*$ satisfies tangential boundary conditions. 

These remarks show that Eqs.\,\eqref{ham_momentum_explicit}-\eqref{ham_T_explicit} are related to the MHS spatial dynamics equations \eqref{pressure_evo_ld}-\eqref{vel_evo_ld} by the invertible change of dependent variables specified in the second bullet of the theorem statement. The theorem now follows immediately.

\end{proof}

\section{Manifestation of mixed elliptic-hyperbolic type in MHS spatial dynamics\label{sec:fs_formulation}}
The MHS equations belie simplicity due in part to their well-known mixed elliptic-hyperbolic type. As written in their usual form, $(\nabla\times\bm{B})\times\bm{B} = \nabla p$, $\nabla\cdot\bm{B} = 0$, the elliptic and hyperbolic aspects of the equations intertwine in a befuddling manner. This Section shows that the spatial dynamics formulation of MHS offers a suggestive dynamical-systems interpretation of type mixing that may prove useful in future rigorous investigations of existence theory for non-symmetric MHS solutions, perhaps along similar lines to those explored by by Mielke\cite{Mielke_1992}. 

Consider the spatial dynamics formulation of MHS in a ``large aspect ratio" volume of revolution $Q$. Qualitatively, large aspect ratio means $Q$ looks more like a bicycle tire than a cored apple; this notion will be made precise as the discussion continues. Without loss of generality, assume the toroidal coordinates $(\zeta,x,y):Q\rightarrow S^1\times D$ enjoy the following properties: (1) the toroidal angle $\zeta:Q\rightarrow S^1$ is equal to the azimuthal angle from a standard cylindrical coordinate system adapted to $Q$; (2) the hypersurface metric $h_\zeta= h$ is independent of $\zeta$. The first assumption implies that the vector fields $\partial_x,\partial_y$ are each orthogonal to $\partial_\zeta$, which in turn implies that the shift form (and its corresponding shift vector) $\mathcal{N}_\zeta = 0$ vanishes. It also implies that the lapse function $N_\zeta = N$ is independent of $\zeta$. The hypersurface area element $\omega_\zeta$ therefore becomes  $\omega_\zeta = N\,\sigma$,
where $\sigma = \sqrt{\text{det}\,h}\,dx\wedge dy$ is the Riemannian area form on $D$ defined by the hypersurface metric $h$. In light of all this, the MHS spatial dynamics equations on $\widetilde{Q}$ reduce to
\begin{gather*}
\partial_\zeta\pi_\zeta + \iota_{\bm{u}_\zeta}d\pi_\zeta =\frac{N^2}{F_\zeta} dp_\zeta + dF_\zeta,\quad (\partial_\zeta + \mathcal{L}_{\bm{u}_\zeta})\varrho_\zeta = 0,\quad (\partial_\zeta + \mathcal{L}_{\bm{u}_\zeta})p_\zeta = 0,\quad \iota_{\bm{u}_\zeta}h = \frac{N\sigma}{\varrho_\zeta}\pi_\zeta,\quad F_\zeta = \frac{N\,\varrho_\zeta}{\sigma},
\end{gather*}
together with the boundary condition $\pi_\zeta(\bm{n})=0$ on $\partial D$. If $\star$ denotes the Hodge star associated with $h$ and $t:\partial D\rightarrow D$ denotes the boundary inclusion then the boundary condition on $\pi_\zeta$ is equivalent to $t^*(\star\pi_\zeta) = 0$. Physically the boundary $\partial Q$ should be an isobar. By force balance, if the boundary is an isobar then the normal component of current density $\bm{n}\cdot\nabla\times\bm{B}$ must vanish along $\partial Q$. Therefore we impose the following pair of additional boundary conditions
\begin{align}
t^*p_\zeta = \text{const.},\quad t^*(\partial_\zeta\pi_\zeta - dF_\zeta) = 0,\label{isobar_bcs}
\end{align}
which we will refer to as \textbf{isobaric boundary conditions}. Note that isobaric boundary conditions are implied by the evolution equations for $\pi_\zeta$ and $p_\zeta$ if one requires the latter are satisfied on the boundary $\partial D$. These additional boundary conditions are usually taken for granted because $\nabla\times\bm{B}\cdot\nabla p = (\nabla\times\bm{B})\cdot([\nabla\times\bm{B}]\times\bm{B}) =0$ in equilibrium. But they will play a surprisingly important role in the formal analysis that follows.

Introduce a small parameter $\epsilon$ that stands for the inverse aspect ratio of $Q$; $\epsilon$ represents the ratio of a bike tire's small radius to its large radius. Make the large-aspect-ratio assumption precise by defining dimensionless variants of the quantities that specify the geometry of the hypersurface foliation according to
\begin{align*}
h = \epsilon^2\,R_0^2\,\mathsf{h},\quad N = R_0\,\mathsf{N}_\epsilon = R_0(1+\epsilon\,\mathsf{N}_1),\quad \sigma=\epsilon^2\,R_0^2\,\Sigma.
\end{align*}
The dimensionless hypersurface metric, lapse function, and Riemannian hypersurface area element are $\mathsf{h},\mathsf{N}_\epsilon$, and $\Sigma$, respectively. The real constant $R_0 > 0$ denotes the major radius of $Q$; it should be thought of as prescribing the larger of the two bike tire radii. The dimensional hypersurface metric $h$ scales like $\epsilon^2$ because it measures squared lengths within toroidal slices $D_\zeta$, and the diameter of each of these slices should scale like $\epsilon\,R_0$ in the large aspect ratio regime. The scaling of $\sigma$ ensures that $\Sigma$ is the Riemannian volume element associated with $\mathsf{h}$. The scaling of $N$ ensures the geometry of $Q$ asymptotically approaches that of a straight cylinder as $\epsilon\rightarrow 0$. Observe that the dimensionless hypersurface metric $\mathsf{h}$ defines a dimensionless Hodge-$\smallstar$ operator $\smallstar$ that relates to the dimensional Hodge-$\star$ according to
\begin{align*}
\star = \begin{cases} \epsilon^2\,R_0^2\,\smallstar &\text{ on $0$-forms}\\
\smallstar& \text{ on $1$-forms}\\
 \frac{1}{\epsilon^2\,R_0^2}\smallstar& \text{ on $2$-forms}.
\end{cases}
\end{align*}
The dimensional $h$-Laplacian operator on functions, $\nabla^2 =\star d \star d$, therefore relates to the dimensionless $\mathsf{h}$-Laplacian $\Delta = \smallstar d \smallstar d$ according to $\nabla^2 = (\epsilon\,R_0)^{-2}\Delta$.

We would like to formally study how solutions of the spatial dynamics equations behave when $\epsilon << 1$. For this it is useful to exchange the dimensional dependent variables $(\pi_\zeta,\varrho_\zeta,p_\zeta)$ with a new set of scalar dimensionless dependent variables $(A_\zeta,\chi_\zeta,f_\zeta,\mathsf{p}_\zeta)$ defined by
\begin{align*}
\pi_\zeta = \epsilon^2\,B_0\,R_0\,(\smallstar d A_\zeta + d\chi_\zeta),\quad \varrho_\zeta = \epsilon^2\,B_0\,R_0^2\frac{1+\epsilon\,f_\zeta}{1+\epsilon\,N_1}\Sigma,\quad p_\zeta = \epsilon^2\,B_0^2\,\mathsf{p}_\zeta,
\end{align*}
together with the boundary conditions $t^*A_\zeta = 0$ and $t^*(\smallstar d\chi_\zeta) = 0$. The real constant $B_0 > 0$ represents a typical magnetic field strength. Invertibility of the transformation follows from the Hodge-Morrey-Friedrichs decomposition\cite{Schwarz_1995} for $1$-forms described in detail by G. Schwarz. The scalings of each dimensional dependent variable with $\epsilon$, $B_0$, and $R_0$ correspond to the original (low-$\beta$) reduced MHD\cite{Strauss_1975,Morrison_1984} orderings introduced by H. Strauss. The corresponding fully-dimensionless form of the spatial dynamics equations is
\begin{gather}
\partial_\zeta \Delta A_\zeta + \left\langle \frac{(1+\epsilon\mathsf{N}_1)^2}{1+\epsilon\,f_\zeta}\Delta A_\zeta,\chi_\zeta \right\rangle - \left\{ \frac{(1+\epsilon\mathsf{N}_1)^2}{1+\epsilon\,f_\zeta}\Delta A_\zeta,A_\zeta\right\} + \frac{(1+\epsilon\mathsf{N}_1)^2}{1+\epsilon\,f_\zeta}\Delta A_\zeta\,\Delta \chi_\zeta = \left\{\frac{(1+\epsilon\mathsf{N}_1)^2}{1+\epsilon\,f_\zeta},\mathsf{p}_\zeta\right\}\label{poloidal_flux_evo}\\
\partial_\zeta\Delta \chi_\zeta - \left\langle \frac{(1+\epsilon\mathsf{N}_1)^2}{1+\epsilon\,f_\zeta}\Delta A_\zeta,A_\zeta\right\rangle - \left\{\frac{(1+\epsilon\mathsf{N}_1)^2}{1+\epsilon\,f_\zeta}\Delta A_\zeta,\chi_\zeta\right\} - \frac{(1+\epsilon\mathsf{N}_1)^2}{1+\epsilon\,f_\zeta}(\Delta A_\zeta)^2\nonumber\\
= \left\langle \frac{(1+\epsilon\mathsf{N}_1)^2}{1+\epsilon\,f_\zeta},\mathsf{p}_\zeta \right\rangle + \frac{(1+\epsilon\mathsf{N}_1)^2}{1+\epsilon\,f_\zeta}\Delta \mathsf{p}_\zeta + \frac{1}{\epsilon}\Delta f_\zeta\label{magnetic_potential_evo}\\
\epsilon\,\partial_\zeta f_\zeta + \frac{1}{2}\left\langle (1+\epsilon\,\mathsf{N}_1)^2,\chi_\zeta\right\rangle - \frac{1}{2}\left\{ (1+\epsilon\,\mathsf{N}_1)^2,A_\zeta\right\} + (1+\epsilon\,\mathsf{N}_1)^2\,\Delta\chi_\zeta = 0\label{F_evo}\\
\partial_\zeta\mathsf{p}_\zeta + \frac{(1+\epsilon\mathsf{N}_1)^2}{1+\epsilon\,f_\zeta}\,\{A_\zeta,\mathsf{p}_\zeta\} + \frac{(1+\epsilon\mathsf{N}_1)^2}{1+\epsilon\,f_\zeta}\left\langle \chi_\zeta,\mathsf{p}_\zeta \right\rangle=0.\label{pressure_evo}
\end{gather}
Here the Poisson bracket $\{\cdot,\cdot\}$ and metric bracket $\langle\cdot,\cdot\rangle$ are defined according to
\begin{align*}
\{f,g\} = \smallstar (df\wedge dg),\quad \langle f,g\rangle = \smallstar(df\wedge \smallstar dg).
\end{align*}
The partial differential equations \eqref{poloidal_flux_evo}-\eqref{pressure_evo} on $D$ should be supplemented with the bevy of boundary conditions
\begin{align}
t^*A_\zeta = 0,\quad t^*\smallstar d\chi_\zeta = 0,\quad \int_D \chi_\zeta\,\Sigma = 0,\quad t^*\mathsf{p}_\zeta = \text{const.},\quad t^*(\epsilon\,\smallstar d\partial_\zeta A_\zeta + \epsilon\,d\partial_\zeta\chi_\zeta -df_\zeta ) = 0.
\end{align}
The first three stem from the potential representation of $\pi_\zeta$ together with the tangential boundary condition $\pi_\zeta(\bm{n}) = 0$, while the last two represent isobaric boundary conditions expressed in terms of the potentials.

The dynamical equations \eqref{poloidal_flux_evo}-\eqref{pressure_evo} exhibit two disparate timescales. The short-timescale behavior can be isolated by exchanging slow time $\zeta$ with fast time $s = \zeta/\epsilon$ and taking the limit $\epsilon\rightarrow 0$. The result is
\begin{gather}
\partial_s \Delta A_s = 0,\quad \partial_s \Delta \chi_s = \Delta f_s,\quad \partial_s f_s = -\Delta \chi_s,\quad \partial_s \mathsf{p}_s = 0,\label{complete_fast_equations}
\end{gather}
subject to the boundary conditions
\begin{align*}
t^*A_s = 0,\quad t^*\smallstar d\chi_s = 0,\quad \int_D \chi_s\,\Sigma = 0,\quad t^*\mathsf{p}_s = \text{const.},\quad t^*(\smallstar d\partial_s A_s + d\partial_s\chi_s -df_s ) = 0.
\end{align*}
These equations admit $A_s$, $\mathsf{p}_s$, and $\gamma_s = \int_Df_s\Sigma$ as constants of motion. The remaining dynamical variables $\widetilde{f}_s = f_s - \gamma_s$ and $\chi_s$ evolve according to 
\begin{align}
\partial_s\chi_s = \widetilde{f}_s,\quad \partial_s \widetilde{f}_s = -\Delta\chi_s,\quad \int_D\widetilde{f}_s\Sigma = \int_D \chi_s\Sigma = 0,\quad t^*(\smallstar d\chi_s) = 0.\label{fast_dynamics_equations}
\end{align}
(Note we have used $\partial_s A_s = 0$ and $\int_D\chi_s\Sigma = 0$ to simplify the isobaric boundary conditions to $t^*(\partial_s\chi_s-\widetilde{f}_s) = 0$.) The equations \eqref{fast_dynamics_equations} have a unique fixed point $\widetilde{f}_s = \chi_s = 0$. Therefore the complete set of fast evolution equations \eqref{complete_fast_equations} contains an infinite-dimensional vector space of fixed points $E_0$ parameterized by the constants of motion $A_0,\mathsf{p}_0$ and $\gamma_0$. The stability type of these fixed points normal to $E_0$ may be determined by solving \eqref{fast_dynamics_equations} for general initial data. Differentiating the $\chi_s$ evolution equation in $s$ reveals that $\chi_s$ must obey a wave equation with imaginary propagation speed, i.e. the Poisson equation $\partial_s^2\chi_s = -\Delta \chi_s$. The initial value problem for this equation can be solved explicitly in terms of Neumann eigenfunctions for $\Delta$ on $D$. The resulting solutions contain exponentially growing and exponentially decaying modes with growth rate proportional to eigenvalues of $\Delta$. Thus, the space of fixed points $E_0$ is formally normally hyperbolic, and we may say that the fast part of spatial dynamics in the large-aspect-ratio regime corresponds to the elliptic part of the MHS equations. 

Notice that the initial value problem for the fast equations is ill-posed! Such ill-posedness is neither surprising nor fatal; spatial dynamics formulations of elliptic equations characteristically lead to ill-posed evolution equations\cite{Beck_2020}. Overall, the large-aspect-ratio MHS spatial dynamics equations comprise an infinite-dimensional analogue of the singularly-perturbed dynamical systems covered by Fenichel's\cite{Fenichel_1979} geometric singular perturbation theory, which says in particular that when $0<\epsilon << 1$ the spatial dynamics equations should admit a slow normally-hyperbolic invariant manifold given to leading order by $E_0$. The natural way to address the ill-posedness of the MHS spatial dynamics equations is restricting initial conditions to this slow manifold. However, as the formal calculations below will reveal, the spatial dynamics slow manifold must be infinite-dimensional. The analytic theory required to establish existence of such a slow manifold is not currently available. On an encouraging note,  Mielke\cite{Mielke_1992} studied infinite-dimensional center manifolds (slow manifolds, properly interpreted, are center manifolds) arising in a class of partial differential equations with mixed elliptic-hyperbolic type, but his theory does not appear to cover MHS. Infinite-dimensional center manifolds, although with finitely-many stable and unstable directions, also form the crux of analysis presented by Groves and Schneider\cite{Groves_Schneider_2001,Groves_Schneider_2005,Groves_Schneider_2008} in a series of papers studying modulated pulse solutions of nonlinear wave equations.

One immediate consequence of the preceding formal analysis of the fast evolution equations \eqref{complete_fast_equations} is existence of a formal slow manifold for the full spatial dynamics equations \eqref{poloidal_flux_evo}-\eqref{pressure_evo} in the sense described in the review article\cite{Burby_Klotz_2020} by Burby and Klotz.  Reduction of the spatial dynamics equations to this slow manifold proceeds as follows. As a minor preparatory step, decompose $f_\zeta$ as $f_\zeta =(1+\epsilon\,\mathsf{N}_1) \gamma_\zeta + \widetilde{f}_\zeta$, where $\int_D \widetilde{f}_\zeta(1+\epsilon\,\mathsf{N}_1)^{-1}\Sigma = 0$; it is simple to show that $\gamma_s\in\mathbb{R}$ is a constant of motion for all values of $\epsilon$. Suppose there is an invariant set $E_\epsilon$ for the evolution equations \eqref{poloidal_flux_evo}-\eqref{pressure_evo} given as a graph over the space of fixed points $E_0$ for the limiting fast evolution. This set $E_\epsilon$ will be the slow manifold. The graph property means there are functionals $\chi^*_\epsilon(A,\mathsf{p},\gamma)$ and $\widetilde{f}^*_\epsilon(A,\mathsf{p},\gamma)$ with $t^*\smallstar d\chi^*_\epsilon = 0$ and $\int \widetilde{f}^*_\epsilon\,(1+\epsilon\,\mathsf{N}_1)^{-1}\Sigma = 0$ such that
\begin{align*}
E_\epsilon = \{(A,\chi,\gamma,\widetilde{f},\mathsf{p})\mid \chi = \chi^*_\epsilon(A,\mathsf{p},\gamma),\quad \widetilde{f} = \widetilde{f}^*_\epsilon(A,\mathsf{p},\gamma)\}.
\end{align*}
Invariance means that if $(A_\zeta,\chi_\zeta,\gamma_\zeta,\widetilde{f}_\zeta,\mathsf{p}_\zeta)$ is a solution of the spatial dynamics equations \eqref{poloidal_flux_evo}-\eqref{pressure_evo} that starts in $E_\epsilon$ then that solution will remain in $E_\epsilon$ for all subsequent times. Solutions that start in $E_\epsilon$ therefore enjoy the special property that the \textbf{fast variables} $\chi_\zeta$ and $\widetilde{f}_\zeta$ are completely determined by the \textbf{slow variables} $A_\zeta,\gamma_\zeta,\mathsf{p}_\zeta$ according to the \textbf{slaving relations}
\begin{align*}
\chi_\zeta = \chi^*_\epsilon(A_\zeta,\mathsf{p}_\zeta,\gamma_\zeta),\quad \widetilde{f}_\zeta
 = \widetilde{f}^*_\epsilon(A_\zeta,\mathsf{p}_\zeta,\gamma_\zeta).
\end{align*}
Substituting the slaving relations into the fast variable evolution equations \eqref{magnetic_potential_evo} and \eqref{F_evo} leads to the functional partial differential equation called the \textbf{invariance equation} that the functionals $\chi^*_\epsilon$ and $\widetilde{f}_\epsilon^*$ must obey:
\begin{gather*}
\Delta \bigg(D\chi^*_\epsilon[\dot{A},\dot{\mathsf{p}},\dot{\gamma}]\bigg) - \left\langle \frac{(1+\epsilon\mathsf{N}_1)^2}{\mathsf{F}^*_\epsilon}\Delta A,A\right\rangle - \left\{\frac{(1+\epsilon\mathsf{N}_1)^2}{\mathsf{F}^*_\epsilon}\Delta A,\chi^*_\epsilon\right\} - \frac{(1+\epsilon\mathsf{N}_1)^2}{\mathsf{F}^*_\epsilon}(\Delta A)^2\nonumber\\
= \left\langle \frac{(1+\epsilon\mathsf{N}_1)^2}{\mathsf{F}^*_\epsilon},\mathsf{p} \right\rangle + \frac{(1+\epsilon\mathsf{N}_1)^2}{\mathsf{F}^*_\epsilon}\Delta \mathsf{p} 
+\gamma\,\Delta \mathsf{N}_1 + \frac{1}{\epsilon}\,\Delta \widetilde{f}^*_\epsilon,\quad \mathsf{F}^*_\epsilon = 1+\epsilon\,\gamma + \epsilon\,\widetilde{f}^*_\epsilon + \epsilon^2\,\gamma\,\mathsf{N}_1,\label{magnetic_potential_invariance}\\
\epsilon\,D\widetilde{f}^*_\epsilon[\dot{A},\dot{\mathsf{p}},\dot{\gamma}] + \frac{1}{2}\left\langle (1+\epsilon\,\mathsf{N}_1)^2,\chi^*_\epsilon\right\rangle - \frac{1}{2}\left\{ (1+\epsilon\,\mathsf{N}_1)^2,A\right\} + (1+\epsilon\,\mathsf{N}_1)^2\,\Delta\chi^*_\epsilon = 0,\label{F_invariance}
\end{gather*}
together with the boundary conditions
\begin{align*}
t^*(\smallstar d\chi^*_\epsilon) = 0,\quad \int_D \frac{\widetilde{f}^*_\epsilon}{1+\epsilon\,\mathsf{N}_1}\Sigma = 0,\quad t^*(\epsilon\,\smallstar d\dot{A} + \epsilon\,dD\chi^*_\epsilon[\dot{A},\dot{\mathsf{p}},\dot{\gamma}]-\epsilon\,\gamma\,d\mathsf{N}_1 -d\widetilde{f}^*_\epsilon ) = 0.
\end{align*}
Here $\dot{A},\dot{\mathsf{p}},\dot{\gamma}$ denote time derivatives of $A,\mathsf{p},\gamma$ expressed in terms of those fields as well as the functionals $\chi^*_\epsilon,\widetilde{f}^*_\epsilon$. Now formal solutions for the functionals $\chi^*_\epsilon$ and $\widetilde{f}^*_\epsilon$ can be obtained by substituting the formal power series expansions $\chi^*_\epsilon = \chi^*_0 + \epsilon\,\chi^*_1 + \dots$, $\widetilde{f}^*_\epsilon = \widetilde{f}^*_0 + \epsilon\,\widetilde{f}^*_1 + \dots$ into the invariance equation and solving order-by-order in $\epsilon$. With the functionals determined, the slow manifold reduction of the MHS spatial dynamics equations is then given, to all orders in $\epsilon$, by 
\begin{gather}
\partial_\zeta \Delta A_\zeta + \left\langle \frac{(1+\epsilon\mathsf{N}_1)^2}{\mathsf{F}^*_\zeta}\Delta A_\zeta,\chi^*_\epsilon \right\rangle - \left\{ \frac{(1+\epsilon\mathsf{N}_1)^2}{\mathsf{F}^*_\epsilon}\Delta A_\zeta,A_\zeta\right\} + \frac{(1+\epsilon\mathsf{N}_1)^2}{\mathsf{F}^*_\epsilon}\Delta A_\zeta\,\Delta \chi^*_\epsilon = \left\{\frac{(1+\epsilon\mathsf{N}_1)^2}{\mathsf{F}^*_\epsilon},\mathsf{p}_\zeta\right\}\label{poloidal_flux_slow}\\
\partial_\zeta\mathsf{p}_\zeta + \frac{(1+\epsilon\mathsf{N}_1)^2}{\mathsf{F}^*_\epsilon}\,\{A_\zeta,\mathsf{p}_\zeta\} + \frac{(1+\epsilon\mathsf{N}_1)^2}{\mathsf{F}^*_\epsilon}\left\langle \chi^*_\epsilon,\mathsf{p}_\zeta \right\rangle=0.\label{pressure_slow}\\
\partial_\zeta\gamma_\zeta = 0.
\end{gather}
Given a ``solution" $(A_\zeta,\mathsf{p}_\zeta,\gamma_\zeta)$ of these slow manifold reduced equations, a solution of the full MHS spatial dynamics system can be formally recovered by reconstructing the fast variables according to $\widetilde{f}_\zeta = \widetilde{f}^*_\epsilon(A_\zeta,\mathsf{p}_\zeta,\gamma_\zeta)$ and $\chi_\zeta = \chi^*_\epsilon(A_\zeta,\mathsf{p}_\zeta,\gamma_\zeta)$. But practically speaking the functionals $\chi^*_\epsilon,\widetilde{f}^*_\epsilon$ need to be truncated at some order in $\epsilon$. After truncation, solutions of the slow manifold reduced equations only give rise to approximate solutions of the full spatial dynamics system. This procedure gives rise to a new notion of approximate solution of the MHS equations in large-aspect-ratio domains.

It is straightforward to find the first two coefficients in the formal series expansions for $\chi^*_\epsilon$ and $\widetilde{f}^*_\epsilon$. Computing higher-order coefficients by similar means is also straightforward, in principle, but involves increasingly complex algebraic manipulations. To leading-order, the invariance equation is
\begin{align*}
\Delta \widetilde{f}^*_0 = 0,\quad \Delta\chi^*_0 = 0,\quad \int_D \widetilde{f}^*_0\Sigma =\int_D \chi^*_0\,\Sigma= 0,\quad t^*(\smallstar d\chi^*_0)=0,\quad t^*(d\widetilde{f}^*_0) = 0,
\end{align*}
which implies $\widetilde{f}^*_0 = 0$ and $\chi^*_0 = 0$. At next order in $\epsilon$, the invariance equation therefore becomes
\begin{gather}
\Delta \widetilde{f}^*_1 = -\Delta \mathsf{p} - \gamma\,\Delta\mathsf{N}_1 - \langle\Delta A,A\rangle- (\Delta A)^2,\quad \int_D\widetilde{f}^*_1\Sigma = 0,\quad t^*d(\widetilde{f}_1^* + \gamma\mathsf{N}_1 - C) = 0\\
\Delta \chi^*_1 = \{\mathsf{N}_1,A\},\quad \int_D\chi^*_1\Sigma = 0,\quad t^*(\smallstar d\chi^*_1) = 0,\label{chi1_eqn}
\end{gather}
where the function $C$ is the unique solution of the boundary value problem
\begin{gather*}
\Delta C = \smallstar d\smallstar(A\,d\Delta A),\quad \int_DC\Sigma =0,\quad t^*(\smallstar dC)=0.
\end{gather*}
Again there is a unique solution for $\widetilde{f}^*_1$ and $\chi^*_1$, but now it can only be expressed as the solution of a pair of Poisson equations with standard boundary conditions. This pattern persists as the calculation is continued to higher-order in $\epsilon$, vividly reemphasizing the relationship between the slow manifold and the elliptic part of the MHS equations. The field $C$ arises from the following consideration. The leading-order evolution equation for $A$ reads $\Delta \dot{A}_0 + \{A,\Delta A\} = 0$, which is equivalent to $d(\smallstar d\dot{A}_0 + A\,d\Delta A) =0$, which implies in turn $\smallstar d\dot{A}_0 + A\,d\Delta A = dC$ for some function $C$ uniquely defined up to a constant. We are free to fix the constant by requiring $\int_D C\Sigma= 0$. Moreover, because $t^*A = t^*\dot{A}_0 = 0$, the function $C$ must satisfy homogeneous Neumann boundary conditions $t^*(\smallstar dC) = 0$. The above boundary value problem for $C$ now follows from applying $\smallstar d\smallstar$ to the equation $\smallstar d\dot{A}_0 + A\,d\Delta A = dC$.

With the first two coefficients in the series expansion for $\widetilde{f}^*_\epsilon$ and $\chi^*_\epsilon$ in hand, the slow manifold reduced spatial dynamics equations may now be expressed with $O(\epsilon)$ accuracy. We find
\begin{gather*}
\partial_\zeta\Delta A_\zeta + (1-\epsilon\,\gamma_\zeta + 2\epsilon\,\mathsf{N}_1)\{A_\zeta,\Delta A_\zeta\} = 2\epsilon\{\mathsf{N}_1,\mathsf{p}_\zeta\} + \epsilon\,\Delta A_\zeta\,\{\mathsf{N}_1,A_\zeta\}\\
\partial_\zeta\mathsf{p}_\zeta + (1-\epsilon\,\gamma_\zeta + 2\epsilon\,\mathsf{N}_1)\,\{A_\zeta,\mathsf{p}_\zeta\} + \epsilon\,\left\langle \chi^*_1,\mathsf{p}_\zeta \right\rangle=0\\
\partial_\zeta\gamma_\zeta = 0,
\end{gather*}
where $\chi^*_1$ denotes the solution of the Neumann Poisson equation \eqref{chi1_eqn}.
This truncated form of the $O(\epsilon)$ slow manifold reduction of MHS spatial dynamics is likely not a Hamiltonian system because the truncation scheme used to find it did not make use of the Hamiltonian structure underlying MHS spatial dynamics. It would be interesting to develop Hamiltonian truncations for higher-order slow manifold reductions of the spatial dynamics equations along lines pursued previously in dynamical MHD\cite{Burby_TF_2017} and the weakly-relativistic kinetic plasma model\cite{Miloshevich_2021}. When $\epsilon = 0$, the slow manifold reduced equations degenerate to
\begin{gather*}
\partial_\zeta\Delta A_\zeta + \{A_\zeta,\Delta A_\zeta\} = 0\\
\partial_\zeta\mathsf{p}_\zeta + \{A_\zeta,\mathsf{p}_\zeta\} =0\\
\partial_\zeta\gamma_\zeta = 0.
\end{gather*}
Hyperbolicity of these equations reveals that the hyperbolic part of the MHS equations corresponds to motion along the slow manifold in the spatial dynamics framework. Notice that the limiting equation for $A_\zeta$ coincides with Strauss'\cite{Strauss_1975} Eq.\,(9) in the special case of equilibrium and vanishing flow, and also the vorticity form of the 2D planar Euler equations.

\section{Smoothed-particle MHS\label{sec:snakes}}
MHS spatial dynamics formally resembles time-dependent planar hydrodynamics. Methods used to simplify analysis of planar fluids therefore suggest analogues applicable to MHS. For example, smoothed-particle hydrodynamics (SPH) {\color{red}} replaces a fluid continuum with finitely-many mass lumps, or particles, that move like Lagrangian tracers. This suggests replacing a solution of the MHS spatial dynamics equations with finitely-many flux lumps that move along magnetic field lines. This Section uses the Hamiltonian structure of MHS spatial dynamics from Section \ref{sec:lie_poisson} to construct an SPH-like approximation of the spatial dynamics equations that we call smoothed-particle magnetohydrostatics (SPMHS). In fact, our derivation of SPMHS shows it describes exact particle-like solutions of a physically-plausible regularization of the MHS equations. 

Particles in SPMHS comprise point-like representatives of narrow bundles of magnetic field lines, much in the way marker particles from the particle-in-cell plasma simulation method\cite{Dawson_1983,Squire_2012,Evstatiev_2013,Kraus_2017,Burby_fdk_2017,Xiao_2018,Li_2019,Pinto_2022} represent mesoscopic ensembles of actual plasma particles. The following intuitive procedure constructs the $N$-particle representation of a given smooth distribution of toroidal flux. Decompose the unit disc $D$ into $N$ disjoint cells $C_i\subset D$, $i=1,\dots, N$, each with diameter $\alpha$. Assume that $N$ is large enough so the magnetic field strength $B$ over each cell $C$ is roughly constant. Since the union of all cells gives $D$, the cell diameter scales like $\alpha \sim 1/\sqrt{N}$. So as the number of cells $N$ increases the toroidal flux $B\alpha^2 \sim B/N$ and the magnetic energy $B^2\alpha^2 \sim B^2/N$ of each cell becomes vanishingly small while keeping the total flux and energy constant. The centers $\bm{q}$ of the cells $C$ for large $N$ correspond to SPMHS particles.

The world line of an SPMHS particle corresponds to a flux tube. The idea of representing solutions of MHD equations using flux tubes dates back at least to the work of Spruit\cite{Spruit_1981}, who proposed equations of motion for a thin isolated circular flux tube embedded in an otherwise unmagnetized fluid, and Achterberg\cite{Achterberg_1996a,Achterberg_1996b}, who proposed an extension of Spruit's model that satisfies a variational principle. Neither Spruit nor Achterberg establish a precise relationship between their approximate flux tube solutions and true solutions of the underlying MHD system; instead they appeal to the physical plausibility of the so-called slender tube approximation\cite{Achterberg_1996a}. More recently, DeForest and Kankelborg\cite{DeForest_2007} proposed a flux-tube discretization for force-free (i.e. zero pressure) MHD equilibria, which has a public numerical implementation known as FLUX. FLUX has been used to simulate various solar-physical phenomena, such as reconnectionless initiation of coronal mass ejections\cite{Rachmeler_2009}. As with the Achterberg model, the fluxon model of DeForest and Kankelborg is formulated as an approximation of the underlying MHD equilibrium equations. As we will show, SPMHS distinguishes itself from previous flux tube models in various ways. In contrast to previous dynamical flux tube models, SPMHS applies to static MHD equilibria, allows for multiple interacting flux tubes, and is manifestly Hamiltonian. In contrast to DeForest and Kankelborg's fluxon equilibrium model, SPMHS allows for non-zero pressure and enjoys a Hamiltonian dynamical systems formulation. In contrast to both Achterberg and fluxons, SPMHS enjoys a precise mathematical theory relating SPMHS solutions to true solutions of (regularized) MHS equations. (See Theorem \ref{SPMHS_Hamilton} below.)

We formulate SPMHS precisely as follows. An SPMHS particle comprises the particle's position $\bm{q} = (q^x,q^y)\in D$; its momentum $\bm{p} = (p_x,p_y)\in \mathbb{R}^2$; its toroidal flux $S_0^*\in\mathbb{R}$; and its pressure flux $T_0^*\in\mathbb{R}$ (product of pressure with toroidal flux). The single-particle phase space is therefore $T^*D\times\mathbb{R}^2\ni (\bm{q},\bm{p},S_0^*,T_0^*)$. The momentum of a particle on the boundary $\partial D$ is only defined modulo addition of vectors normal to $\partial D$, thereby allowing for reflecting boundary conditions that correspond to the no-outflow condition $\bm{B}\cdot \bm{n} = 0$. We define dynamics for an ensemble of $N$ SPMHS particles using the theory of collectivisation\cite{Guillemin_1980} due to Guillemin and Sternberg. According to their theory, if $\psi:Z_0\rightarrow Z$ is any Poisson map between Poisson manifolds $Z_0,Z$ and $\mathcal{H}:Z\rightarrow\mathbb{R}$ is a possibly time-dependent Hamiltonian on $Z$ then the image under $\psi$ of any solution of Hamilton's equation on $Z_0$ with \textbf{collective Hamiltonian} $h = \mathcal{H}\circ\psi$ is a solution of Hamilton's equation on $Z$ with Hamiltonian $\mathcal{H}$. In the present setting $Z = \mathfrak{p}^*$ is the Lie theoretic phase space for MHS spatial dynamics described in Section \ref{sec:lie_poisson}, $Z_0 = (T^*D\times\mathbb{R})^N$ is the $N$-particle phase space, and $\psi = \Gamma$ is the Poisson map given below in Theorem \ref{thm:fleck_poisson_map}. Since the spatial dynamics Hamiltonian $\mathcal{H}_\zeta$ on $\mathfrak{p}^*$ identified in Section \ref{hamiltonian_formulation} is ill-defined on the image of the Poisson map $\Gamma$, we need to regularize $\mathcal{H}_\zeta$ at small scales before constructing the collective Hamiltonian. Theorem \ref{regularized_hamiltonian_equations} performs the regularization by specifying a regularized Hamiltonian $\mathcal{H}_\zeta^\alpha$ and its corresponding regularized spatial dynamics equations. Here $\alpha\ll 1$ denotes a smoothing length scale familiar from conventional SPH theory. The Hamiltonian $\mathcal{H}_\zeta^\alpha$ and the Poisson map $\Gamma$ give rise to the SPMHS Hamiltonian by collectivisation, $h_\zeta^\alpha = \mathcal{H}_\zeta^\alpha\circ \Gamma$. This ensures the image under $\Gamma$ of any solution of the SPMHS equations described in Theorem \ref{SPMHS_Hamilton} is an exact solution of the regularized MHS spatial dynamics equations defined by $\mathcal{H}_\zeta^\alpha$. Provided the smoothing length $\alpha$ is smaller than any physical length scale within the purview of the ideal MHD model, these exact solutions of the regularized MHS system should be understood as prescribing viable MHD equilibria.

The following Theorem identifies a structure-preserving mapping $\Gamma$ that sends the phase space for $N$ SPMHS particles into the phase space for MHS spatial dynamics. This bridge between particles and fields comprises the first crucial ingredient in building a satisfactory SPMHS theory. The proof presented here that $\Gamma$ is a Poisson map proceeds by brute force. However, it is also possible to construct a more conceptually satisfying proof using ideas developed by Scovel-Weinstein\cite{SW_1994} in the context of the Vlasov-Poisson system. In particular, the alternate proof provides a roadmap for building an SPMHS theory that uses particles with internal degrees of freedom to obtain a richer class of particle-like equilibria. We will investigate this topic in future publications.
\begin{theorem}\label{thm:fleck_poisson_map}
Fix a positive integer $N$ and nonzero real constants $W_S,W_T$. The mapping $\Gamma: (T^*D\times \mathbb{R}^2)^N\rightarrow\mathfrak{p}^*$ given by
\begin{align*}
\begin{pmatrix} \bm{q}^1,\bm{p}^1,S^{*1}_0,T^{*1}_0\\ \vdots \\ \bm{q}^N, \bm{p}^N,S^{*N}_0,T^{*N}_0\end{pmatrix}\mapsto \begin{pmatrix} \sum_{i=1}^N \bm{p}^i\otimes \delta(\bm{x} -\bm{q}^i)\,dx\wedge dy\\
W_S\,\sum_{i=1}^N S_0^{*i}\,\delta(\bm{x} -\bm{q}^i)\,dx\wedge dy\\ W_T\,\sum_{i=1}^N T_0^{*i}\,\delta(\bm{x} -\bm{q}^i)\,dx\wedge dy \end{pmatrix} 
\end{align*}
is a Poisson map. Here $T^*D\times\mathbb{R}^2$ is regarded as the product of the canonical symplectic manifold $T^*D$ with the trivial Poisson manifold $\mathbb{R}^2$, and $\mathfrak{p}^*$ is endowed with the Lie-Poisson structure discussed in Section \ref{sec:lie_poisson}.
\end{theorem}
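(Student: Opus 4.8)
The plan is to verify the Poisson-map property directly from the definitions, exactly as promised in the statement: for arbitrary smooth functionals $F,G:\mathfrak{p}^*\to\mathbb{R}$ I would show that the canonical Poisson bracket of the pullbacks $f = F\circ\Gamma$ and $g = G\circ\Gamma$ on the particle phase space $(T^*D\times\mathbb{R}^2)^N$ agrees with the pullback of the Lie--Poisson bracket, $\{F,G\}\circ\Gamma$. Since the $\mathbb{R}^2$ factor carrying $(S_0^*,T_0^*)$ is a trivial Poisson manifold, the particle bracket reduces to the canonical $T^*D$ bracket in the conjugate pairs $(\bm{q}^i,\bm{p}^i)$ alone, so everything comes down to computing the particle derivatives of $f$ and assembling them.

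Writing $\bm{V}_F = \delta F/\delta\bm{U}^*$, $A_F = \delta F/\delta S^*$, $B_F = \delta F/\delta T^*$ for the (smooth) functional derivatives of $F$ evaluated at $X^* = \Gamma(\dots)$, the chain rule together with the delta-concentrated form of $\Gamma$ gives every derivative in closed form. Differentiation in $\bm{p}^i$ only touches $\bm{U}^*$ and yields $\partial f/\partial\bm{p}^i = \bm{V}_F(\bm{q}^i)$; differentiation in $S_0^{*i}$ and $T_0^{*i}$ only touches $S^*$ and $T^*$ and yields $W_S\,A_F(\bm{q}^i)$ and $W_T\,B_F(\bm{q}^i)$; differentiation in $\bm{q}^i$ hits all three delta functions at once, and an integration by parts that moves the derivative off $\delta(\bm{x}-\bm{q}^i)$ (legitimate because each $\delta$ is either interior or tangentially supported, so no boundary term appears) produces the gradient combination
\begin{align*}
\frac{\partial f}{\partial q^i_b} = \sum_a p^i_a\,\partial_b V_F^a(\bm{q}^i) + W_S\,S_0^{*i}\,\partial_b A_F(\bm{q}^i) + W_T\,T_0^{*i}\,\partial_b B_F(\bm{q}^i).
\end{align*}

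Substituting these into $\{f,g\} = \sum_{i,a}(\partial_{q^i_a}f\,\partial_{p^i_a}g - \partial_{p^i_a}f\,\partial_{q^i_a}g)$ and grouping by the ``charges'' $\bm{p}^i$, $W_S S_0^{*i}$, and $W_T T_0^{*i}$, the $\bm{p}^i$-terms collapse to the vector-field commutator via $V_G^a\partial_a V_F^b - V_F^a\partial_a V_G^b = -[\bm{V}_F,\bm{V}_G]^b$, while the $S_0^{*i}$- and $T_0^{*i}$-terms collapse to the Lie-derivative differences $-(\mathcal{L}_{\bm{V}_F}A_G - \mathcal{L}_{\bm{V}_G}A_F)$ and $-(\mathcal{L}_{\bm{V}_F}B_G - \mathcal{L}_{\bm{V}_G}B_F)$. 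Recognizing these as precisely the three components of $[\delta F/\delta X^*,\delta G/\delta X^*]$ from Definition \ref{p_def}, and using that pairing against $\Gamma(\dots)$ simply evaluates any smooth triple at the points $\bm{q}^i$ with the delta-weighted charges, I would read off $\{f,g\} = \langle\Gamma(\dots),[\delta F/\delta X^*,\delta G/\delta X^*]\rangle = (\{F,G\}\circ\Gamma)(\dots)$, which is the claim.

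The delta-function bookkeeping is routine; the genuinely delicate point is the boundary behaviour, which I would treat carefully rather than as an afterthought. First one must check that $\Gamma$ really lands in $\mathfrak{p}^*$: the constraint $xU_x + yU_y = 0$ on $\partial D$ holds automatically for interior particles (the $\delta$'s have no support there), while for a boundary particle it forces $\bm{p}^i$ to be tangent to $\partial D$, which is exactly the reflecting prescription that $\bm{p}^i$ is defined only modulo normal vectors. Second, this same quotient is what makes the canonical $T^*D$ bracket well defined for boundary particles and what guarantees the integration by parts in $\partial f/\partial\bm{q}^i$ produces no boundary contribution, since the admissible variations and the test fields $\bm{V}_F$ are tangent along $\partial D$. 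Once this is in place the computation closes. I would also note, in the spirit of Scovel--Weinstein, that the result has a cleaner origin: $\Gamma$ is the momentum map for the natural action of the semidirect-product group $\mathrm{Diff}(D)\ltimes(C^\infty(D)\oplus C^\infty(D))$ on the particle phase space, and equivariant momentum maps into Lie--Poisson duals are automatically Poisson maps; the brute-force verification above is an explicit instance of that general principle.
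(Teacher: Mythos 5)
Your proposal is correct and follows essentially the same route as the paper's proof: the same chain-rule computation of the particle-space partial derivatives, the same regrouping of the canonical bracket into the semidirect-product Lie bracket paired against the delta-weighted charges, and the same treatment of boundary particles ($\partial_{\bm{q}}$-derivatives and momenta defined only modulo normal vectors, with well-definedness rescued by tangency of $\delta F/\delta\bm{U}^*$ along $\partial D$). The only organizational difference is that the paper first verifies the single-particle map $\gamma$ is Poisson and then extends to $N$ particles by linearity of the Lie-Poisson bracket (the Scovel-Weinstein sum-map observation), whereas you carry all $N$ particles through the computation at once; your closing momentum-map remark likewise parallels the paper's own aside that a more conceptual proof exists along Scovel-Weinstein lines.
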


\begin{proof}
Denote a generic point in the set $T^*D\times\mathbb{R}^2$ using the symbol $(\bm{q},\bm{p},S_0^*,T_0^*)\in T^*D\times\mathbb{R}^2$. Let $W_S,W_T$ be non-zero real constants. First we show directly that the mapping $\gamma: T^*D\times\mathbb{R}^2\rightarrow \mathfrak{p}^*$ given by
\begin{align*}
\gamma(\bm{q},\bm{p},S_0^*,T_0^*)= \begin{pmatrix} \,\bm{p}\otimes \delta(\bm{x} - \bm{q})dx\wedge dy \\W_S\, S_0^*\,\delta(\bm{x} - \bm{q})\,dx\wedge dy \\W_T\, T_0^*\,\delta(\bm{x} - \bm{q})\,dx\wedge dy, \end{pmatrix} 
\end{align*}
is a Poisson map. The Poisson bracket between functions $f,g:T^*D\times\mathbb{R}^2\rightarrow\mathbb{R}$ is given by
\begin{align*}
\{f,g\}(\bm{q},\bm{p},S_0^*,T_0^*)=\partial_{\bm{q}}f\cdot \partial_{\bm{p}}g - \partial_{\bm{q}}g\cdot \partial_{\bm{p}}f,
\end{align*}
while the Poisson bracket between functions $F,G:\mathfrak{p}^*\rightarrow\mathbb{R}$ is given by
\begin{align*}
\{F,G\}_{\mathfrak{p}^*}(X^*) & = -\int_D \bm{U}^* \cdot \left[\frac{\delta F}{\delta \bm{U}^*},\frac{\delta G}{\delta \bm{U}^*}\right]\\
&- \int_D S^*\,\left(\mathcal{L}_{\delta F/\delta \bm{U}^*}\frac{\delta G}{\delta S^*}-\mathcal{L}_{\delta G/\delta \bm{U}^*}\frac{\delta F}{\delta S^*}\right)- \int_D T^*\,\left(\mathcal{L}_{\delta F/\delta \bm{U}^*}\frac{\delta G}{\delta T^*}-\mathcal{L}_{\delta G/\delta \bm{U}^*}\frac{\delta F}{\delta T^*}\right),
\end{align*}
where $X^* = (\bm{U}^*,S^*,T^*)\in\mathfrak{p}^*$.

Notice first that for $F:\mathfrak{p}^*\rightarrow\mathbb{R}$ and $(\delta\bm{q},\delta\bm{p},\delta S_0^*,\delta T_0^*)\in \mathbb{R}^4$ we have
\begin{align*}
\frac{d}{d\lambda}\gamma^*F(\bm{q}+\lambda\,\delta\bm{q},\bm{p} + \lambda\,\delta\bm{p},S_0^*+\lambda\,\delta S_0^*,T_0^* +\lambda\,\delta T_0^*) &=  \,\delta \bm{p}\cdot \frac{\delta F}{\delta\bm{U}^*}(\bm{q}) + W_S\,\delta S_0^*\,\frac{\delta F}{\delta S^*}(\bm{q}) + W_T\,\delta T_0^*\,\frac{\delta F}{\delta T^*}(\bm{q})\\
&+ \delta\bm{q}\cdot \left( \,\partial_{\bm{q}}\frac{\delta F}{\delta\bm{U}^*}\cdot \bm{p} + W_S\,\partial_{\bm{q}}\frac{\delta F}{\delta S^*}\,S_0^* + W_T\,\partial_{\bm{q}}\frac{\delta F}{\delta T^*}\,T_0^*\right).
\end{align*}
The partial derivatives of $\gamma^*F$ are therefore
\begin{align*}
\partial_{\bm{q}}\gamma^*F & =  \,\partial_{\bm{q}}\frac{\delta F}{\delta\bm{U}^*}\cdot \bm{p} + W_S\,\partial_{\bm{q}}\frac{\delta F}{\delta S^*}\,S_0^* + W_T\,\partial_{\bm{q}}\frac{\delta F}{\delta T^*}\,T_0^*,\quad \partial_{\bm{p}}\gamma^*F  =  \,\frac{\delta F}{\delta\bm{U}^*}(\bm{q}),\\
&\quad \partial_{S_0^*}\gamma^*F = W_S\,\frac{\delta F}{\delta S^*}(\bm{q}),\quad \partial_{T_0^*}\gamma^*F  =W_T\, \frac{\delta F}{\delta T^*}(\bm{q}).
\end{align*}
Note that if $\bm{q}\in \partial D$ then $\delta\bm{q}$ is tangent to $\partial D$, implying that $\partial_{\bm{q}}\gamma^*F$ is only defined modulo vectors normal to $\partial D$.
These formulas imply the Poisson bracket $\{\gamma^*F,\gamma^*G\}$ is given by
\begin{align}
\{\gamma^*F,\gamma^*G\} &= \frac{\delta G}{\delta\bm{U}^*}(\bm{q})\cdot\left(\partial_{\bm{q}}\frac{\delta F}{\delta\bm{U}^*}\cdot \bm{p} + W_S\,\partial_{\bm{q}}\frac{\delta F}{\delta S^*}\,S_0^* + W_T\,\partial_{\bm{q}}\frac{\delta F}{\delta T^*}\,T_0^*\right)\nonumber\\
&-\frac{\delta F}{\delta\bm{U}^*}(\bm{q})\cdot\left(\partial_{\bm{q}}\frac{\delta G}{\delta\bm{U}^*}\cdot \bm{p} + W_S\,\partial_{\bm{q}}\frac{\delta G}{\delta S^*}\,S_0^* + W_T\,\partial_{\bm{q}}\frac{\delta G}{\delta T^*}\,T_0^*\right)\nonumber\\
& = -\bm{p}\cdot \left[\frac{\delta F}{\delta\bm{U}^*},\frac{\delta G}{\delta\bm{U}^*}\right](\bm{q}) -W_S\, S_0^*\,\left(\mathcal{L}_{\delta F/\delta\bm{U}^*}\frac{\delta G}{\delta S^*}-\mathcal{L}_{\delta G/\delta\bm{U}^*}\frac{\delta F}{\delta S^*}\right)(\bm{q})\nonumber\\
&-W_T\, T_0^*\,\left(\mathcal{L}_{\delta F/\delta\bm{U}^*}\frac{\delta G}{\delta T^*}-\mathcal{L}_{\delta G/\delta\bm{U}^*}\frac{\delta F}{\delta T^*}\right)(\bm{q}).\label{pb_check_thm}
\end{align}
Notice that only the tangential components of $\partial_{\bm{q}}\gamma^*F$ and $\partial_{\bm{q}}\gamma^*G$ enter into this expression when $\bm{q}\in\partial D$ because $\delta F/\delta \bm{U}^*$ is by definition tangent to $\partial D$. The formula \eqref{pb_check_thm} needs to be compared with $\gamma^*\{F,G\}_{\mathfrak{p}^*}$, which is
\begin{align*}
&\gamma^*\{F,G\}_{\mathfrak{p}^*} = - \,\int_D \bigg(\bm{p}\otimes \delta(\bm{x} - \bm{q})\,dx\wedge dy\bigg)\cdot \left[\frac{\delta F}{\delta \bm{U}^*},\frac{\delta G}{\delta\bm{U}^*}\right]\\
&-W_S\,\int_D \bigg(S_0^*\,\delta(\bm{x} - \bm{q})\,dx\wedge dy\bigg)\left(\mathcal{L}_{\delta F/\delta \bm{U}^*}\frac{\delta G}{\delta S^*} -\mathcal{L}_{\delta G/\delta \bm{U}^*}\frac{\delta F}{\delta S^*} \right)\\
&-W_T\,\int_D \bigg(T_0^*\,\delta(\bm{x} - \bm{q})\,dx\wedge dy\bigg)\left(\mathcal{L}_{\delta F/\delta \bm{U}^*}\frac{\delta G}{\delta T^*} -\mathcal{L}_{\delta G/\delta \bm{U}^*}\frac{\delta F}{\delta T^*} \right)\\
& = \{\gamma^*F,\gamma^*G\}.
\end{align*}
Thus, $\gamma$ is a Poisson map, as claimed.

To complete the proof, we first observe, as done in Scovel-Weinstein\cite{SW_1994}, that if $\psi:Z\rightarrow \mathfrak{p}^*$ is any Poisson map from a Poisson manifold $Z$ into the Lie Poisson space $\mathfrak{p}^*$ then the sum map $\overline{\psi}:Z^N\rightarrow\mathfrak{p}^*: (z^1,\dots, z^N)\mapsto \psi(z^1) +\dots +\psi(z^N)$ is Poisson by linearity of the Poisson bracket on $\mathfrak{p}^*$. With $Z = T^*D\times\mathbb{R}^2$ and $\psi = \gamma$, this is just the statement of the theorem.

\end{proof}

By way of collectivisation, the Poisson map $\Gamma$ provided by Theorem \ref{thm:fleck_poisson_map} would already be sufficient to define the evolution laws for SPMHS if the spatial dynamics Hamiltonian $\mathcal{H}_\zeta$ was unambiguously defined on the image of $\Gamma$. However, due to the ambiguity of products of $\delta$-functions, the naive collective Hamiltonian $h_\zeta = \mathcal{H}_\zeta\circ \Gamma$ is ill-defined. To alleviate this problem in a physically-plausible manner, the following Theorem provides a regularized spatial dynamics Hamiltonian $\mathcal{H}_\zeta^\alpha$. As will become clear explicitly in Theorem \ref{SPMHS_Hamilton}, the composition $h^\alpha_\zeta = \mathcal{H}_\zeta^\alpha\circ\Gamma$ \emph{is} well-defined. Moreover, for small enough smoothing length $\alpha$, the regularized spatial dynamics equations recorded in Theorem \ref{regularized_hamiltonian_equations} are physically indistinguishable from the unregularized MHS equations within the framework of assumptions underlying ideal MHD. 

For regularization, we employ diffusive smoothing as follows. Let $\Delta = \partial_x^2 + \partial_y^2$ denote the standard Laplacian on $D$, regarded as an operator on either scalar fields $S$ or vector fields $\bm{U}$. Consider the vector and scalar heat equations with absolute boundary conditions\cite{Guerini_Savo_2002},
\begin{align}
\partial_t\bm{U}_t - \Delta\bm{U}_t = 0,\quad  \bm{n}\cdot \bm{U}_t = \text{curl}(\bm{U}_t) = 0\text{ on }\partial D, \quad \partial_t S_t -\Delta S_t = 0,\quad  \bm{n}\cdot \nabla S_t = 0\text{ on }\partial D.\label{smoothing_eqns}
\end{align}
The fundamental solutions $\hat{W}_t$, $\hat{Y}_t$ for vector and scalar fields, respectively, define heat kernels $W_t(\bm{r}_1,\bm{r}_2)$, $Y_t(\bm{r}_1,\bm{r}_2)$ such that $\hat{W}_t\bm{U}(\bm{r}) = \int_D W_t(\bm{r},\overline{\bm{r}})\cdot \bm{U}(\overline{\bm{r}})\,d\overline{\bm{r}}$ and $\hat{Y}_tS(\bm{r}) = \int_D Y_t(\bm{r},\overline{\bm{r}})\,S(\overline{\bm{r}})\,d\overline{\bm{r}}$. Note that $\hat{W}_0=\text{id}$, $\hat{Y}_0=\text{id}$, and $\bm{U}_t=\hat{W}_t\bm{U}$, $S_t = \hat{Y}_tS$ satisfy absolute boundary conditions, regardless of whether they are satisfied by the initial conditions $\bm{U},S$. We now precisely define our smooth operators for vector fields and scalar fields as follows.
\begin{definition}\label{smoothing_def}
Let $\alpha > 0$ denote a length scale. The \textbf{smoothing operators} on vector fields $\bm{U}$ and scalar fields $S$, respectively, are defined according to
\begin{gather}
\hat{\mathcal{W}}_\alpha\bm{U} = \hat{W}_{\alpha^2}\bm{U},\quad \hat{\mathcal{Y}}_\alpha S = \hat{Y}_{\alpha^2}S,
\end{gather}
where $\hat{W}_t$ and $\hat{Y}_t$ denote the fundamental solutions of the heat equations \eqref{smoothing_eqns}.
\end{definition}

The temporal scaling $t\sim\alpha^2$ is motivated by well-known asymptotics for heat kernels $W_t(\bm{r}_1,\bm{r}_2)\sim \exp(-|\bm{r}_1-\bm{r}_2|^2/4t)$. We smooth $1$-form densities $\bm{U}^*$ and $2$-forms $S^*$ using the corresponding adjoints $\hat{\mathcal{W}}_\alpha^*$, $\hat{\mathcal{Y}}_\alpha^*$, that is the linear operators defined by requiring
\begin{align*}
\int_D (\hat{\mathcal{W}}_\alpha^*\bm{U}^*)\cdot \bm{U} = \int_D \bm{U}^*\cdot \hat{\mathcal{W}}_\alpha\bm{U},\quad \int_D (\hat{\mathcal{Y}}_\alpha^*S^*)\,S = \int_D S^*\,\hat{\mathcal{Y}}_\alpha\,S,
\end{align*}
for each $\bm{U},S,\bm{U}^*,S^*$. Regardless of the type of field, we denote the result of smoothing using a subscript, e.g. $\bm{U}^*_\alpha = \hat{\mathcal{W}}_\alpha^*\bm{U}^*$. 

The following Lemma provides the precise form of regularized MHS spatial dynamics equations that underlie SPMHS. 
\begin{lemma}\label{regularized_hamiltonian_equations}
Let $\alpha_1,\alpha_2,\alpha_3$ denote positive real constants and set $\alpha = (\alpha_1,\alpha_2,\alpha_3)$. Given a trajectory $(\bm{U}^*_\zeta,S^*_\zeta,T^*_\zeta)\in\mathfrak{p}^*$, let $\bm{U}^*_{\alpha_1\,\zeta} = \hat{\mathcal{W}}^*_{\alpha_1}\bm{U}^*_\zeta$, $S^*_{\alpha_2\,\zeta} =\hat{\mathcal{Y}}^*_{\alpha_2}S^*_\zeta$, and $T^*_{\alpha_3\,\zeta} = \hat{\mathcal{Y}}^*_{\alpha_3}T^*_\zeta$, where $\hat{\mathcal{W}}_{\alpha_i}$ and $\hat{\mathcal{Y}}_{\alpha_i}$ denote the diffusive smoothing operators introduced in Def. \ref{smoothing_def}. Hamilton's equations on $\mathfrak{p}^*\ni X^* = (\bm{U}^*,S^*,T^*)$ associated with the regularized spatial dynamics Hamiltonian 
\begin{align}
\mathcal{H}_\zeta^\alpha(X^*) & = \int_D \left[\frac{1}{2}\left|\frac{\bm{U}^*_{\alpha_1}}{S^*_{\alpha_2}}\right|^2 + \frac{T^*_{\alpha_3}}{S^*_{\alpha_2}} - \frac{1}{2}\left(\frac{S^*_{\alpha_2}}{\omega_\zeta}\right)^2N_\zeta^2\right]\omega_\zeta - \int_D\bm{U}^*_{\alpha_1}\cdot \bm{N}_\zeta\label{reg_ham_func}
\end{align}
are given by
\begin{align}
\partial_\zeta\bm{U}^*_\zeta + \mathcal{L}_{\bm{v}_\zeta}\bm{U}^*_\zeta  &= d\left(\frac{T^*_\zeta}{S^*_\zeta}\right)\otimes \hat{\mathcal{Y}}_{\alpha_3}\left[\frac{\omega_\zeta}{S^*_{\alpha_2\,\zeta}}\right]S^*_\zeta + d\hat{\mathcal{Y}}_{\alpha_2}\left[\frac{\omega_\zeta}{S^*_{\alpha_2\,\zeta}}\left|\frac{\bm{U}^*_{\alpha_1\,\zeta}}{S^*_{\alpha_2\,\zeta}}\right|^2 + \frac{S^*_{\alpha_2\,\zeta}}{\omega_\zeta}N_\zeta^2\right]\otimes S^*_\zeta\nonumber\\
 &+ d\left\{\hat{\mathcal{Y}}_{\alpha_2}\left[\frac{\omega_\zeta}{S^*_{\alpha_2\,\zeta}}\frac{T^*_{\alpha_3\,\zeta}}{S^*_{\alpha_2\,\zeta}}\right] - \hat{\mathcal{Y}}_{\alpha_3}\left[\frac{\omega_\zeta}{S^*_{\alpha_2\,\zeta}}\right]\frac{T^*_\zeta}{S^*_\zeta}\right\}\otimes S^*_\zeta,\quad \bm{v}_\zeta = \hat{\mathcal{W}}_{\alpha_1}\left[\frac{\omega_\zeta}{S^*_{\alpha_2}}\left(\frac{\bm{U}^*_{\alpha_1}}{S^*_{\alpha_2}}\right)^{\sharp} - \bm{N}_\zeta\right] \label{Ustar_reg}\\
\partial_t S^*_\zeta + \mathcal{L}_{\bm{v}_\zeta}S^*_\zeta  &=0 \label{Sstar_reg}\\
\partial_t T^*_\zeta  +\mathcal{L}_{\bm{v}_\zeta}T^*_\zeta  &= 0.\label{Tstar_reg}
\end{align}
\end{lemma}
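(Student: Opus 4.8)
The plan is to reduce the claim to the abstract form of Hamilton's equations on $\mathfrak{p}^*$ already established inside the proof of Theorem \ref{hamiltonian_formulation}. There, for an \emph{arbitrary} time-dependent Hamiltonian, the Lie--Poisson bracket \eqref{LP_bracket_def} was unwound via Stokes' theorem and the Leibniz rule into the system \eqref{ham_momentum}--\eqref{ham_T}, which expresses $\partial_\zeta\bm{U}^*_\zeta$, $\partial_\zeta S^*_\zeta$, $\partial_\zeta T^*_\zeta$ purely in terms of the functional derivatives $\delta\mathcal{H}/\delta\bm{U}^*$, $\delta\mathcal{H}/\delta S^*$, $\delta\mathcal{H}/\delta T^*$. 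The only hypothesis used to discard the boundary integrals in \eqref{derivation_formula} was tangency of $\delta\mathcal{H}/\delta\bm{U}^*$ to $\partial D$. Consequently, once the three functional derivatives of $\mathcal{H}_\zeta^\alpha$ are in hand and their momentum component is shown to be tangent to $\partial D$, the equations of motion follow by direct substitution into \eqref{ham_momentum}--\eqref{ham_T}.

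First I would compute the functional derivatives of the regularized Hamiltonian \eqref{reg_ham_func}. The essential device is the self-adjointness defining the smoothing operators: under a variation $\bm{U}^* \mapsto \bm{U}^* + \lambda\,\delta\bm{U}^*$ the smoothed field varies as $\hat{\mathcal{W}}^*_{\alpha_1}\delta\bm{U}^*$, and the duality pairing lets me transfer $\hat{\mathcal{W}}_{\alpha_1}$ onto the complementary factor; likewise for $\hat{\mathcal{Y}}_{\alpha_2}$ and $\hat{\mathcal{Y}}_{\alpha_3}$ when varying $S^*$ and $T^*$. Mechanically this reproduces the unregularized derivatives recorded in the proof of Theorem \ref{hamiltonian_formulation}, but with every interior occurrence of $\bm{U}^*, S^*, T^*$ replaced by its smoothed counterpart and an outer smoothing operator applied to the whole expression:
\begin{align*}
\frac{\delta\mathcal{H}_\zeta^\alpha}{\delta\bm{U}^*} &= \hat{\mathcal{W}}_{\alpha_1}\left[\frac{\omega_\zeta}{S^*_{\alpha_2}}\left(\frac{\bm{U}^*_{\alpha_1}}{S^*_{\alpha_2}}\right)^{\sharp} - \bm{N}_\zeta\right] = \bm{v}_\zeta,\\
\frac{\delta\mathcal{H}_\zeta^\alpha}{\delta S^*} &= -\hat{\mathcal{Y}}_{\alpha_2}\left[\frac{\omega_\zeta}{S^*_{\alpha_2}}\left|\frac{\bm{U}^*_{\alpha_1}}{S^*_{\alpha_2}}\right|^2 + \frac{\omega_\zeta}{S^*_{\alpha_2}}\frac{T^*_{\alpha_3}}{S^*_{\alpha_2}} + \frac{S^*_{\alpha_2}}{\omega_\zeta}N_\zeta^2\right],\\
\frac{\delta\mathcal{H}_\zeta^\alpha}{\delta T^*} &= \hat{\mathcal{Y}}_{\alpha_3}\left[\frac{\omega_\zeta}{S^*_{\alpha_2}}\right].
\end{align*}
Next I would verify the boundary hypothesis, where the regularization actually simplifies matters relative to Theorem \ref{hamiltonian_formulation}: since $\bm{v}_\zeta = \delta\mathcal{H}_\zeta^\alpha/\delta\bm{U}^*$ is the output of the vector smoothing operator $\hat{\mathcal{W}}_{\alpha_1}$, it automatically satisfies the absolute boundary conditions built into \eqref{smoothing_eqns}, in particular $\bm{n}\cdot\bm{v}_\zeta = 0$ on $\partial D$. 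Thus $\bm{v}_\zeta$ is tangent to $\partial D$ with no appeal to Lemma \ref{bc_lemma}, the boundary integrals in \eqref{derivation_formula} drop, and the abstract equations \eqref{ham_momentum}--\eqref{ham_T} apply verbatim.

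Substituting the derivatives above, the $S^*$ and $T^*$ equations immediately become \eqref{Sstar_reg} and \eqref{Tstar_reg} with advecting velocity $\bm{v}_\zeta$. Moving the Lie-derivative term to the left, \eqref{ham_momentum} becomes $\partial_\zeta\bm{U}^*_\zeta + \mathcal{L}_{\bm{v}_\zeta}\bm{U}^*_\zeta = -d[\delta\mathcal{H}_\zeta^\alpha/\delta S^*]\otimes S^*_\zeta - d[\delta\mathcal{H}_\zeta^\alpha/\delta T^*]\otimes T^*_\zeta$, and the minus sign inside $\delta\mathcal{H}_\zeta^\alpha/\delta S^*$ turns the first term into $+d\,\hat{\mathcal{Y}}_{\alpha_2}[\cdots]\otimes S^*_\zeta$, which already supplies the kinetic-plus-lapse and the $T^*_{\alpha_3}$ gradients appearing in \eqref{Ustar_reg}. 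The one genuinely nontrivial step is matching the remaining $T^*$-dependence to the stated form, whose pressure terms are organized differently. I would write $T^*_\zeta = (T^*_\zeta/S^*_\zeta)\,S^*_\zeta$ and apply the Leibniz rule for $d$ to $-d\,\hat{\mathcal{Y}}_{\alpha_3}[\omega_\zeta/S^*_{\alpha_2}]\otimes T^*_\zeta$, splitting it into a piece $-d\{\hat{\mathcal{Y}}_{\alpha_3}[\omega_\zeta/S^*_{\alpha_2}]\,(T^*_\zeta/S^*_\zeta)\}\otimes S^*_\zeta$ that merges with the $S^*$-derivative gradient to form the brace in \eqref{Ustar_reg}, and a piece $+\hat{\mathcal{Y}}_{\alpha_3}[\omega_\zeta/S^*_{\alpha_2}]\,d(T^*_\zeta/S^*_\zeta)\otimes S^*_\zeta$ that reproduces the leading term of \eqref{Ustar_reg}. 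Collecting terms gives \eqref{Ustar_reg} exactly. The main obstacle throughout is therefore clerical rather than analytic: correctly propagating the adjoint smoothing operators through the functional derivatives and tracking the Leibniz split, with no new estimates or regularity arguments required.
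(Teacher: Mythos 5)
Your proposal is correct and follows essentially the same route as the paper: compute the functional derivatives of $\mathcal{H}^\alpha_\zeta$ via the adjoint relations defining the smoothing operators, observe that $\delta\mathcal{H}^\alpha_\zeta/\delta\bm{U}^*$ is automatically tangent to $\partial D$ by the absolute boundary conditions in \eqref{smoothing_eqns}, and then substitute into the abstract Hamilton equations \eqref{ham_momentum}--\eqref{ham_T} already derived in the proof of Theorem \ref{hamiltonian_formulation}. The only difference is that you spell out the Leibniz split of the $-d[\delta\mathcal{H}^\alpha_\zeta/\delta T^*]\otimes T^*_\zeta$ term (via $T^*_\zeta = (T^*_\zeta/S^*_\zeta)S^*_\zeta$) that the paper compresses into ``a simple direct calculation,'' and your bookkeeping there is accurate.
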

\begin{proof}
We begin by recording expressions for the functional derivatives of the regularized spatial dynamics Hamiltonian:
\begin{align*}
\frac{\delta\mathcal{H}_\zeta^\alpha}{\delta\bm{U}^*} & = \hat{\mathcal{W}}_{\alpha_1}\left[\frac{\omega_\zeta}{S^*_{\alpha_2}}\left(\frac{\bm{U}^*_{\alpha_1}}{S^*_{\alpha_2}}\right)^{\sharp} - \bm{N}_\zeta\right]\\
\frac{\delta\mathcal{H}_\zeta^{\alpha}}{\delta S^*} & = -\hat{\mathcal{Y}}_{\alpha_2}\left[\frac{\omega_\zeta}{S^*_{\alpha_2}}\left| \frac{\bm{U}^*_{\alpha_1}}{S^*_{\alpha_2}}\right|^2 + \frac{\omega_\zeta}{S^*_{\alpha_2}}\frac{T^*_{\alpha_3}}{S^*_{\alpha_2}} + \frac{S^*_{\alpha_2}}{\omega_\zeta}N_\zeta^2\right]\\
\frac{\delta\mathcal{H}_\zeta^{\alpha}}{\delta T^*} & = \hat{\mathcal{Y}}_{\alpha_3}\left[\frac{\omega_\zeta}{S^*_{\alpha_2}}\right]
\end{align*}
Notice that the vector field $\delta\mathcal{H}_\zeta^{\alpha}/\delta \bm{U}^*$ is automatically tangent to $\partial D$ because the diffusive smoothing operator $\hat{\mathcal{W}}_{\alpha_1}$ is defined using the absolute boundary conditions summarized in Eq.\,\eqref{smoothing_eqns}. We may therefore use the general form of Hamilton's equations on $\mathfrak{p}^*$ given previously in the proof of Theorem \ref{hamiltonian_formulation}, Eqs.\,\eqref{ham_momentum}-\eqref{ham_T}. A simple direct calculation leads to the desired evolution equations.
\end{proof}

\begin{definition}
The \textbf{SPMHS Hamiltonian} is the collective Hamiltonian $h_\zeta^\alpha = \mathcal{H}^\alpha_\zeta\circ\Gamma$ on $N$-particle phase space $(T^*D\times \mathbb{R}^2)^N$ associated with the smoothed spatial dynamics Hamiltonian \eqref{reg_ham_func} and the Poisson map $\Gamma$ described in Theorem \ref{thm:fleck_poisson_map}. Denoting points in $N$-particle phase space using the shorthand $(\underline{\bm{p}},\underline{\bm{q}},\underline{S}^*_0,\underline{T}^*_0)\in (T^*D\times \mathbb{R}^2)^N$, with 
\begin{align*}
\underline{\bm{p}} = (\bm{p}^1,\dots,\bm{p}^N),\quad\underline{\bm{q}} = (\bm{q}^1,\dots,\bm{q}^N),\quad \underline{S}^*_0 = (S^{*\,1}_0,\dots,S^{*\,N}_0),\quad \underline{T}^*_0 = (T^{*\,1}_0,\dots,T^{*\,N}_0),
\end{align*}
the SPMHS Hamiltonian is given explicitly by
\begin{align}
h^\alpha_\zeta(\underline{\bm{q}},\underline{\bm{p}},\underline{S}^*_0,\underline{T}^*_0) & = \frac{1}{2}\bigg(\underline{\bm{p}}-\mathbb{A}_\zeta(\underline{\bm{q}},\underline{S}^*_0)\bigg)^T\mathbb{M}_\zeta(\underline{\bm{q}},\underline{S}^*_0)\bigg(\underline{\bm{p}} - \mathbb{A}_\zeta(\underline{\bm{q}},\underline{S}^*_0)\bigg)+ \mathcal{V}_\zeta(\underline{\bm{q}},\underline{S}^*_0,\underline{T}^*_0),\label{reg_collective_hamiltonian}
\end{align}
where
\begin{align*}
\mathbb{M}_\zeta^{(a_1,\nu_1)\,(a_2\,\nu_2)}(\underline{\bm{q}},\underline{S}^*_0) & = \int_D \frac{[W_{\alpha_1^2}]^{\nu_1}_{\mu_1}(\bm{q}^{a_1},\bm{x}) h_\zeta^{\mu_1\,\mu_2}(\bm{x})\,[W_{\alpha_1^2}]^{\nu_2}_{\mu_2}(\bm{q}^{a_2},\bm{x})  }{W_S^2\sum_{c_1\,c_2} S_0^{*c_1}\,S_0^{*c_2} Y_{\alpha_2^2}(\bm{q}^{c_1},\bm{x})\,Y_{\alpha_2^2}(\bm{q}^{c_2},\bm{x}) }\,N_\zeta(\bm{x})\,\sqrt{\text{det}\,h_\zeta}d\bm{x}\\
\mathbb{N}_\zeta^{(a,\nu)}(\underline{\bm{q}}) & = \int_D [W_{\alpha_1^2}]^\nu_\mu(\bm{q}^a,\bm{x})\,N_\zeta^\mu(\bm{x})\,d\bm{x}\\
\mathbb{A}_\zeta(\underline{\bm{q}},\underline{S}^*_0) & = \mathbb{M}^{-1}_\zeta(\underline{\bm{q}},\underline{S}^*_0)\mathbb{N}_\zeta(\underline{\bm{q}})\\
\mathcal{V}_\zeta(\underline{\bm{q}},\underline{S}^*_0,\underline{T}^*_0) & =  \frac{W_T}{W_S}\int_D \frac{\sum_{c} T_0^{*\,c}Y_{\alpha_3^2}(\bm{q}^c,\bm{x})}{\sum_{\overline{c}}S_0^{*\overline{c}}\,Y_{\alpha_2^2}(\bm{q}^{\overline{c}},\bm{x})}\,N_\zeta(\bm{x})\,\sqrt{\text{det}\,h_\zeta}d\bm{x} -\frac{1}{2}\mathbb{N}^T_\zeta(\underline{\bm{q}})\mathbb{M}_\zeta^{-1}(\underline{\bm{q}},\underline{S}^*_0)\mathbb{N}_\zeta(\underline{\bm{q}}) \\
&- \frac{1}{2}W_S^2\int_D \sum_{c\,\overline{c}}S_0^{*c}S_0^{*\overline{c}}\,Y_{\alpha_2^2}(\bm{q}^c,\bm{x})\,Y_{\alpha_2^2}(\bm{q}^{\overline{c}},\bm{x})\,\frac{N_\zeta(\bm{x})}{\sqrt{\text{det}\,h_\zeta}}\,d\bm{x} 
\end{align*}
The \textbf{SPMHS equations of motion} are Hamilton's equations on $(T^*D\times\mathbb{R}^2)^N$ associated with the SPMHS Hamiltonian. They are given explicitly by
\begin{align*}
\partial_\zeta\underline{\bm{q}} = \partial_{\underline{\bm{p}}}h_\zeta^\alpha,\quad \partial_\zeta\underline{\bm{p}} = -\partial_{\underline{\bm{q}}}h_\zeta^{\alpha},\quad \partial_{\zeta}\underline{S}^*_0 = 0,\quad \partial_{\zeta}\underline{T}^*_0 = 0.
\end{align*}
Whenever a particle encounters the spatial boundary $\partial D$, it is reflected by reversing the normal component of the particle's momentum. 
\end{definition}

\begin{theorem}\label{SPMHS_Hamilton}
Let $\Gamma$ be the Poisson map described in Theorem \ref{thm:fleck_poisson_map}. Let $(\underline{\bm{q}}(\zeta),\underline{\bm{p}}(\zeta),\underline{S}^*_0,\underline{T}^*_0)$ be any solution of the SPMHS equations of motion. Let $I =(\zeta_1,\zeta_2)$ be any time interval over which no particle encounters $\partial D$. The image of $(\underline{\bm{q}}(\zeta),\underline{\bm{p}}(\zeta),\underline{S}^*_0,\underline{T}^*_0)$ under $\Gamma$ is a solution of the regularized spatial dynamics equations given in Lemma \ref{regularized_hamiltonian_equations} on the time interval $ I$.
\end{theorem}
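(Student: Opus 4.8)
The plan is to deduce the result almost entirely from the collectivisation principle of Guillemin and Sternberg\cite{Guillemin_1980}, using the two structural facts already established: that $\Gamma$ is a Poisson map (Theorem \ref{thm:fleck_poisson_map}), and that the regularized spatial dynamics equations \eqref{Ustar_reg}--\eqref{Tstar_reg} are precisely Hamilton's equations on $\mathfrak{p}^*$ for the regularized Hamiltonian $\mathcal{H}_\zeta^\alpha$ (Lemma \ref{regularized_hamiltonian_equations}). The SPMHS equations of motion are by definition Hamilton's equations for the collective Hamiltonian $h_\zeta^\alpha = \mathcal{H}_\zeta^\alpha\circ\Gamma$. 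The only substantive content, then, is to verify that the hypotheses of collectivisation genuinely hold along the singular (delta-supported) image of $\Gamma$, and that the boundary behavior is benign over the interval $I$.

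First I would establish that the collective Hamiltonian, and more importantly its functional derivatives, are well-defined along the image of $\Gamma$. Although $\Gamma(z)$ is a triple of delta-supported distributions, the regularized Hamiltonian \eqref{reg_ham_func} acts on $\bm{U}^*,S^*,T^*$ only through the smoothed fields $\bm{U}^*_{\alpha_1},S^*_{\alpha_2},T^*_{\alpha_3}$, which are smooth functions because heat-kernel smoothing maps distributions to smooth fields. Consequently every quotient and product appearing in $\mathcal{H}_\zeta^\alpha$ and in its functional derivatives $\delta\mathcal{H}_\zeta^\alpha/\delta\bm{U}^*$, $\delta\mathcal{H}_\zeta^\alpha/\delta S^*$, $\delta\mathcal{H}_\zeta^\alpha/\delta T^*$ (recorded in the proof of Lemma \ref{regularized_hamiltonian_equations}) is a smooth field even at $\Gamma(z)$; in particular $\delta\mathcal{H}_\zeta^\alpha/\delta\bm{U}^*$ is a smooth vector field tangent to $\partial D$. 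This is exactly the role of the regularization: the naive composition $\mathcal{H}_\zeta\circ\Gamma$ would involve ill-defined products of delta functions, whereas $h_\zeta^\alpha$ is the smooth function displayed in \eqref{reg_collective_hamiltonian}.

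With well-definedness secured, the core of the argument is the standard chain-rule computation underlying collectivisation. Let $z(\zeta) = (\underline{\bm{q}}(\zeta),\underline{\bm{p}}(\zeta),\underline{S}^*_0,\underline{T}^*_0)$ be the given SPMHS solution and fix an arbitrary functional $F:\mathfrak{p}^*\rightarrow\mathbb{R}$. Differentiating $F\circ\Gamma$ along $z(\zeta)$ and using in turn the SPMHS equations of motion, the identity $h_\zeta^\alpha = \mathcal{H}_\zeta^\alpha\circ\Gamma$, and the Poisson property of $\Gamma$ from Theorem \ref{thm:fleck_poisson_map}, one obtains
\begin{align*}
\frac{d}{d\zeta}\big(F\circ\Gamma\big)(z(\zeta)) &= \{F\circ\Gamma,\,\mathcal{H}_\zeta^\alpha\circ\Gamma\}_{(T^*D\times\mathbb{R}^2)^N}(z(\zeta))\\
&= \{F,\mathcal{H}_\zeta^\alpha\}_{\mathfrak{p}^*}\big(\Gamma(z(\zeta))\big).
\end{align*}
Since $F$ is arbitrary, this says precisely that the curve $\zeta\mapsto\Gamma(z(\zeta))$ satisfies Hamilton's equations on $\mathfrak{p}^*$ generated by $\mathcal{H}_\zeta^\alpha$. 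The explicit time-dependence of $\mathcal{H}_\zeta^\alpha$ causes no difficulty, since the Poisson brackets above are evaluated at fixed $\zeta$. Invoking Lemma \ref{regularized_hamiltonian_equations} then identifies these with the regularized spatial dynamics equations \eqref{Ustar_reg}--\eqref{Tstar_reg}, completing the argument.

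The remaining point is the hypothesis that no particle encounters $\partial D$ over $I$, which I expect to be the only genuine subtlety. Its purpose is twofold. First, on $I$ every $\bm{q}^i(\zeta)$ lies in the interior of $D$, so the distribution $\bm{U}^* = \sum_i\bm{p}^i\otimes\delta(\bm{x}-\bm{q}^i)\,dx\wedge dy$ is supported away from $\partial D$; hence $\Gamma(z(\zeta))$ lies in $\mathfrak{p}^*$ with the tangential boundary condition $x\,U_x + y\,U_y = 0$ on $\partial D$ holding trivially, and the particle momenta are genuinely well-defined rather than only defined modulo normal vectors. Second, avoiding the boundary ensures the reflection rule is never triggered, so the SPMHS flow on $I$ is a smooth Hamiltonian flow and the differentiation in the displayed computation is justified. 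The main obstacle is thus not the collectivisation mechanism, which is automatic, but confirming these regularity hypotheses; once the particles are interior the proof reduces to the Poisson-map identity above.
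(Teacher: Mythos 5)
Your proposal is correct and follows essentially the same route as the paper: the paper's proof simply observes that the regularized collective Hamiltonian is well-defined, recalls that $\Gamma$ is a Poisson map, and cites the Guillemin--Sternberg collectivisation theory as an immediate corollary. You merely unpack what the paper treats as a black box---the chain-rule identity $\frac{d}{d\zeta}(F\circ\Gamma) = \{F,\mathcal{H}_\zeta^\alpha\}_{\mathfrak{p}^*}\circ\Gamma$ and the role of the interior-particle hypothesis---so the content is the same, just made explicit.
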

\begin{proof}
After noticing that the regularized collective Hamiltonian \eqref{reg_collective_hamiltonian} is well-defined, and recalling from Theorem \ref{thm:fleck_poisson_map} that $\Gamma:(T^*D\times\mathbb{R}^2)\rightarrow\mathfrak{p}^*$ is a Poisson map, this is an immediate corollary of the Guillemin-Sternberg\cite{Guillemin_1980} collectivisation theory
\end{proof}



\section*{Discussion\label{sec:discussion}}

This work lays the theoretical foundation for studying magnetohydrostatics (MHS) as a spatial dynamical system, where a toroidal angle $\zeta$ is treated as ``time". It reveals the variational and Hamiltonian structure of the spatial dynamics equations, and uses the latter to formulate a theory of smoothed-particle magnetohydrostatics (SPMHS). Notably, SPMHS comprises a finite-dimensional Hamiltonian system whose periodic solutions correspond to exact solutions of a physically-plausible short-scale regularization of MHS. Finite-dimensionality side-steps traditional analytical difficulties leading to Grad's conjecture\cite{Grad_1967} on non-existence of three-dimensional equilibria. The spatial dynamics perspective also leads to an appealing resolution of the mixed ellipticity-hyperbolicity of the MHS equations in large-aspect ratio domains. Large aspect ratio implies the spatial dynamics equations comprise a fast-slow system with a normally-hyperbolic slow manifold\cite{Fenichel_1979}. Motion on the slow manifold captures the hyperbolic part of the equations, while fast motion normal to the slow manifold corresponds to the elliptic part. This means that approximate three-dimensional equilibria can be studied using the (hyperbolic) formal slow manifold reduction of spatial dyanmics, thereby side-stepping analytical issues associated with mixed type PDE.

The spatial dynamics approach to MHS represents an opportunity to improve computations of stellarator equilibrium configurations. Today, workhorse stellarator equilibrium solvers, like VMEC, DESC, and NSTAB, run with fairly coarse resolution in order to cope with sheet-like singularities that tend to appear near bad rational surfaces. This implies the solvers are not finding true solutions of the MHS equations, but rather solutions of some regularized problem that has yet to be completely characterized. This gap between the regularized problem solvers actually address and the standard MHS problem raises questions about the physical nature of stellarator equilibria that are difficult to address without a useful characteriztion of the regularization. In contrast, in the SPMHS approach to equilibrium computation, controlled, physically-plausible regularization occurs before formulation of any numerical scheme. Finite-dimensionality of SPMHS implies that it should therefore be possible to compute well-resolved solutions of the regularized equations. \emph{A priori} characterization of the regularization implies that the implications on stellarator theory of using solutions of the regularized equations in place of MHS solutions can be investigated directly. Similar possibilities offered by the slow manifold reduced MHS equations need to be investigated further. But the limit slow equations, which are equivalent to 2D Euler with advected scalar tracer, certainly enjoy a fairly well-understood theory at both the discrete and continuum levels.

The promise of the SPMHS approach should be contrasted with the multi-region relaxed MHD (MRxMHD) method implemented in the equilibrium solver SPEC\cite{Hudson_2012}. SPEC aims to compute genuine weak solutions of the MHS equations in the sense described precisely by Bruno-Laurence\cite{BL_1996}. But because the Bruno-Laurence theory only establishes existence of such solutions when deviations from axisymmetry are small, a good convergence theory for SPEC when applied to stellarators (which are usually far from being axisymmetric) remains out of reach\cite{Qu_2021}. In principle, and as with other stellarator equilibrium solvers, there is a gap between what SPEC actually computes and the MHS model it aims to solve. Moreover, best practices for applying SPEC as a workhorse stellarator equilibrium solver remain in development. In contrast, existence theory for SPMHS and convergence theory for time discretization of SPMHS promise to be comparatively simple because SPMHS is governed by ordinary differential equations (ODE) instead of PDE. This suggets feasibility of closing the model-solver gap for SPMHS. Practical application of SPMHS also stands to benefit from decades of experience with particle-based simulation methods\cite{Dawson_1983,Squire_2012,Evstatiev_2013,Kraus_2017,Burby_fdk_2017,Xiao_2018,Li_2019,Pinto_2022} within the field of plasma physics. Overall, both MRxMHD and SPMHS warrant further serious investigation.

Exploring the nature of the slow manifold reduced spatial dynamics equations in large-aspect-ratio domains represents a rich area of future research. As described in Section \ref{sec:fs_formulation}, asymptotic expansions provide one avenue for studying the slow manifold. For any practical application, such expansions need to be truncated. Performing this truncation while preserving the Hamiltonian structure of spatial dynamics described in Section \ref{sec:lie_poisson} stands as an open problem. This might be approachable by combining symplectic methods with symmetry reduction, as done previously for dynamical MHD\cite{Burby_TF_2017}, weakly-relativistic Vlasov-Maxwell\cite{Miloshevich_2021}, and Lorentz loop dynamics\cite{Burby_loops_2020}. A more direct approach would involve studying the slow manifold as a Poisson transversal\cite{Weinstein_1983}. Data-driven methods offer another avenue for investigating the slow manifold. Library-based learning methods such as weak-form SINDy\cite{Messenger_wSINDy_2021} could be used to compute terms in the slow manifold asymptotic expansion, thereby avoiding human error that inevitably creeps into manual higher-order perturbation theory. Alternatively, more expressive machine learning models such as neural networks could be used to learn the slow manifold from data, even when the large-aspect-ratio ordering parameter $\epsilon$ is not particularly small. It would be especially exciting to learn the slow manifold while preserving its Hamiltonian structure using Hamiltonian graph neural networks\cite{Bishnoi_2023}.

The SPMHS theory developed here is modeled on the simplest version of the Lie algebra reduction method\cite{SW_1994} introduced by Scovel-Weinstein. Applying the full Scovel-Weinstein machinery to spatial dynamics would result in a variant of SPMHS where each particle acquires additional shape degrees of freedom. Exploring the implications of this higher-order SPMHS theory warrants further investigation. At a minimum, it would represent a method for combining SPMHS particles together, as discussed by Holm-Jacobs\cite{Holm_2017} in the context of multipole vortex blobs. SPMHS represents a solution the MHS equations as a collection of flux tubes. Scovel-Weinstein theory may also be used to decompose a solution into a collection of flux surfaces. The theory would proceed along similar lines to Burby's application\cite{Burby_moments_2023} of Scovel-Weinstein theory to fluid moments. The slow manifold reduced spatial dynamics equations, being equivalent at leading-order to 2D Euler with advected tracer, should admit a Lie-Poisson Hamiltonian structure. If this can be confirmed, the structure can be used to develop SPMHS on the slow manifold.


\begin{thebibliography}{10}
\urlstyle{rm}
\expandafter\ifx\csname url\endcsname\relax
  \def\url#1{\texttt{#1}}\fi
\expandafter\ifx\csname urlprefix\endcsname\relax\def\urlprefix{URL }\fi
\expandafter\ifx\csname doiprefix\endcsname\relax\def\doiprefix{DOI: }\fi
\providecommand{\bibinfo}[2]{#2}
\providecommand{\eprint}[2][]{\url{#2}}

\bibitem{BKM_2020a}
\bibinfo{author}{Burby, J.~W.}, \bibinfo{author}{Kallinikos, N.} \&
  \bibinfo{author}{MacKay, R.~S.}
\newblock \bibinfo{journal}{\bibinfo{title}{{Generalized Grad-Shafranov
  equation for non-axisymmetric MHD equilibria}}}.
\newblock {\emph{\JournalTitle{Phys. Plasmas}}} \textbf{\bibinfo{volume}{27}},
  \bibinfo{pages}{102504}, \doiprefix\url{https://doi.org/10.1063/5.0015420}
  (\bibinfo{year}{2020}).

\bibitem{Lortz_1970}
\bibinfo{author}{Lortz, D.}
\newblock \bibinfo{journal}{\bibinfo{title}{{\"Uber die Existenz toroidaler
  magnetohydrostatischer Gleichgewichte ohne Rotationstransformation}}}.
\newblock {\emph{\JournalTitle{Zeitschrift f\"{u}r angewandte Mathematik und
  Physik ZAMP}}} \textbf{\bibinfo{volume}{21}}, \bibinfo{pages}{196--211}
  (\bibinfo{year}{1970}).

\bibitem{Weitzner_2020}
\bibinfo{author}{Weitzner, H.} \& \bibinfo{author}{Sengupta, W.}
\newblock \bibinfo{journal}{\bibinfo{title}{{Exact non-symmetric closed line
  vacuum magnetic fields in a topological torus}}}.
\newblock {\emph{\JournalTitle{Phys. Plasmas}}} \textbf{\bibinfo{volume}{27}},
  \bibinfo{pages}{196--211}, \doiprefix\url{https://doi.org/10.1063/1.5126688}
  (\bibinfo{year}{2020}).

\bibitem{BL_1996}
\bibinfo{author}{Bruno, O.~P.} \& \bibinfo{author}{Laurence, P.}
\newblock \bibinfo{journal}{\bibinfo{title}{{Existence of three‐dimensional
  toroidal MHD equilibria with nonconstant pressure}}}.
\newblock {\emph{\JournalTitle{Comm. Pure Appl. Math.}}}
  \textbf{\bibinfo{volume}{49}}, \bibinfo{pages}{717--764},
  \doiprefix\url{https://doi.org/10.1002/(SICI)1097-0312(199607)49:7%3C717::AID-CPA3%3E3.0.CO;2-C}
  (\bibinfo{year}{1996}).

\bibitem{Garabedian_2008}
\bibinfo{author}{Garabedian, P.~R.}
\newblock \bibinfo{journal}{\bibinfo{title}{{Three-dimensional analysis of
  tokamaks and stellarators}}}.
\newblock {\emph{\JournalTitle{Proc. Natl. Acad. Sci.}}}
  \textbf{\bibinfo{volume}{105}}, \bibinfo{pages}{13716--13719},
  \doiprefix\url{https://doi.org/10.1073/pnas.0806354105}
  (\bibinfo{year}{2008}).

\bibitem{Hole_2007}
\bibinfo{author}{Hole, M.}, \bibinfo{author}{Hudson, S.} \&
  \bibinfo{author}{Dewar, R.}
\newblock \bibinfo{journal}{\bibinfo{title}{{Equilibria and stability in
  partially relaxed plasma–vacuum systems}}}.
\newblock {\emph{\JournalTitle{Nucl. Fusion}}} \textbf{\bibinfo{volume}{47}},
  \bibinfo{pages}{746} (\bibinfo{year}{2007}).

\bibitem{Hudson_2012}
\bibinfo{author}{Hudson, S.~R.} \emph{et~al.}
\newblock \bibinfo{journal}{\bibinfo{title}{{Computation of multi-region
  relaxed magnetohydrodynamic equilibria}}}.
\newblock {\emph{\JournalTitle{Phys. Plasmas}}} \textbf{\bibinfo{volume}{19}},
  \bibinfo{pages}{112502}, \doiprefix\url{https://doi.org/10.1063/1.4765691}
  (\bibinfo{year}{2012}).

\bibitem{Qu_2021}
\bibinfo{author}{Qu, Z.~S.}, \bibinfo{author}{Hudson, S.~R.},
  \bibinfo{author}{Dewar, R.~L.}, \bibinfo{author}{Loizu, J.} \&
  \bibinfo{author}{Hole, M.~J.}
\newblock \bibinfo{journal}{\bibinfo{title}{{On the non-existence of
  stepped-pressure equilibria far from symmetry}}}.
\newblock {\emph{\JournalTitle{Plasma Phys. Control. Fusion}}}
  \textbf{\bibinfo{volume}{63}}, \bibinfo{pages}{125007},
  \doiprefix\url{https://doi.org/10.1088/1361-6587/ac2afc}
  (\bibinfo{year}{2021}).

\bibitem{Constantin_2023}
\bibinfo{author}{Constantin, P.} \& \bibinfo{author}{Pasqualotto, F.}
\newblock \bibinfo{journal}{\bibinfo{title}{{Magnetic Relaxation of a
  Voigt–MHD System}}}.
\newblock {\emph{\JournalTitle{Comm. Math. Phys.}}}
  \textbf{\bibinfo{volume}{402}}, \bibinfo{pages}{1931–--1952},
  \doiprefix\url{https://doi.org/10.1007/s00220-023-04770-1}
  (\bibinfo{year}{2023}).

\bibitem{Groves_Schneider_2001}
\bibinfo{author}{Groves, M.~D.} \& \bibinfo{author}{Schneider, G.}
\newblock \bibinfo{journal}{\bibinfo{title}{{Modulating pulse solutions for a
  class of nonlinear wave equations}}}.
\newblock {\emph{\JournalTitle{Commun. Math. Phys.}}}
  \textbf{\bibinfo{volume}{219}}, \bibinfo{pages}{489--522}
  (\bibinfo{year}{2001}).

\bibitem{Groves_Schneider_2005}
\bibinfo{author}{Groves, M.~D.} \& \bibinfo{author}{Schneider, G.}
\newblock \bibinfo{journal}{\bibinfo{title}{{Modulating pulse solutions for
  quasilinear wave equations}}}.
\newblock {\emph{\JournalTitle{J. Diff. Eq.}}} \textbf{\bibinfo{volume}{219}},
  \bibinfo{pages}{221--258} (\bibinfo{year}{2005}).

\bibitem{Groves_Schneider_2008}
\bibinfo{author}{Groves, M.~D.} \& \bibinfo{author}{Schneider, G.}
\newblock \bibinfo{journal}{\bibinfo{title}{{Modulating pulse solutions to
  quadratic quasilinear wave equations over exponentially long length
  scales}}}.
\newblock {\emph{\JournalTitle{Commun. Math. Phys.}}}
  \textbf{\bibinfo{volume}{278}}, \bibinfo{pages}{567--635}
  (\bibinfo{year}{2008}).

\bibitem{Schneider_Uecker_2017}
\bibinfo{author}{Schneider, G.} \& \bibinfo{author}{Uecker, H.}
\newblock \emph{\bibinfo{title}{{Nonlinear PDEs A Dynamical Systems Approach}}}
  (\bibinfo{publisher}{American Mathematical Society},
  \bibinfo{address}{Providence, RI}, \bibinfo{year}{2017}).

\bibitem{Beck_2020}
\bibinfo{author}{Beck, M.}
\newblock \bibinfo{journal}{\bibinfo{title}{{Spectral stability and spatial
  dynamics in partial differential equations}}}.
\newblock {\emph{\JournalTitle{Not. Am. Math. Soc.}}}
  \textbf{\bibinfo{volume}{67}}, \bibinfo{pages}{500--507}
  (\bibinfo{year}{2020}).

\bibitem{SW_1994}
\bibinfo{author}{Scovel, C.} \& \bibinfo{author}{Weinstein, A.}
\newblock \bibinfo{journal}{\bibinfo{title}{{Finite dimensional lie-poisson
  approximations to vlasov-poisson equations}}}.
\newblock {\emph{\JournalTitle{Comm. Pure Appl. Math.}}}
  \textbf{\bibinfo{volume}{47}}, \bibinfo{pages}{683--709},
  \doiprefix\url{https://doi.org/10.1002/cpa.3160470505}
  (\bibinfo{year}{1994}).

\bibitem{Monaghan_1992}
\bibinfo{author}{Monaghan, J.~J.}
\newblock \bibinfo{journal}{\bibinfo{title}{{Smoothed particle
  hydrodynamics}}}.
\newblock {\emph{\JournalTitle{Ann. Rev. Astron. Astrophys.}}}
  \textbf{\bibinfo{volume}{30}}, \bibinfo{pages}{543--574}
  (\bibinfo{year}{1992}).

\bibitem{Grad_1967}
\bibinfo{author}{Grad, H.}
\newblock \bibinfo{journal}{\bibinfo{title}{{Toroidal containment of a
  plasma}}}.
\newblock {\emph{\JournalTitle{Phys. Fluids}}} \textbf{\bibinfo{volume}{10}},
  \bibinfo{pages}{137--154} (\bibinfo{year}{1967}).

\bibitem{Fenichel_1979}
\bibinfo{author}{Fenichel, N.}
\newblock \bibinfo{journal}{\bibinfo{title}{{Geometric singular perturbation
  theory for ordinary differential equations}}}.
\newblock {\emph{\JournalTitle{J. Differ. Eq.}}} \textbf{\bibinfo{volume}{31}},
  \bibinfo{pages}{53--98} (\bibinfo{year}{1979}).

\bibitem{Lorenz_1986}
\bibinfo{author}{Lorenz, E.~N.}
\newblock \bibinfo{journal}{\bibinfo{title}{{On the existence of a slow
  manifold}}}.
\newblock {\emph{\JournalTitle{J. Atmos. Sci.}}} \textbf{\bibinfo{volume}{43}},
  \bibinfo{pages}{1547--1557} (\bibinfo{year}{1986}).

\bibitem{Lorenz_1987}
\bibinfo{author}{Lorenz, E.~N.} \& \bibinfo{author}{Krishnamurthy, V.}
\newblock \bibinfo{journal}{\bibinfo{title}{{On the nonexistence of a slow
  manifold}}}.
\newblock {\emph{\JournalTitle{J. Atmos. Sci.}}} \textbf{\bibinfo{volume}{44}},
  \bibinfo{pages}{2940--2950} (\bibinfo{year}{1987}).

\bibitem{Lorenz_1992}
\bibinfo{author}{Lorenz, E.~N.}
\newblock \bibinfo{journal}{\bibinfo{title}{{The slow manifold --- what is
  it?}}}
\newblock {\emph{\JournalTitle{J. Atmos. Sci.}}} \textbf{\bibinfo{volume}{49}},
  \bibinfo{pages}{2449--2451} (\bibinfo{year}{1992}).

\bibitem{MacKay_2004}
\bibinfo{author}{MacKay, R.~S.}
\newblock \bibinfo{title}{Slow manifolds}.
\newblock In \bibinfo{editor}{Dauxois, T.}, \bibinfo{editor}{Litvak-Hinenzon,
  A.}, \bibinfo{editor}{MacKay, R.~S.} \& \bibinfo{editor}{Spanoudaki, A.}
  (eds.) \emph{\bibinfo{booktitle}{Energy Localisation and Transfer}},
  \bibinfo{pages}{149--192} (\bibinfo{publisher}{World Scientific},
  \bibinfo{year}{2004}).

\bibitem{Burby_Klotz_2020}
\bibinfo{author}{Burby, J.~W.} \& \bibinfo{author}{Klotz, T.~J.}
\newblock \bibinfo{journal}{\bibinfo{title}{{Slow manifold reduction for plasma
  science}}}.
\newblock {\emph{\JournalTitle{Comm. Nonlin. Sci. Numer. Simul.}}}
  \textbf{\bibinfo{volume}{89}}, \bibinfo{pages}{105289},
  \doiprefix\url{https://doi.org/10.1016/j.cnsns.2020.105289}
  (\bibinfo{year}{2020}).

\bibitem{Strauss_1975}
\bibinfo{author}{Strauss, H.~R.}
\newblock \bibinfo{journal}{\bibinfo{title}{{Nonlinear, three-dimensional
  magnetohydronamics of noncircular tokamaks}}}.
\newblock {\emph{\JournalTitle{Phys. Fluids}}} \textbf{\bibinfo{volume}{19}},
  \bibinfo{pages}{134--140} (\bibinfo{year}{1975}).

\bibitem{Kruskal_Kulsrud_1958}
\bibinfo{author}{Kruskal, M.~D.} \& \bibinfo{author}{Kulsrud, R.~M.}
\newblock \bibinfo{journal}{\bibinfo{title}{{Equilibrium of a magnetically
  confined plasma in a toroid}}}.
\newblock {\emph{\JournalTitle{Phys. Fluids}}} \textbf{\bibinfo{volume}{1}},
  \bibinfo{pages}{265--274}, \doiprefix\url{https://doi.org/10.1063/1.1705884}
  (\bibinfo{year}{1958}).

\bibitem{Newcomb_1962}
\bibinfo{author}{Newcomb, W.~A.}
\newblock \bibinfo{journal}{\bibinfo{title}{{Lagrangian and Hamiltonian methods
  in magnetohydrodynamics}}}.
\newblock {\emph{\JournalTitle{Nucl. Fusion Suppl.}}}
  \textbf{\bibinfo{volume}{2}}, \bibinfo{pages}{451} (\bibinfo{year}{1962}).

\bibitem{Haeseleer_1991}
\bibinfo{author}{D`Haeseleer, W.~D.}, \bibinfo{author}{Hichon, W. N.~G.},
  \bibinfo{author}{Callen, J.~D.} \& \bibinfo{author}{Shohet, J.~L.}
\newblock \emph{\bibinfo{title}{{Flux coordinates and magnetic field
  structure}}} (\bibinfo{publisher}{Springer}, \bibinfo{address}{Berlin},
  \bibinfo{year}{1991}).

\bibitem{XJ_2006}
\bibinfo{author}{Xanthopoulos, X.} \& \bibinfo{author}{Jenko, F.}
\newblock \bibinfo{journal}{\bibinfo{title}{{Clebsch-type coordinates for
  nonlinear gyrokinetics in general toroidal configurations}}}.
\newblock {\emph{\JournalTitle{Phys. Plasmas}}} \textbf{\bibinfo{volume}{13}},
  \bibinfo{pages}{092301}, \doiprefix\url{https://doi.org/10.1063/1.2338818}
  (\bibinfo{year}{2006}).

\bibitem{Grad_1964}
\bibinfo{author}{Grad, H.}
\newblock \bibinfo{journal}{\bibinfo{title}{{Some new variational properties of
  hydromagnetic equilibria}}}.
\newblock {\emph{\JournalTitle{Phys. Fluids}}} \textbf{\bibinfo{volume}{7}},
  \bibinfo{pages}{1283--1292},
  \doiprefix\url{https://doi.org/10.1063/1.1711373} (\bibinfo{year}{1964}).

\bibitem{Bhattacharjee_1984}
\bibinfo{author}{Bhattacharjee, A.}, \bibinfo{author}{Wiley, J.~C.} \&
  \bibinfo{author}{Dewar, R.~L.}
\newblock \bibinfo{journal}{\bibinfo{title}{{Variational method for
  three-dimensional toroidal equilibria}}}.
\newblock {\emph{\JournalTitle{Comp. Phys. Comm.}}}
  \textbf{\bibinfo{volume}{31}}, \bibinfo{pages}{213--225},
  \doiprefix\url{https://doi.org/10.1016/0010-4655(84)90046-8}
  (\bibinfo{year}{1984}).

\bibitem{Greene_1962}
\bibinfo{author}{Greene, J.~M.} \& \bibinfo{author}{Johnson, J.~L.}
\newblock \bibinfo{journal}{\bibinfo{title}{{Stability Criterion for Arbitrary
  Hydromagnetic Equilibria}}}.
\newblock {\emph{\JournalTitle{Phys. Fluids}}} \textbf{\bibinfo{volume}{5}},
  \bibinfo{pages}{510--517}, \doiprefix\url{https://doi.org/10.1063/1.1706651}
  (\bibinfo{year}{1962}).

\bibitem{Abraham_2008}
\bibinfo{author}{Abraham, R.} \& \bibinfo{author}{Marsden, J.~E.}
\newblock \emph{\bibinfo{title}{{Foundations of mechanics}}}
  (\bibinfo{publisher}{American Mathematical Society},
  \bibinfo{address}{Providence, Rhode Island}, \bibinfo{year}{2008}).

\bibitem{Banyaga_1974}
\bibinfo{author}{Banyaga, A.}
\newblock \bibinfo{journal}{\bibinfo{title}{{Formes-volume sur les
  vari\'{e}t\'{e} \`{a} bord}}}.
\newblock {\emph{\JournalTitle{Enseignement Math.}}}
  \textbf{\bibinfo{volume}{20}}, \bibinfo{pages}{127--131}
  (\bibinfo{year}{1974}).

\bibitem{RWhite_2014}
\bibinfo{author}{White, R.~B.}
\newblock \emph{\bibinfo{title}{{The theory of toroidally confined plasmas, 3rd
  ed.}}} (\bibinfo{publisher}{Imperial College Press},
  \bibinfo{address}{London}, \bibinfo{year}{2014}).

\bibitem{BKM_2020b}
\bibinfo{author}{Burby, J.~W.}, \bibinfo{author}{Kallinikos, N.} \&
  \bibinfo{author}{MacKay, R.~S.}
\newblock \bibinfo{journal}{\bibinfo{title}{{Some mathematics for
  quasisymmetry}}}.
\newblock {\emph{\JournalTitle{J. Math. Phys.}}} \textbf{\bibinfo{volume}{61}},
  \bibinfo{pages}{093503}, \doiprefix\url{https://doi.org/10.1063/1.5142487}
  (\bibinfo{year}{2020}).

\bibitem{Gorban_2004}
\bibinfo{author}{Gorban, A.~N.}, \bibinfo{author}{Karlin, I.~V.} \&
  \bibinfo{author}{Zinovyev, A.~Y.}
\newblock \bibinfo{journal}{\bibinfo{title}{Constructive methods of invariant
  manifolds for kinetic problems}}.
\newblock {\emph{\JournalTitle{Phys. Rep.}}} \textbf{\bibinfo{volume}{396}},
  \bibinfo{pages}{197--403}, \doiprefix\url{doi:10.1016/j.physrep.2004.03.006}
  (\bibinfo{year}{2004}).

\bibitem{Gorban_2018}
\bibinfo{author}{Gorban, A.~N.}
\newblock \bibinfo{journal}{\bibinfo{title}{Hilbert's sixth problem: the
  endless road to rigour}}.
\newblock {\emph{\JournalTitle{Philos Trans A Math Phys Eng Sci}}}
  \textbf{\bibinfo{volume}{376}}, \bibinfo{pages}{20170238},
  \doiprefix\url{doi:10.1098/rsta.2017.0238} (\bibinfo{year}{2018}).

\bibitem{Boozer_1983}
\bibinfo{author}{Boozer, A.~H.}
\newblock \bibinfo{journal}{\bibinfo{title}{{Transport and isomorphic
  equilibria}}}.
\newblock {\emph{\JournalTitle{Phys. Fluids}}} \textbf{\bibinfo{volume}{26}},
  \bibinfo{pages}{496--499}, \doiprefix\url{https://doi.org/10.1063/1.864166}
  (\bibinfo{year}{1983}).

\bibitem{NZ_1988}
\bibinfo{author}{N\"urenberg, J.} \& \bibinfo{author}{Zille, R.}
\newblock \bibinfo{journal}{\bibinfo{title}{{Quasihelically symmetric toroidal
  stellarators}}}.
\newblock {\emph{\JournalTitle{Phys. Lett. A}}} \textbf{\bibinfo{volume}{129}},
  \bibinfo{pages}{113--117},
  \doiprefix\url{https://doi.org/10.1016/0375-9601(88)90080-1}
  (\bibinfo{year}{1988}).

\bibitem{Burby_Qin_2013}
\bibinfo{author}{Burby, J.~W.} \& \bibinfo{author}{Qin, H.}
\newblock \bibinfo{journal}{\bibinfo{title}{{Toroidal precession as a geometric
  phase}}}.
\newblock {\emph{\JournalTitle{Phys. Plasmas}}} \textbf{\bibinfo{volume}{20}},
  \bibinfo{pages}{012511}, \doiprefix\url{https://doi.org/10.1063/1.4789377}
  (\bibinfo{year}{2013}).

\bibitem{Helander_2014}
\bibinfo{author}{Helander, P.}
\newblock \bibinfo{journal}{\bibinfo{title}{{Theory of plasma confinement in
  non-axisymmetric magnetic fields}}}.
\newblock {\emph{\JournalTitle{Rep. Prog. Phys.}}}
  \textbf{\bibinfo{volume}{77}}, \bibinfo{pages}{087001},
  \doiprefix\url{https://doi.org/10.1088/0034-4885/77/8/087001}
  (\bibinfo{year}{2014}).

\bibitem{Rodriguez_2020}
\bibinfo{author}{Rodr\'guez, E.}, \bibinfo{author}{Helander, P.} \&
  \bibinfo{author}{Bahattacharjee, A.}
\newblock \bibinfo{journal}{\bibinfo{title}{{Necessary and sufficient
  conditions for quasisymmetry}}}.
\newblock {\emph{\JournalTitle{Phys. Plasmas}}} \textbf{\bibinfo{volume}{27}},
  \bibinfo{pages}{062501}, \doiprefix\url{https://doi.org/10.1063/5.0008551}
  (\bibinfo{year}{2020}).

\bibitem{Garren_Boozer_1991}
\bibinfo{author}{Garren, D.~A.} \& \bibinfo{author}{Boozer, A.~H.}
\newblock \bibinfo{journal}{\bibinfo{title}{{Existence of quasihelically
  symmetric stellarators}}}.
\newblock {\emph{\JournalTitle{Phys. Fluids B}}} \textbf{\bibinfo{volume}{3}},
  \bibinfo{pages}{2822--2834}, \doiprefix\url{https://doi.org/10.1063/1.859916}
  (\bibinfo{year}{1991}).

\bibitem{Landreman_Paul_2022}
\bibinfo{author}{Landreman, M.} \& \bibinfo{author}{Paul, E.}
\newblock \bibinfo{journal}{\bibinfo{title}{{Magnetic fields with precise
  quasisymmetry for plasma confinement}}}.
\newblock {\emph{\JournalTitle{Phys. Rev. Lett.}}}
  \textbf{\bibinfo{volume}{128}}, \bibinfo{pages}{035001},
  \doiprefix\url{https://doi.org/10.1103/PhysRevLett.128.035001}
  (\bibinfo{year}{2022}).

\bibitem{Kruskal_1962}
\bibinfo{author}{Kruskal, M.~D.}
\newblock \bibinfo{journal}{\bibinfo{title}{{Asymptotic theory of hamiltonian
  and other systems with all solutions nearly periodic}}}.
\newblock {\emph{\JournalTitle{J. Math. Phys.}}} \textbf{\bibinfo{volume}{3}},
  \bibinfo{pages}{806--828}, \doiprefix\url{https://doi.org/10.1063/1.1724285}
  (\bibinfo{year}{1962}).

\bibitem{Burby_Squire_2020}
\bibinfo{author}{Burby, J.~W.} \& \bibinfo{author}{Squire, J.}
\newblock \bibinfo{journal}{\bibinfo{title}{{General formulas for adiabatic
  invariants in nearly periodic Hamiltonian systems}}}.
\newblock {\emph{\JournalTitle{J. Plasma Phys.}}}
  \textbf{\bibinfo{volume}{86}}, \bibinfo{pages}{835860601},
  \doiprefix\url{https://doi.org/10.1017/S002237782000080X}
  (\bibinfo{year}{2020}).

\bibitem{Cotter_Holm_2012}
\bibinfo{author}{Cotter, C.~J.} \& \bibinfo{author}{Holm, D.~D.}
\newblock \bibinfo{journal}{\bibinfo{title}{{On Noether's theorem for the
  Euler-Poincar\'e equation on the diffeomorphism group with advected
  parameters}}}.
\newblock {\emph{\JournalTitle{Found. Comp. Math.}}}
  \textbf{\bibinfo{volume}{13}}, \bibinfo{pages}{457--477},
  \doiprefix\url{https://doi.org/10.1007/s10208-012-9126-8}
  (\bibinfo{year}{2012}).

\bibitem{Holm_EP_1998}
\bibinfo{author}{Holm, D.~D.}, \bibinfo{author}{Marsden, J.~E.} \&
  \bibinfo{author}{Ratiu, T.~S.}
\newblock \bibinfo{journal}{\bibinfo{title}{{The Euler-Poincar\'e equations and
  semidirect products with applications to continuum theories}}}.
\newblock {\emph{\JournalTitle{Adv. Math.}}} \textbf{\bibinfo{volume}{137}},
  \bibinfo{pages}{1--81},
  \doiprefix\url{https://doi.org/10.1006/aima.1998.1721}
  (\bibinfo{year}{1998}).

\bibitem{Marsden_SD_1984}
\bibinfo{author}{Marsden, J.~E.}, \bibinfo{author}{Ratiu, T.} \&
  \bibinfo{author}{Weinstein, A.}
\newblock \bibinfo{journal}{\bibinfo{title}{{Semidirect products and reduction
  in mechanics}}}.
\newblock {\emph{\JournalTitle{Trans. Am. Math. Soc.}}}
  \textbf{\bibinfo{volume}{281}}, \bibinfo{pages}{147--177},
  \doiprefix\url{https://doi.org/10.2307/1999527} (\bibinfo{year}{1984}).

\bibitem{Mielke_1992}
\bibinfo{author}{Mielke, A.}
\newblock \bibinfo{journal}{\bibinfo{title}{{On nonlinear problems of mixed
  type: A qualitative theory using infinite-dimensional center manifolds}}}.
\newblock {\emph{\JournalTitle{J. Dyn. Diff. Eq.}}}
  \textbf{\bibinfo{volume}{4}}, \bibinfo{pages}{419--443}
  (\bibinfo{year}{1992}).

\bibitem{Schwarz_1995}
\bibinfo{author}{Schwarz, G.}
\newblock \emph{\bibinfo{title}{{Hodge Decomposition: A Method for Solving
  Boundary Value Problems}}} (\bibinfo{publisher}{Springer-Verlag},
  \bibinfo{address}{Berlin Heidelberg}, \bibinfo{year}{1995}).

\bibitem{Morrison_1984}
\bibinfo{author}{Morrison, P.~J.} \& \bibinfo{author}{Hazeltine, R.~D.}
\newblock \bibinfo{journal}{\bibinfo{title}{{Hamiltonian formulation of reduced
  magnetohydrodynamics}}}.
\newblock {\emph{\JournalTitle{Phys. Fluids}}} \textbf{\bibinfo{volume}{27}},
  \bibinfo{pages}{886--897} (\bibinfo{year}{1984}).

\bibitem{Burby_TF_2017}
\bibinfo{author}{Burby, J.~W.}
\newblock \bibinfo{journal}{\bibinfo{title}{{Magnetohydrodynamic motion of a
  two-fluid plasma}}}.
\newblock {\emph{\JournalTitle{Phys. Plasmas}}}
  \textbf{\bibinfo{volume}{2017}}, \bibinfo{pages}{082104},
  \doiprefix\url{https://doi.org/10.1063/1.4994068} (\bibinfo{year}{2017}).

\bibitem{Miloshevich_2021}
\bibinfo{author}{Miloshevich, G.} \& \bibinfo{author}{Burby, J.~W.}
\newblock \bibinfo{journal}{\bibinfo{title}{{Hamiltonian reduction of
  Vlasov–Maxwell to a dark slow manifold}}}.
\newblock {\emph{\JournalTitle{J. Plasma Phys.}}}
  \textbf{\bibinfo{volume}{87}}, \bibinfo{pages}{835870301},
  \doiprefix\url{https://doi.org/10.1017/S0022377821000556}
  (\bibinfo{year}{2021}).

\bibitem{Dawson_1983}
\bibinfo{author}{Dawson, J.~M.}
\newblock \bibinfo{journal}{\bibinfo{title}{{Particle simulation of plasmas}}}.
\newblock {\emph{\JournalTitle{Rev. Mod. Phys.}}}
  \textbf{\bibinfo{volume}{55}}, \bibinfo{pages}{403--447}
  (\bibinfo{year}{1983}).

\bibitem{Squire_2012}
\bibinfo{author}{Squire, J.}, \bibinfo{author}{Qin, H.} \&
  \bibinfo{author}{Tang, W.~M.}
\newblock \bibinfo{journal}{\bibinfo{title}{{Geometric integration of the
  Vlasov-Maxwell system with a variational particle-in-cell scheme}}}.
\newblock {\emph{\JournalTitle{Phys. Plasmas}}} \textbf{\bibinfo{volume}{19}},
  \bibinfo{pages}{084501}, \doiprefix\url{https://doi.org/10.1063/1.4742985}
  (\bibinfo{year}{2012}).

\bibitem{Evstatiev_2013}
\bibinfo{author}{Evstatiev, E.} \& \bibinfo{author}{Shadwick, B.}
\newblock \bibinfo{journal}{\bibinfo{title}{{Variational formulation of
  particle algorithms for kinetic plasma simulations}}}.
\newblock {\emph{\JournalTitle{J. Comp. Phys.}}}
  \textbf{\bibinfo{volume}{245}}, \bibinfo{pages}{376--398}
  (\bibinfo{year}{2013}).

\bibitem{Kraus_2017}
\bibinfo{author}{Kraus, M.}, \bibinfo{author}{Kormann, K.},
  \bibinfo{author}{Morrison, P.~J.} \& \bibinfo{author}{Sonnendr{\"u}cker, E.}
\newblock \bibinfo{journal}{\bibinfo{title}{{GEMPIC: geometric electromagnetic
  particle-in-cell methods}}}.
\newblock {\emph{\JournalTitle{J. Plasma Phys.}}}
  \textbf{\bibinfo{volume}{83}}, \bibinfo{pages}{905830401},
  \doiprefix\url{10.1017/S002237781700040X} (\bibinfo{year}{2017}).

\bibitem{Burby_fdk_2017}
\bibinfo{author}{Burby, J.~W.}
\newblock \bibinfo{journal}{\bibinfo{title}{{Finite-dimensional collisionless
  kinetic theory}}}.
\newblock {\emph{\JournalTitle{Phys. Plasmas}}} \textbf{\bibinfo{volume}{24}},
  \bibinfo{pages}{032101}, \doiprefix\url{https://doi.org/10.1063/1.4976849}
  (\bibinfo{year}{2017}).

\bibitem{Xiao_2018}
\bibinfo{author}{Xiao, J.}, \bibinfo{author}{Qin, H.} \& \bibinfo{author}{Liu,
  J.}
\newblock \bibinfo{journal}{\bibinfo{title}{{Structure-preserving geometric
  particle-in-cell methods for Vlasov-Maxwell systems}}}.
\newblock {\emph{\JournalTitle{Plasma Sci. Tech.}}}
  \textbf{\bibinfo{volume}{20}}, \bibinfo{pages}{110501},
  \doiprefix\url{10.1088/2058-6272/aac3d1} (\bibinfo{year}{2018}).

\bibitem{Li_2019}
\bibinfo{author}{Li, Y.} \emph{et~al.}
\newblock \bibinfo{journal}{\bibinfo{title}{{Solving the Vlasov–Maxwell
  equations using Hamiltonian splitting}}}.
\newblock {\emph{\JournalTitle{J. Comp. Phys.}}}
  \textbf{\bibinfo{volume}{396}}, \bibinfo{pages}{381--399},
  \doiprefix\url{https://doi.org/10.1016/j.jcp.2019.06.070}
  (\bibinfo{year}{2019}).

\bibitem{Pinto_2022}
\bibinfo{author}{Pinto, M.~C.}, \bibinfo{author}{Kormann, K.} \&
  \bibinfo{author}{Sonnendr{\"u}cker, E.}
\newblock \bibinfo{journal}{\bibinfo{title}{{Variational Framework for
  Structure-Preserving Electromagnetic Particle-in-Cell Methods}}}.
\newblock {\emph{\JournalTitle{J. Sci. Comput.}}}
  \textbf{\bibinfo{volume}{91}}, \bibinfo{pages}{46},
  \doiprefix\url{https://doi.org/10.1007/s10915-022-01781-3}
  (\bibinfo{year}{2022}).

\bibitem{Spruit_1981}
\bibinfo{author}{Spruit, H.~C.}
\newblock \bibinfo{journal}{\bibinfo{title}{{Motion of magnetic flux tubes in
  the solar convection zone and chromosphere}}}.
\newblock {\emph{\JournalTitle{Astron. Astrophys.}}}
  \textbf{\bibinfo{volume}{98}}, \bibinfo{pages}{155--160}
  (\bibinfo{year}{1981}).

\bibitem{Achterberg_1996a}
\bibinfo{author}{Achterberg, A.}
\newblock \bibinfo{journal}{\bibinfo{title}{{Variational principle for slender
  flux tubes I: General equations and added mass effects}}}.
\newblock {\emph{\JournalTitle{Astron. Astrophys.}}}
  \textbf{\bibinfo{volume}{313}}, \bibinfo{pages}{1008--1015}
  (\bibinfo{year}{1996}).

\bibitem{Achterberg_1996b}
\bibinfo{author}{Achterberg, A.}
\newblock \bibinfo{journal}{\bibinfo{title}{{Variational principle for slender
  flux tubes II: Stability of flux tubes in a disk}}}.
\newblock {\emph{\JournalTitle{Astron. Astrophys.}}}
  \textbf{\bibinfo{volume}{313}}, \bibinfo{pages}{1016--1025}
  (\bibinfo{year}{1996}).

\bibitem{DeForest_2007}
\bibinfo{author}{DeForest, C.~E.} \& \bibinfo{author}{Kankelborg, C.~C.}
\newblock \bibinfo{journal}{\bibinfo{title}{{Fluxon modeling of low-beta
  plasmas}}}.
\newblock {\emph{\JournalTitle{J. Atmos. Sol. Terr. Phys.}}}
  \textbf{\bibinfo{volume}{69}}, \bibinfo{pages}{116--128},
  \doiprefix\url{https://doi.org/10.1016/j.jastp.2006.06.011}
  (\bibinfo{year}{2007}).

\bibitem{Rachmeler_2009}
\bibinfo{author}{Rachmeler, L.~A.}, \bibinfo{author}{DeForest, C.~E.} \&
  \bibinfo{author}{Kankelborg, C.~C.}
\newblock \bibinfo{journal}{\bibinfo{title}{{Reconnectionless CME eruption:
  Putting the Aly-Sturrock conjecture to rest}}}.
\newblock {\emph{\JournalTitle{Astrophys. J.}}}
  \textbf{\bibinfo{volume}{2009}}, \bibinfo{pages}{1431--1436},
  \doiprefix\url{http://dx.doi.org/10.1088/0004-637X/693/2/1431}
  (\bibinfo{year}{2009}).

\bibitem{Guillemin_1980}
\bibinfo{author}{Guillemin, V.} \& \bibinfo{author}{Sternberg, S.}
\newblock \bibinfo{journal}{\bibinfo{title}{{The moment map and collective
  motion}}}.
\newblock {\emph{\JournalTitle{Ann. Phys.}}} \textbf{\bibinfo{volume}{127}},
  \bibinfo{pages}{220--253},
  \doiprefix\url{https://doi.org/10.1016/0003-4916(80)90155-4}
  (\bibinfo{year}{1980}).

\bibitem{Guerini_Savo_2002}
\bibinfo{author}{Guerini, P.} \& \bibinfo{author}{Savo, A.}
\newblock \bibinfo{journal}{\bibinfo{title}{{The Hodge laplacian on manifolds
  with boundary}}}.
\newblock {\emph{\JournalTitle{S\'{e}minaire de th\'{e}orie spectrale et
  g\'{e}om\'{e}trie}}} \textbf{\bibinfo{volume}{21}}, \bibinfo{pages}{125--146}
  (\bibinfo{year}{2002}).

\bibitem{Burby_loops_2020}
\bibinfo{author}{Burby, J.~W.}
\newblock \bibinfo{journal}{\bibinfo{title}{{Guiding center dynamics as motion
  on a formal slow manifold in loop space}}}.
\newblock {\emph{\JournalTitle{J. Math. Phys.}}} \textbf{\bibinfo{volume}{61}},
  \bibinfo{pages}{012703}, \doiprefix\url{https://doi.org/10.1063/1.5119801}
  (\bibinfo{year}{2020}).

\bibitem{Weinstein_1983}
\bibinfo{author}{Weinstein, A.}
\newblock \bibinfo{journal}{\bibinfo{title}{{The local structure of Poisson
  manifolds}}}.
\newblock {\emph{\JournalTitle{J. Differential Geom.}}}
  \textbf{\bibinfo{volume}{18}}, \bibinfo{pages}{523--557},
  \doiprefix\url{10.4310/jdg/1214437787} (\bibinfo{year}{1983}).

\bibitem{Messenger_wSINDy_2021}
\bibinfo{author}{Messenger, D.~A.} \& \bibinfo{author}{Bortz, D.~M.}
\newblock \bibinfo{journal}{\bibinfo{title}{{Weak SINDy for partial
  differential equations}}}.
\newblock {\emph{\JournalTitle{J. Comp. Phys.}}}
  \textbf{\bibinfo{volume}{443}}, \bibinfo{pages}{110525},
  \doiprefix\url{https://doi.org/10.1016/j.jcp.2021.110525}
  (\bibinfo{year}{2021}).

\bibitem{Bishnoi_2023}
\bibinfo{author}{Bishnoi, S.}, \bibinfo{author}{Bhattoo, R.},
  \bibinfo{author}{Ranu, J.~S.} \& \bibinfo{author}{Krishnan, N. M.~A.}
\newblock \bibinfo{journal}{\bibinfo{title}{{Discovering Symbolic Laws Directly
  from Trajectories with Hamiltonian Graph Neural Networks}}}.
\newblock {\emph{\JournalTitle{arXiv preprint}}} \bibinfo{pages}{2307.05299},
  \doiprefix\url{https://doi.org/10.48550/arXiv.2307.05299}
  (\bibinfo{year}{2023}).

\bibitem{Holm_2017}
\bibinfo{author}{Holm, D.~D.} \& \bibinfo{author}{Jacobs, H.~O.}
\newblock \bibinfo{journal}{\bibinfo{title}{{Multipole Vortex Blobs (MVB):
  Symplectic Geometry and Dynamics}}}.
\newblock {\emph{\JournalTitle{J. Nonlin. Sci}}} \textbf{\bibinfo{volume}{27}},
  \bibinfo{pages}{973--1006},
  \doiprefix\url{https://doi.org/10.1007/s00332-017-9367-4}
  (\bibinfo{year}{2017}).

\bibitem{Burby_moments_2023}
\bibinfo{author}{Burby, J.~W.}
\newblock \bibinfo{journal}{\bibinfo{title}{{Variable-moment fluid closures
  with Hamiltonian structure}}}.
\newblock {\emph{\JournalTitle{arXiv preprint}}} \bibinfo{pages}{2308.02733},
  \doiprefix\url{https://doi.org/10.48550/arXiv.2308.02733}
  (\bibinfo{year}{2023}).

\end{thebibliography}



\section*{Acknowledgements}

Research presented in this article was supported by the U.S. Department of Energy (DOE), the Office of Science and the
Office of Advanced Scientific Computing Research (ASCR). Specifically, we acknowledge funding support from ASCR for
DOE-FOA-2493 “Data-intensive scientific machine learning and analysis”. 








\end{document}